\newtheorem{theorem}{Theorem}
\newtheorem{lemma}[theorem]{Lemma}
\newtheorem{corollary}{Corollary}[theorem]
\title{Keep-Alive Caching for the Hawkes process}
\author[]{Sushirdeep Narayana}
\author[]{Ian A. Kash}
\author[]{\\ snaray25@uic.edu}
\author[]{iankash@uic.edu}
\affil[]{%
    Department of Computer Science \\
    University of Illinois at Chicago \\
    Chicago, Illinois, USA
}
\begin{document}
\maketitle

\begin{abstract}
We study the design of caching policies in applications such as serverless computing where there is not a fixed size cache to be filled, but rather there is a cost associated with the time an item stays in the cache.  We present a model for such caching policies which captures the trade-off between this cost and the cost of cache misses. We characterize optimal caching policies in general and apply this characterization by deriving a closed form for Hawkes processes. Since optimal policies for Hawkes processes depend on the history of arrivals, we also develop history-independent policies which achieve near-optimal average performance. We evaluate the performances of the optimal policy and approximate polices using simulations and a data trace of Azure Functions, Microsoft's FaaS (Function as a Service) platform for serverless computing.
\end{abstract}

\section{Introduction}\label{sec:intro}

Datacenters provide a variety of caching services in the interest of reducing latency.  While traditional caches have a fixed size determined by the underlying hardware, datacenter resources are fungible and could be used for a variety of purposes.  For example, in serverless computing (also known as Function-as-a-Service) the cloud provider handles all the provisioning and configuration of resources for each user, while the user simply pays for the amount of time of the function execution. The cloud provider could, in principle, keep a container in memory for every user to ensure that function calls can be executed immediately. However, this would be excessively expensive in terms of resources allocated but unused. Instead, user application code which has not been used recently may be kept in persistent storage~\citep{shahrad2020serverless}. Another example involves Time-to-Live (TTL) caches where the cache controller has a timer-based parameter for when to evict objects from the cache. Work on TTL-based caches for Content Delivery Networks (CDNs) has observed that hit rate guarantees can be provided by controlling the cache space used by a customer and suggested pricing based on this space consumption~\citep{basu2018adaptive}.

In contrast to traditional caches, in keep-alive caching the decision is not about \em{what} object to evict from the cache when space is needed.  Rather the relevant question is \em{when} is it worth keeping the object in the cache.  Such a decision trades off the opportunity cost of not putting those resources to other uses (which may be caching other objects or some entirely different purpose) against the cost of a cache miss. While prior work has examined the trade-off between cache misses and the overall cache size~\citep{basu2018adaptive}, we focus on precise answers regarding optimal and approximately optimal caching decisions for a single object.  

We model this problem as trading off between the expected {\em time} an object is kept in the cache before it is next accessed and the probability of a {\em cache miss}.  Since a given object has a fixed size, this lets us precisely quantify the two relevant costs.  We characterize the optimal cache policy and show how this characterization yields simple policies for objects whose arrivals have a monotone hazard rate.  This includes Poisson and Hawkes processes~\citep{laub2021elements}.  

One downside of the optimal policy for Hawkes processes is that it depends on the history of arrivals for the object. So, it needs to be recalculated after every arrival.  A natural alternative is a simple TTL-style policy which keeps the object in the cache for a fixed amount of time after each arrival before evicting it. Such policies have been used in practice in serverless systems such as AWS Lambda and Azure Functions~\citep{shahrad2020serverless}.  The classic ski rental problem shows that, if the TTL is optimized solely based on costs, the resulting policy is a worst-case 2-approximation. We derive an approach to optimize the TTL based on the parameters of the Hawkes process (but still independent of the history of arrivals).

We simulate the average performance of all three policies (optimal, optimized-TTL, and fixed-TTL) on arrival requests that follow a Hawkes process. The simulations illustrate how an optimized-TTL in general and our specific optimization procedure in particular yield near-optimal performance.

We also evaluate these policies on Azure traces released by \citet{shahrad2020serverless}.  Here, Hawkes processes naturally capture applications such as web servers where a recent function invocation makes it more likely for additional invocations to occur (perhaps because the same user makes another request).  We show that applying the optimal policy to the 25\% of applications best fit by a  Hawkes process yields meaningful improvements over the fixed-TTL policy at the scale of a datacenter.  Furthermore, our optimized-TTL approach again yields near-optimal performance.  

\section{Related Work} \label{sec:related-work}

Our modeling decisions draw motivation from serverless computing. Cloud providers that use Function as a Service (FaaS) serverless models, such as AWS Lambda, Google Cloud Functions, IBM Cloud Functions, and Azure Functions, handle all the system administration and resource allocations for customers. As \citet{jonas2019cloud} highlight, one of the advantages of serverless computing is that the users have to pay solely for the usage of resources of their applications. The customers do not pay for maintaining and starting up the resources when their application is not running. On the other hand, they point out the challenges for customers in terms of the unpredictable latency of cold starts \footnote{Cold start refers to a cache miss in the context of serverless computing}. \citet{wang2018peeking} measure the cold start latency of popular FaaS serverless platforms. They discuss the historical usage of fixed caching policies by these platforms and quantify the behavior assumed by our model: cloud providers regularly shutting down instances providing FaaS services to reallocate those resources to other uses. \citet{lin2020serverless} discuss the desirability of approaches, such as ours, that could allow customers to agree to pay a higher price in exchange for personalized warm-start performance guarantees.  

Closest to our work, \citet{shahrad2020serverless} propose a keep-alive cache policy based on an approach to predict future arrivals. They show a significant reduction in memory use when compared with the standard fixed keep-alive policies. In contrast, we take the predictions as given and optimize decisions based on them. Other works that have considered approaches to mitigating cold starts include having multiple tiers of hardware~\citep{roy2022icebreaker}, pre-warming just the networking components~\citep{mohan2019agile}, caching common Python libraries~\citep{oakes2018sock}, and overbooking~\citep{kesidis2019overbooking}. \citet{fuerst2021faascache} build a system which allows the size of a FaaS cache to be scaled dynamically based on arrival rates, but they do not provide any theoretical analysis. ~\citep{romero2021faa,mvondo2021ofc} have looked at caching of other aspects pertaining to function execution such as data access.

Most work on cache algorithms works to optimize their decisions about which items to evict when the cache is full.  Closer to our work is work on TTL caches which evict objects after a fixed amount of time.  These are studied both for applications in settings like Content Delivery Networks (CDNs) as well as their use as a more tractable way of approximately analyzing the performance of traditional caches introduced by \citet{che2002hierarchical}. \citet{berger2014exact} analyze TTL cache networks. They consider the inter-request arrival times of objects and TTL values to be two independent renewal processes. They  study three types of TTL cache policies based on the TTL resets and eviction times (we show what they term $\mathcal{R}$ policies are optimal in our setting). \citet{basu2018adaptive} design two TTL based caching algorithms for CDNs like Akamai, but focus on working within a fixed target cache size. \citet{ferragut2016optimizing} study optimal TTL cache policies. Like us they examine the consequences of the monotonicity of the Hazard rate of the inter-arrival distributions. However, unlike us they focus on trading off hit rate and overall cache size.  They formulate the TTL caching problem as a non-linear optimization problem with non-linear constraint, and prove that the convexity of the optimization problem is related to monotonicity  of the Hazard rate. They provide explicit solutions when the inter-arrival distribution follows the Zipf's law. They evaluate the performance of their optimal policies by comparing their policy hit rate against the LRU (Least Recently Used) policy hit rate. \citet{ali2011survey, alamash2004overview} give a broader introduction to and survey of web page caches.

Closer to our work, \citet{dehghan2019utility} share our notion that users may have utilities which depend on the hit rate and they include the possibility of increasing the cache size at a cost. They assume the utility functions for each file to be concave and the arrival requests to follow a Poisson point process, while our model is more general. \citet{babaie2019cache} extend this to a cache hierarchy network. \citet{panigrahy2017and} allow heterogeneity of user preferences for hit rates, but focus on network of capacity-constrained caches with requests modeled as simple Poisson arrivals.

There is substantial work in the AI literature on other optimization problems that arise in the context of cloud computing such as pricing~\cite{blocq2014shared,friedman2015dynamic,babaioff2017era,kash2019simple,dierks2021competitive}, reservation scheduling~\cite{azar2015truthful,wang2015selling}, information elicitation~\cite{ceppi2015personalized,dierks2019cluster}, and fair division of resources~\cite{parkes2015beyond,kash2014no,friedman2014strategyproof,narayana2021fair}.

\section{Model} \label{sec:model}

We consider a cache system where there is a cost for an object to stay in the cache. There is also a cost for a cache miss. For concreteness we describe our model using terms from one natural application (serverless computing), but it is also relevant for other applications such as CDNs~\cite{ferragut2016optimizing}. First, we describe a cache policy in this setting. Next, we detail the parameters of the cache associated with the cloud provider. Then, we express the cost of a cache policy. Since the time of arrival of future requests is unknown, we assume that the cloud provider has access to the distribution of the arrival requests, as they could be estimated by the past arrival requests. 

\textbf{Cache Policy}
We assume the cache has infinite capacity because the provider can always dedicate more resources to its serverless offering, which differentiates our model from those driven by capacity. Thus, for us a cache policy is not about which object should be evicted when space is needed but rather how long (or more generally when) we should keep it in the cache.

Let $\mathcal{H}_{m-1} = \{t_{1}, \; t_{2}, \; \cdots, \; t_{m-1}\}$ denote the history of $m-1$ previous requests for the application.  Here, $t_{1}, \; t_{2}, \cdots \text{and}, \; t_{m-1}$ denote the time of 1st, 2nd , $\cdots$ and, (m-1)-st request respectively. Let $x_{m} = t_{m} - t_{m-1}$ denote the \textit{m}-th inter-arrival time. Our analysis of policies is based on these inter-arrival times.

A cache policy $\pi$ of an application is a sequence of time windows during which a possible requested application is moved in and out of the cache. A keep-alive window is the time interval during which the application is kept in the cache. The policy is reset after each arrival request for an application. More formally, a policy $\pi(x|\mathcal{H}_{m-1})$ can be expressed as an indicator function of a sequence of keep-alive windows for the $m$-th inter-arrival as: 
\begin{equation*}
\pi(x|\mathcal{H}_{m-1}) =  \begin{cases}1 \; , \quad x \in [L_{0} , \; L_{1}] \; \bigcup \cdots \; [L_{2k-2}, L_{2k-1}]\\ 0 \; , \quad \text{otherwise} \end{cases}  
\end{equation*}

\noindent Here, $L_{2i}$ denotes amount of time after $t_{m-1}$ (the time of the most recent request) where the $i$-th keep-alive window starts, while $L_{2i+1}$ denotes the point where the $i$-th keep-alive window end for some $k$ and all \; $i \in \{0,\; 1,\;2, \cdots,k-1\}$.  

A policy $\pi(\cdot)$ consisting of a single keep-alive window across the $m$-th inter-arrival can be represented by the length of pre-warming window and the length of keep-alive window. 
That is, if
\begin{equation*}
\pi(x) =  \begin{cases}1 \; , \quad \text{for} \quad x \; \in [\tau_{\text{pw}}, \; \tau_{\text{pw}} + \tau_{\text{ka}}] \\ 0 \; , \quad \text{otherwise} \end{cases}   
\end{equation*}

\noindent then, the policy can be summarized by the parameters $\tau_{\text{pw}}$, and $\tau_{\text{ka}}$. The parameter $\tau_{\text{pw}}$ refers to the length of pre-warming window which is the time interval during which the policy waits before bringing in the possibly requested application into the cache. The parameter $\tau_{\text{ka}}$ denotes the length of keep-alive window which is the time interval when the application is kept in the cache. The pre-warming window is especially useful when the requests for an application occur at regular intervals, such as requests governed by a timer function.   

\textbf{Cost of a Cache Policy}
In the context of serverless computing, an application encounters a \textit{warm} start (cache hit) if its invocation (arrival request) occurs when the keep-alive window is active. An application has a \textit{cold} start (cache miss) when the keep-alive window is not active during its invocation. Thus, we can describe the costs of a policy using:
\begin{itemize}
    \item $c_{cs}$ denotes the cost associated with a cold start.  This is primarily the latency cost experienced by the user when there is a cold start.  It may also include the cost for the cloud provider  to load the application image (requested object) into the cache.
    \item $c_{p}$ denotes the cost per unit time for the cloud provider for keeping the application image (requested object) active in the cache.
\end{itemize}
    
\noindent  The cost of a policy with a single keep-alive window 
$(\tau_{\text{pw}}, \tau_{\text{ka}})$ 
when the inter-arrival time is $x_m$ is given by 
\begin{align*} 
cost(x_{m}, (\tau_{\text{pw}}, \tau_{\text{ka}})) = \; \begin{cases}
&c_{cs}, \mbox{ if }  x_m < \tau_{\text{pw}} \\
&c_{p} \cdot( x_m - \tau_{\text{pw}} ), \mbox{ if }   \tau_{\text{pw}} \; \leq x_m \; \leq \; \tau_{\text{pw}} + \tau_{\text{ka}}  \\
&c_{p}  \tau_{\text{ka}} + c_{cs}, \mbox{ if }  x_m > \; \tau_{\text{pw}} + \tau_{\text{ka}} 
\end{cases}
\end{align*}
The three cases for the cost of a policy correspond to that of a cold start if the invocation occurs during pre-warming, a warm start if the invocation occurs when the keep-alive window is active, and a cold start if the invocation is after the keep-alive window being active, respectively. In the more general case of multiple keep-alive windows the cost $c_p$ must be paid for all prior windows as well (see Equation 1 in the proof of Lemma \ref{expect_cost} for the full formula). This formula implicitly assumes that the time and costs associated with a decision to load the application into the cache (e.g. in the case of multiple windows where it may be moved in and out repeatedly) are zero, and so is in that sense a lower bound on the ``true'' cost. However, we show that for Hawkes processes in particular a single window which starts immediately ($\tau_{\text{pw}} = 0$) is optimal. Such a policy never has to pay such cost except of course during cold starts, but those are already accounted for by $c_{cs}$.  

To compute the cost of a cache policy, $x_{m}\;$ must be known. However, the cloud provider  does not have access to such information. Instead, they can estimate the distribution of $x_m$ from past arrival requests $\mathcal{H}_{m-1}$. We model the application invocations as a point process. Let $f(x| \mathcal{H}_{m-1})$ and $F(x| \mathcal{H}_{m-1})$ denote the conditional probability distribution and the conditional cumulative distribution of an invocation at $x$ units after the most recent invocation given the history $\mathcal{H}_{m-1}$ of previous invocations, respectively. We assume that both are continuous. Let the hazard rate of the based on the inter-arrival be expressed as $\lambda(x|\mathcal{H}_{m-1}) = \displaystyle \frac{f(x|\mathcal{H}_{m-1})}{1 - F(x|\mathcal{H}_{m-1})}$. We use these probability distributions to derive an expression for the expected cost of a cache policy.

\section{Characterization of Optimal Policies} \label{sec:opt-policies}

We start this section by deriving the expected cost of a caching policy over an inter-arrival. Then in Theorem \ref{opt_policy}, we characterize the optimal policy for application invocations in terms of the behavior of the Hazard rate. We apply this to derive optimal policies when the arrival of application invocations follow a Poisson process or a Hawkes process, which are specific cases of arrival requests that have a constant Hazard rate and a monotone decreasing Hazard rate respectively.     

Proofs omitted from this and subsequent sections of the paper can be found in the Appendix.

\begin{lemma} \label{expect_cost}
The expected cost of a cache policy over an inter-arrival is

\begin{equation*}
\mathbb{E}[cost(\pi(\cdot |\mathcal{H}_{m-1}))] = \; c_{cs} + \displaystyle \int_{0}^{\infty} \pi(x|\mathcal{H}_{m-1}) \cdot g(x|\mathcal{H}_{m-1})\; dx ,    
\end{equation*} 
where the instantaneous cost at \textit{x} units after the most recent arrival at $t_{m-1}$ is
\begin{equation*}
g(x|\mathcal{H}_{m-1}) = c_{p} \cdot \big(1 - F(x|\mathcal{H}_{m-1})\big) - c_{cs} \cdot f(x| \mathcal{H}_{m-1}).
\end{equation*} 
\end{lemma}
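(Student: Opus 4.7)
The plan is to start from an explicit per-inter-arrival cost expression for a general policy $\pi$ (possibly with multiple keep-alive windows), take its expectation against the conditional inter-arrival density $f(\cdot|\mathcal{H}_{m-1})$, and simplify using Fubini's theorem to recover the claimed form.

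First, I would generalize the three-case single-window cost displayed in the model to an arbitrary $\pi$. For a realized inter-arrival time $x_m$, two kinds of cost accrue: the provider pays $c_p$ per unit time for every moment strictly before the arrival when the application sits in the cache, and the user incurs $c_{cs}$ if and only if the arrival lands outside the active windows. Thus
\begin{equation*}
cost(x_m, \pi) \;=\; c_p \int_0^{x_m} \pi(x|\mathcal{H}_{m-1})\, dx \;+\; c_{cs}\,\bigl(1 - \pi(x_m|\mathcal{H}_{m-1})\bigr).
\end{equation*}
This reduces to the three-case formula in the single-window case (paying $c_{cs}$ in the pre-warm region where $\pi=0$, paying $c_p(x_m-\tau_{\text{pw}})$ inside the window, and paying $c_p\tau_{\text{ka}}+c_{cs}$ past the window).

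Next, I would take the expectation with respect to $x_m$ by integrating against $f(x_m|\mathcal{H}_{m-1})$ on $[0,\infty)$. The cold-start term directly gives $c_{cs} - c_{cs}\int_0^\infty \pi(x|\mathcal{H}_{m-1})\, f(x|\mathcal{H}_{m-1})\, dx$. For the keep-alive term I would apply Fubini to swap the order of integration in $\int_0^\infty f(x_m|\mathcal{H}_{m-1})\,\bigl(\int_0^{x_m} \pi(x|\mathcal{H}_{m-1})\, dx\bigr)\, dx_m$; the inner tail integral $\int_x^\infty f(x_m|\mathcal{H}_{m-1})\, dx_m$ collapses to $1-F(x|\mathcal{H}_{m-1})$, producing $c_p\int_0^\infty \pi(x|\mathcal{H}_{m-1})\bigl(1-F(x|\mathcal{H}_{m-1})\bigr) dx$. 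Combining the two contributions and factoring out $\pi(x|\mathcal{H}_{m-1})$ yields exactly the bracketed expression $g(x|\mathcal{H}_{m-1})=c_p(1-F)-c_{cs} f$, finishing the proof.

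The main obstacle I anticipate is a careful bookkeeping step, not a deep technical difficulty: one must write the keep-alive cost as an integral of $\pi$ truncated at $x_m$ (rather than over all of $[0,\infty)$), so that the Fubini swap correctly produces a survival function rather than an unconditional probability. Justifying Fubini itself is immediate since $\pi\in\{0,1\}$ and $f\ge 0$. Continuity of $F$ makes the measure of the window endpoints zero, so the distinction between open and closed windows is irrelevant and the single-window formula aligns cleanly with the general one.
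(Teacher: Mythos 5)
Your proof is correct, and it takes a genuinely different and cleaner route than the paper's. The paper's proof writes the expected cost as an explicit sum of integrals over each keep-alive window and each gap (tracking the accumulated cache time $Z(\mathcal{L},j)$ through the $j$-th window), applies integration by parts inside each window, and then carries out a lengthy telescoping cancellation before invoking the fundamental theorem of calculus. You instead observe that the realized cost for any policy can be written in one line as
\begin{equation*}
cost(x_m,\pi) = c_p\int_0^{x_m}\pi(x|\mathcal{H}_{m-1})\,dx + c_{cs}\bigl(1-\pi(x_m|\mathcal{H}_{m-1})\bigr),
\end{equation*}
which indeed reproduces both the three-case single-window formula and the paper's Equation (1), and then a single application of Fubini converts the provider-cost term into $c_p\int_0^\infty \pi(x|\mathcal{H}_{m-1})\bigl(1-F(x|\mathcal{H}_{m-1})\bigr)dx$ while the cold-start term contributes $c_{cs}-c_{cs}\int_0^\infty \pi(x|\mathcal{H}_{m-1})f(x|\mathcal{H}_{m-1})\,dx$. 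What your approach buys is the elimination of essentially all the window-indexed bookkeeping: the integration by parts and term cancellation in the paper are doing, window by window, exactly the work that the Fubini swap does globally (it is the standard ``expected integral equals integral of the survival function'' identity). The only hypotheses you rely on beyond the paper's are the implicit ones the paper also uses, namely that an arrival occurs almost surely so $\int_x^\infty f = 1-F(x)$ and $F(\infty)=1$, and that continuity of $F$ makes the window endpoints measure-zero; your justification of Fubini via nonnegativity of the integrand is adequate.
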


Using the characterization of the cost of a policy from Lemma~\ref{expect_cost}, Theorem \ref{opt_policy} observes that the sign of $g$ (which determines the optimal policy) is entirely determined by the hazard rate and costs.

\begin{theorem} \label{opt_policy}
The points $L_i$ of the sequence of keep-alive windows over an inter-arrival for the optimal policy $\pi_{\text{opt}}(\cdot | \mathcal{H}_{m-1})$ are at 0, $\infty$, or solutions to the equation $c_{p} - (c_{cs} \cdot \lambda(x | \mathcal{H}_{m-1})) = 0 \;$ where the sign changes.
\end{theorem}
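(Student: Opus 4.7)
The plan is to exploit the fact that Lemma~\ref{expect_cost} reduces the problem to pointwise minimization of an integrand. Since the policy indicator $\pi(x|\mathcal{H}_{m-1})$ takes only the values $0$ and $1$, and the additive constant $c_{cs}$ outside the integral does not depend on $\pi$, the expected cost is minimized by setting $\pi(x|\mathcal{H}_{m-1})=1$ exactly on the set where $g(x|\mathcal{H}_{m-1})<0$ and $\pi(x|\mathcal{H}_{m-1})=0$ on the set where $g(x|\mathcal{H}_{m-1})>0$ (with either choice permissible on the measure-zero set where $g=0$). The boundary points $L_i$ of the keep-alive windows therefore coincide with the sign changes of $g$, together possibly with the trivial endpoints $0$ and $\infty$.

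The second step is to translate sign changes of $g$ into sign changes of $c_p - c_{cs}\lambda$. I would factor the instantaneous cost as
\begin{equation*}
g(x|\mathcal{H}_{m-1}) = \bigl(1 - F(x|\mathcal{H}_{m-1})\bigr)\Bigl(c_p - c_{cs}\,\lambda(x|\mathcal{H}_{m-1})\Bigr),
\end{equation*}
which follows directly from the definition $\lambda = f/(1-F)$. Since $1 - F(x|\mathcal{H}_{m-1}) \geq 0$ for all $x$ in the support of the inter-arrival distribution, the sign of $g$ is determined by the sign of $c_p - c_{cs}\,\lambda(x|\mathcal{H}_{m-1})$, so sign changes of $g$ occur precisely at solutions of $c_p - c_{cs}\,\lambda(x|\mathcal{H}_{m-1}) = 0$ where the expression actually changes sign (as opposed to merely touching zero).

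Finally I would account for the two extreme boundaries. If the optimal indicator is $1$ at $x=0^{+}$, then $L_0 = 0$ serves as a left endpoint without being a zero of $c_p - c_{cs}\lambda$; similarly, if the optimal indicator stays at $1$ for all sufficiently large $x$, the final keep-alive window extends to $L_{2k-1} = \infty$. Combining these endpoints with the sign-change points identified above yields exactly the characterization in the statement.

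The main obstacle is purely a bookkeeping one: ensuring that points where $c_p - c_{cs}\lambda$ vanishes without changing sign are excluded from the optimal boundary set, and handling the edge cases at $0$ and $\infty$ cleanly. Since $\pi$ is assumed to be a finite union of intervals, the set of sign-change points is automatically discrete, so no additional regularity argument is required beyond the continuity of $f$ and $F$ already assumed in the model.
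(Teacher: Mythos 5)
Your proposal is correct and follows essentially the same route as the paper: both start from Lemma~\ref{expect_cost}, factor $g(x|\mathcal{H}_{m-1}) = (1 - F(x|\mathcal{H}_{m-1}))(c_p - c_{cs}\lambda(x|\mathcal{H}_{m-1}))$, and conclude that window boundaries sit at sign changes of $c_p - c_{cs}\lambda$ together with the endpoints $0$ and $\infty$. The only difference is that you justify the optimality of ``cache exactly where $g<0$'' by direct pointwise minimization of the integrand, whereas the paper phrases it as a first-order condition plus a sign-change argument; your version is, if anything, the cleaner justification of the same step.
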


We now examine several special cases of Theorem~\ref{opt_policy} with particularly natural structure.  Our first, Corollary \ref{optimal_decreasing}, describes the optimal policy when the hazard rate of the arrival requests are (weakly) decreasing. This case includes Poisson and Hawkes processes and is the main case we evaluate in our simulations.

\begin{corollary} \label{optimal_decreasing}
If the hazard rate is weakly decreasing, the optimal policy $\pi_{\text{opt}}(x|\mathcal{H}_{m-1})$  is a single keep-alive window starting at $\tau_{pw}=0$, and is given by \\ $\pi_{\text{opt}}(\cdot |\mathcal{H}_{m-1}) =  \begin{cases}1 \; , \quad \forall x \in [0 , \; \tau_{\text{opt}, \mathcal{H}_{m-1}}]\\ 0 \; , \quad \text{otherwise}\end{cases},$ where
\begin{enumerate}
    \item $\tau_{\text{opt}, \mathcal{H}_{m-1}} = \infty$, i.e., to have the keep-alive window always be active when  $\forall x, \; \displaystyle \frac{c_{p}}{c_{cs}} < \; \lambda(x | \mathcal{H}_{m-1}), \; $
    \item $\tau_{\text{opt}, \mathcal{H}_{m-1}} = 0$, i.e.,  not cache and always have a cold start when $\displaystyle \frac{c_{p}}{c_{cs}} >\; \lambda(x = 0 | \mathcal{H}_{m-1})\;$
    \item a keep-alive window of length $\tau_{\text{opt}, \mathcal{H}_{m-1}}$ given by the solution to the equation \\ $ \displaystyle  \frac{c_{p}}{c_{cs}} = \frac{f(x = \tau_{\text{opt}, \mathcal{H}_{m-1}} | \mathcal{H}_{m-1})}{1 - F(x = \tau_{\text{opt}, \mathcal{H}_{m-1}} | \mathcal{H}_{m-1})}$, otherwise. 
\end{enumerate}
\end{corollary}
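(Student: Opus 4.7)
The plan is to deduce the corollary directly from Theorem~\ref{opt_policy} by exploiting monotonicity of $\lambda(\cdot\mid \mathcal{H}_{m-1})$. Recall from Lemma~\ref{expect_cost} that the instantaneous cost can be rewritten as
\begin{equation*}
g(x \mid \mathcal{H}_{m-1}) = \bigl(1 - F(x\mid\mathcal{H}_{m-1})\bigr)\bigl(c_p - c_{cs}\,\lambda(x\mid\mathcal{H}_{m-1})\bigr),
\end{equation*}
so since $1-F$ is nonnegative, the sign of $g$ agrees with the sign of $c_p - c_{cs}\lambda$. The key observation is that, when $\lambda$ is weakly decreasing in $x$, the quantity $c_p - c_{cs}\lambda(x\mid\mathcal{H}_{m-1})$ is weakly increasing in $x$, hence changes sign at most once and only from negative to positive. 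By Theorem~\ref{opt_policy}, the endpoints $L_i$ of the optimal keep-alive windows must lie at such sign changes (or at the boundary values $0$ and $\infty$), so at most one interior endpoint exists. This forces the optimal policy to be a single keep-alive window whose left endpoint is $0$ (since $g$ is non-positive on any initial interval where we are still below the sign change, it is strictly cheaper to include that initial interval than to shift the window rightward).

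Next I would split into the three cases distinguished in the statement. In case (1), the hypothesis $c_p/c_{cs} < \lambda(x\mid\mathcal{H}_{m-1})$ for all $x$ makes $g$ pointwise non-positive, so $\mathbb{E}[cost]$ from Lemma~\ref{expect_cost} is minimized by setting $\pi \equiv 1$, giving $\tau_{\text{opt},\mathcal{H}_{m-1}} = \infty$. In case (2), the hypothesis $c_p/c_{cs} > \lambda(0 \mid \mathcal{H}_{m-1})$ combined with weak monotonicity gives $c_p/c_{cs} > \lambda(x\mid\mathcal{H}_{m-1})$ for every $x \geq 0$, so $g$ is strictly positive; the optimal policy then sets $\pi \equiv 0$, i.e., $\tau_{\text{opt},\mathcal{H}_{m-1}} = 0$.

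In the remaining case, $c_p - c_{cs}\lambda$ starts non-positive at $x=0$ and is eventually non-negative, so by continuity of $f$ and $F$ there exists a crossing point $\tau_{\text{opt},\mathcal{H}_{m-1}}$ at which $c_p = c_{cs}\lambda(\tau_{\text{opt},\mathcal{H}_{m-1}}\mid\mathcal{H}_{m-1})$; unpacking the definition of $\lambda$ gives precisely the equation stated in (3). To confirm optimality of the window $[0,\tau_{\text{opt},\mathcal{H}_{m-1}}]$, observe that $g \leq 0$ on this interval and $g \geq 0$ outside it, so including exactly this set minimizes the integral $\int \pi \cdot g\, dx$ in Lemma~\ref{expect_cost}.

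The only delicate points will be (a) justifying that the optimal left endpoint is $0$ rather than any later point where $g$ is still negative, which follows because omitting any subinterval on which $g \leq 0$ can only weakly increase the cost, and (b) handling the weakly-decreasing (as opposed to strictly decreasing) case, where the sign-change point may be non-unique; here any choice of $\tau_{\text{opt},\mathcal{H}_{m-1}}$ in the level set of $\lambda$ at $c_p/c_{cs}$ yields the same expected cost, so the statement remains valid as written. These are minor technicalities rather than substantive obstacles, and the argument otherwise reduces to bookkeeping on the sign of $g$.
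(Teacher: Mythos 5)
Your proposal is correct and follows essentially the same route as the paper's proof: factoring $g(x\mid\mathcal{H}_{m-1}) = (1-F)(c_p - c_{cs}\lambda)$, observing that weak monotonicity of $\lambda$ forces at most one sign change of $g$ (from negative to positive), and then splitting into the three cases by the sign of $g$ at $x=0$ and in the limit. The extra remarks on the left endpoint and on non-uniqueness of the crossing in the weakly decreasing case are sound and only make explicit what the paper leaves implicit.
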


Next, Corollary \ref{optimal_increasing}, states the optimal policy when the hazard rate is instead (weakly) increasing. 

\begin{corollary} \label{optimal_increasing}
If the hazard rate is weakly increasing, the optimal policy $\pi_{\text{opt}}(\cdot|\mathcal{H}_{m-1})$  is a single keep-alive window with $\tau_{ka} = \infty$ and a pre-warming window, and is given by \\ $\pi_{\text{opt}}(x|\mathcal{H}_{m-1}) =  \begin{cases}1 , \; \tau_{\text{pw}, \mathcal{H}_{m-1}} \leq x \\ 0, \; \text{otherwise}\end{cases},$ where
\begin{enumerate}
    \item $\tau_{\text{pw}, \mathcal{H}_{m-1}} = 0$, i.e., to have the keep-alive window always be active when $\forall x, \quad \displaystyle \frac{c_{p}}{c_{cs}} < \; \lambda(x | \mathcal{H}_{m-1}), \; $
    \item $\tau_{\text{pw}, \mathcal{H}_{m-1}} = \infty$, i.e., to always have a cold start when $\forall x, \; \displaystyle \frac{c_{p}}{c_{cs}} > \; \lambda(x | \mathcal{H}_{m-1}) \;$. 
    \item $\tau_{\text{pw}, \mathcal{H}_{m-1}}$ satisfies the equation\\ $\; \displaystyle \frac{c_{p}}{c_{cs}} = \displaystyle \frac{f(x = \tau_{\text{pw}, \mathcal{H}_{m-1}} |  \mathcal{H}_{m-1})}{1 - F(x = \tau_{\text{pw}, \mathcal{H}_{m-1}} |  \mathcal{H}_{m-1})}$, i.e., an infinite keep-alive window after a pre-warming window of length $\tau_{\text{pw}, \mathcal{H}_{m-1}}$ when $c_{p} - c_{cs} \lambda(x = 0 | \mathcal{H}_{m-1}) \; > 0$ and changes sign. 
\end{enumerate}
\end{corollary}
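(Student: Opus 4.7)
The plan is to derive this directly from Theorem \ref{opt_policy} by exploiting monotonicity. First, I would note from Lemma \ref{expect_cost} that the instantaneous cost factors as $g(x|\mathcal{H}_{m-1}) = \bigl(1 - F(x|\mathcal{H}_{m-1})\bigr)\bigl(c_{p} - c_{cs}\lambda(x|\mathcal{H}_{m-1})\bigr)$, so on the support of the inter-arrival distribution the sign of $g$ agrees with the sign of $c_{p} - c_{cs}\lambda(x|\mathcal{H}_{m-1})$. By Theorem \ref{opt_policy}, the optimal policy sets $\pi_{\text{opt}}=1$ precisely on the set where $g<0$, i.e.\ where $c_{p} < c_{cs}\lambda(x|\mathcal{H}_{m-1})$.

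Next, I would use the monotonicity hypothesis: if $\lambda(\cdot | \mathcal{H}_{m-1})$ is weakly increasing in $x$, then $x \mapsto c_{p} - c_{cs}\lambda(x | \mathcal{H}_{m-1})$ is weakly decreasing. It therefore crosses zero at most once, and when it does, the sign changes from positive to negative. Combined with the fact (from Theorem \ref{opt_policy}) that boundary points $L_i$ must occur at $0$, $\infty$, or such sign changes, this forces the optimal policy to take the form of a single keep-alive window $[\tau_{\text{pw}},\infty)$ for some $\tau_{\text{pw}} \in [0,\infty]$. In particular $\tau_{\text{ka}}=\infty$, and the only remaining task is to identify $\tau_{\text{pw}}$.

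I would then split into the three claimed cases by comparing the threshold $c_{p}/c_{cs}$ to the range of $\lambda$. If $c_{p}/c_{cs} < \lambda(x|\mathcal{H}_{m-1})$ for all $x$, then $g<0$ everywhere, so caching is always beneficial and $\tau_{\text{pw}}=0$. If $c_{p}/c_{cs} > \lambda(x|\mathcal{H}_{m-1})$ for all $x$, then $g>0$ everywhere, so caching is never beneficial and $\tau_{\text{pw}}=\infty$. Otherwise, since $f$ and $F$ are assumed continuous (hence so is $\lambda$ where $F<1$), the intermediate value theorem supplies a point $\tau_{\text{pw}}$ at which $c_{p}/c_{cs} = \lambda(\tau_{\text{pw}}|\mathcal{H}_{m-1})$; substituting $\lambda = f/(1-F)$ recovers the displayed equation. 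The sign structure then confirms that $\pi_{\text{opt}}$ is the indicator of $[\tau_{\text{pw}},\infty)$, which also accounts for the condition $c_{p} - c_{cs}\lambda(x=0|\mathcal{H}_{m-1})>0$ together with a sign change named in the statement.

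I do not expect any hard obstacle; the only subtlety is that ``weakly'' increasing allows $\lambda$ to be locally constant, so the zero set of $c_{p} - c_{cs}\lambda$ may be an interval rather than a single point. On such a flat region, however, $g\equiv 0$, so the expected cost is indifferent to the indicator's value there, and we may take $\tau_{\text{pw}}$ to be the left endpoint of this region without loss of optimality. The remaining work is the routine bookkeeping already encapsulated in Theorem \ref{opt_policy}.
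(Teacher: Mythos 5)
Your proposal is correct and follows essentially the same route as the paper's own proof: factor $g(x|\mathcal{H}_{m-1})=(1-F(x|\mathcal{H}_{m-1}))(c_{p}-c_{cs}\lambda(x|\mathcal{H}_{m-1}))$, observe that weak monotonicity of $\lambda$ makes the second factor weakly decreasing so the sign can change at most once (positive to negative), and then split into the three cases. Your explicit handling of the flat region where $g\equiv 0$ is a small refinement the paper leaves implicit, but it does not change the argument.
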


More generally, we can combine these to understand 
the optimal policy when the hazard rate has a single peak: it is first increasing and then decreasing.  In this case using both $\tau_{pw}$ and $\tau_{ka}$ is optimal (apart from degenerate cases).  This is the form used by \citet{shahrad2020serverless}, and so our results characterize the class of applications for which their approach could be optimal if properly tuned as well as determining how to optimally tune the parameters.\footnote{They use a simple rule of pre-warming at the $5^{th}$ percentile and ending the keep-alive at the $99^{th}$ percentile while our approach would optimize those based on the costs and distribution characteristics.} One notable example of this class of application is applications triggered by a timer.  We would expect $\lambda$ to be 0 until the timer is due to elapse, rapidly increase as we approach our estimate of when the timer will trigger, and then eventually decrease if we appear to be wrong in our estimate of when the timer will next trigger.

We can also apply Theorem \ref{opt_policy} to applications where the hazard rate of the arrival requests for application invocation has a single valley. This might be the case if the initial invocation is likely to trigger several others (in the same spirit as a Hawkes process) but then once the application finishes there is a gap before it is invoked again, perhaps due to a timer.  As Figure~\ref{opt_single_peak} shows the optimal keep-alive policy has at-most two keep-alive windows, whose lengths can be computed using Theorem \ref{opt_policy}.  Of course, we can apply this approach to more complex situations resulting in even more windows if the distributional information available supports that (e.g. for a timer that only causes an arrival under certain additional conditions).
\begin{figure}[t!]
    \centering
    \includegraphics[height=1.5in,trim={6cm 3cm 7cm 0cm},clip]{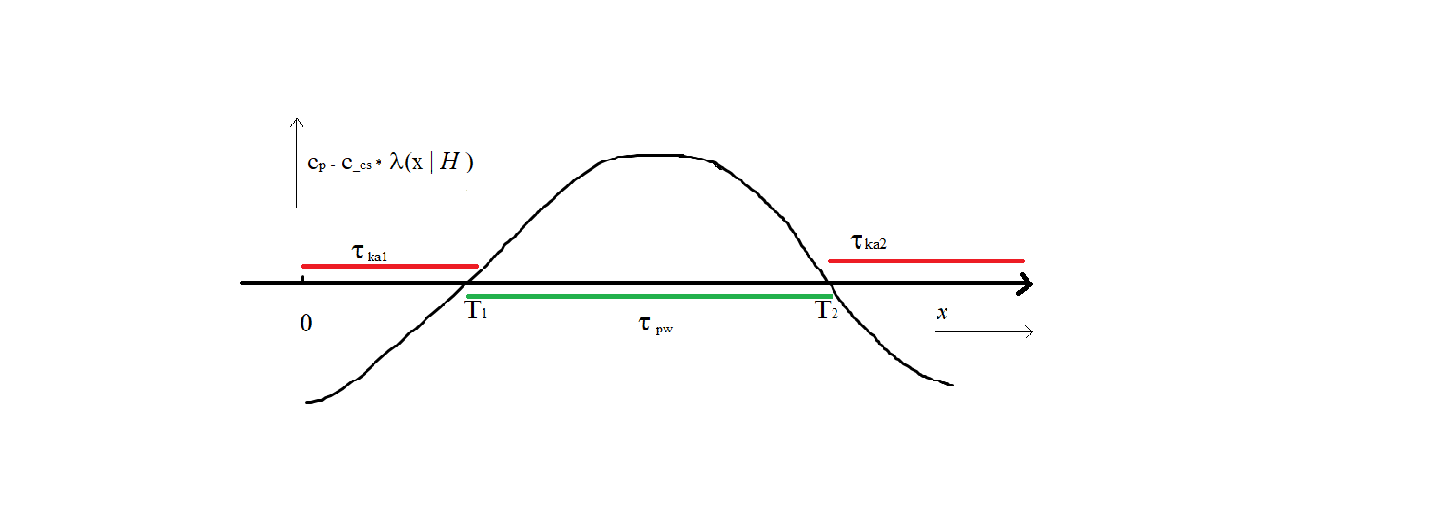}
        \caption{Optimal policy with a single valley hazard rate}
    \label{opt_single_peak}
\end{figure}

\subsection{Poisson Process} \label{sec:poisson}

Our first concrete application of Theorem~\ref{opt_policy}  is when the distribution of the arrival requests follows a Poisson process. For the Poisson process, the relevant quantities are:
\begin{itemize}
    \item $f(t) = \lambda \cdot e^{- \lambda \cdot t}$ for $t \geq 0$, where $\lambda > 0$ is a constant parameter.
    \item $F(t) = \displaystyle  1 - e^{-\lambda \cdot t}$.
    \item $ \displaystyle \lambda(t) =\frac{f(t)}{1 - F(t)} = \lambda$.
\end{itemize}

From Corollary~\ref{optimal_decreasing} we see that, barring degeneracy, the optimal policy for a Poisson process is one of two possibilities since the Hazard rate is constant. The optimal policy  when $\displaystyle \frac{c_{p}}{c_{cs}} \leq \; \lambda$ is to have a keep-alive window active at all times, while when $\displaystyle \frac{c_{p}}{c_{cs}} > \; \lambda$ the optimal policy is to always experience a cold start. The cost of the optimal policy when $\displaystyle \frac{c_{p}}{c_{cs}} > \; \lambda $ is $c_{cs}$. Since the expected inter-arrival time is $\displaystyle \frac{1}{\lambda}$, the expected cost of the optimal policy when $c_{p} - (c_{cs} \cdot \lambda) \leq 0$  is $\displaystyle \frac{c_p}{\lambda}$.

\subsection{Hawkes Process} \label{sec:hawkes_optimal}

In this subsection, we explore the more interesting case of optimal policies when the distribution of the application invocations follow a Hawkes process. The Hawkes process is a point process whose hazard rate, or conditional intensity, is given by 
\begin{equation*}
    \lambda(t| \mathcal{H}) = \frac{f(t| \mathcal{H})}{1 - F(t| \mathcal{H})} = \lambda_{0} + \sum_{t_{j} \in \mathcal{H}} \mu(t - t_{j})
\end{equation*}
for all $t > t_{i-1}$, where $\mathcal{H} = \{t_{1}, t_{2}, \cdots, t_{i-1} \}$ denotes the history of invocations of the function, $\lambda_{0}$ refers to the background intensity, and $\mu$ is the excitation function. We limit our analysis to the exponential excitation function because we need to pick a concrete instantiation to solve for a closed form for optimal window length. The exponential excitation function is expressed as, $\mu(t) = \alpha e^{ - \beta \cdot t}$. The constant $\alpha > 0$ captures the increase in the intensity from an arrival, while the constant $\beta > 0$ is the decay rate of the arrival's influence. The conditional intensity of the self-exciting Hawkes process with an exponential excitation function is thus,
\begin{equation*}
    \lambda(t| \mathcal{H}) = \lambda_{0} + \sum_{t_{j} \in \mathcal{H}} \alpha \cdot e^{- \beta \cdot (t - t_{j})}  \quad, \forall \; t > t_{i-1} .
\end{equation*}

Corollary \ref{optimal_decreasing} characterizes the optimal policy when the distribution of arrival requests follows the Hawkes process to be one of the following policies,
\begin{itemize}
    \item The keep-alive window is to always be active with $\tau_{\text{opt}, \mathcal{H}_{m-1}} = \infty$ when, as in the Poisson case, the background intensity is sufficiently high: $\displaystyle \frac{c_{p}}{c_{cs}} < \; \lambda_{0}$.
    \item Experience a cold start with $\tau_{\text{opt}, \mathcal{H}_{m-1}} = 0$ when $\displaystyle \frac{c_{p}}{c_{cs}} > \; \lambda(x| \mathcal{H}_{m-1}))$, after the most recent request.
    \item The keep-alive window is given by the expression
    \begin{align*}
    \tau_{\text{opt}, \mathcal{H}_{m-1}} = \frac{1}{\beta} \Big( \log \alpha + \log \Big(\sum_{j=1}^{m-1} e^{\beta  (t_{j}-t_{m-1})} \Big) - \log \Big( \frac{c_{p}}{c_{cs}} - \lambda_{0}\Big) \Big)
    \end{align*}
    otherwise. This expression is obtained by substituting the conditional intensity of the Hawkes process in Corollary \ref{optimal_decreasing} and solving for $\tau_{\text{opt}, \mathcal{H}_{m-1}}$ as detailed in the Appendix.
\end{itemize}

\subsection{Optimized-TTL Keep-Alive Windows for Hawkes Processes} \label{subsec:avg_optimal}

Corollary~\ref{optimal_decreasing} provides the optimal history-dependent policy for Hawkes processes. We conclude this section by showing how it also provides motivation for a history-independent heuristic for Hawkes processes. In Section~\ref{sec:simulations} we show this heuristic has strong empirical performance.

Our proposal is to empirically determine a keep-alive window length that works well for ``typical'' windows.  Intuitively, this can be done by sampling a number of histories, computing the optimal policy for each history, and computing a summary statistic.  We find that the average of the optimal policies works well.  We discuss the simulation of a Hawkes process from its parameters in Section~\ref{sec:simulations}

\begin{corollary}
When the parameters of the Hawkes process are such that $c_{p} - (c_{cs} \cdot \lambda(x | \mathcal{H})) = 0$ has a solution, the optimal policy has a history independent lower bound,  and an upper bound expressed as follows
\begin{align*}
\tau_{\text{opt}, \mathcal{H}} & \geq \frac{1}{\beta} \cdot \left(\log \alpha - \log \Big(\frac{c_{p}}{c_{cs}} - \lambda_{0}\Big)\right) \\
\tau_{\text{opt}, \mathcal{H}} & \leq \frac{1}{\beta} \cdot \left(\log \alpha + \log \delta + 1 - \log \Big(\frac{c_{p}}{c_{cs}} - \lambda_{0}\Big)\right)
\end{align*}
\noindent where $\delta$ satisfies
\begin{equation*}
\sum_{i=m-\delta}^{m-1} e^{\beta \cdot (t_i - t_{m-1})} 
\geq \frac{1}{2} \sum_{i=1}^{m-1} e^{\beta \cdot (t_i - t_{m-1})} 
\end{equation*}
\end{corollary}

That is, the most recent $\delta$ arrivals provide at least half the total weight of the history dependent term.  This can be thought of as only having $\delta$ arrivals that are recent enough to matter.  Apart from rare scenarios where $\delta$ is much larger or smaller than typical given the Hawkes process parameters, this bound is relatively insensitive to the exact history due to the log.  In our simulations, particularly Section \ref{subsec:azure_experiments}, and Section \ref{azure_results}, rather than estimate $\delta$ we directly estimate a threshold by simulating sample points for the Hawkes process with the estimated parameters.

We also considered other approaches to deriving history-independent policies.  The solution to the classic ski rental problem shows that setting $\tau_{ka} = c_{cs} / c_p$ always has a cost within a factor of 2 of that of the optimal history-dependent policy for Hawkes processes.  This bound can be tightened by setting $\tau_{ka}$ in a way that depends on the parameters of the Hawkes process.  However, we found that in practice this approach was overly conservative, so we defer its analysis to the appendix.

\section{Simulations} \label{sec:simulations}

\subsection{Performances on Simulated Hawkes Processes} \label{sec:hawkes_sim}

Our theoretical results derived the optimal policy and argued that optimized-TTL, which uses averaging, provided a good heuristic which is independent of history.  To better understand its expected performance, we evaluate both policies on simulated application invocations governed by a Hawkes process. We use \textit{Ogata's modified thinning algorithm} \cite{ogata1981lewis} to generate the samples of the Hawkes process.  We generate 600 sample points of function invocations in a single realization of the Hawkes process. We evaluate the policies by taking the mean over 100 realizations. 

\begin{figure*}[t!]
    \centering
    \begin{subfigure}[t]{0.3\textwidth}
        \centering
        \includegraphics[height=1.4in]{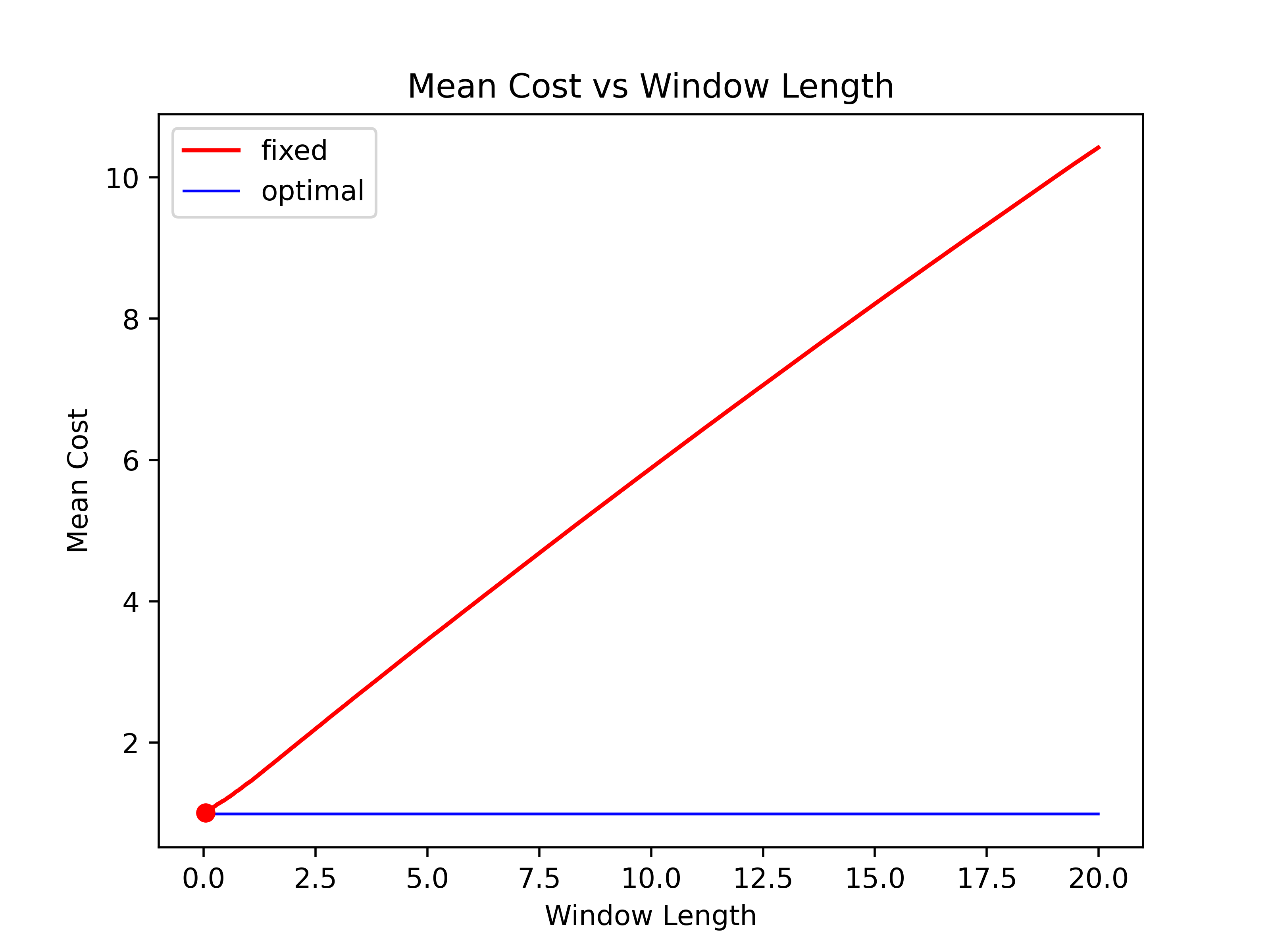}
        \caption{$c_{p} =1, c_{cs} = 1$}
    \end{subfigure}
    ~
    \begin{subfigure}[t]{0.3\textwidth}
        \centering
        \includegraphics[height=1.4in]{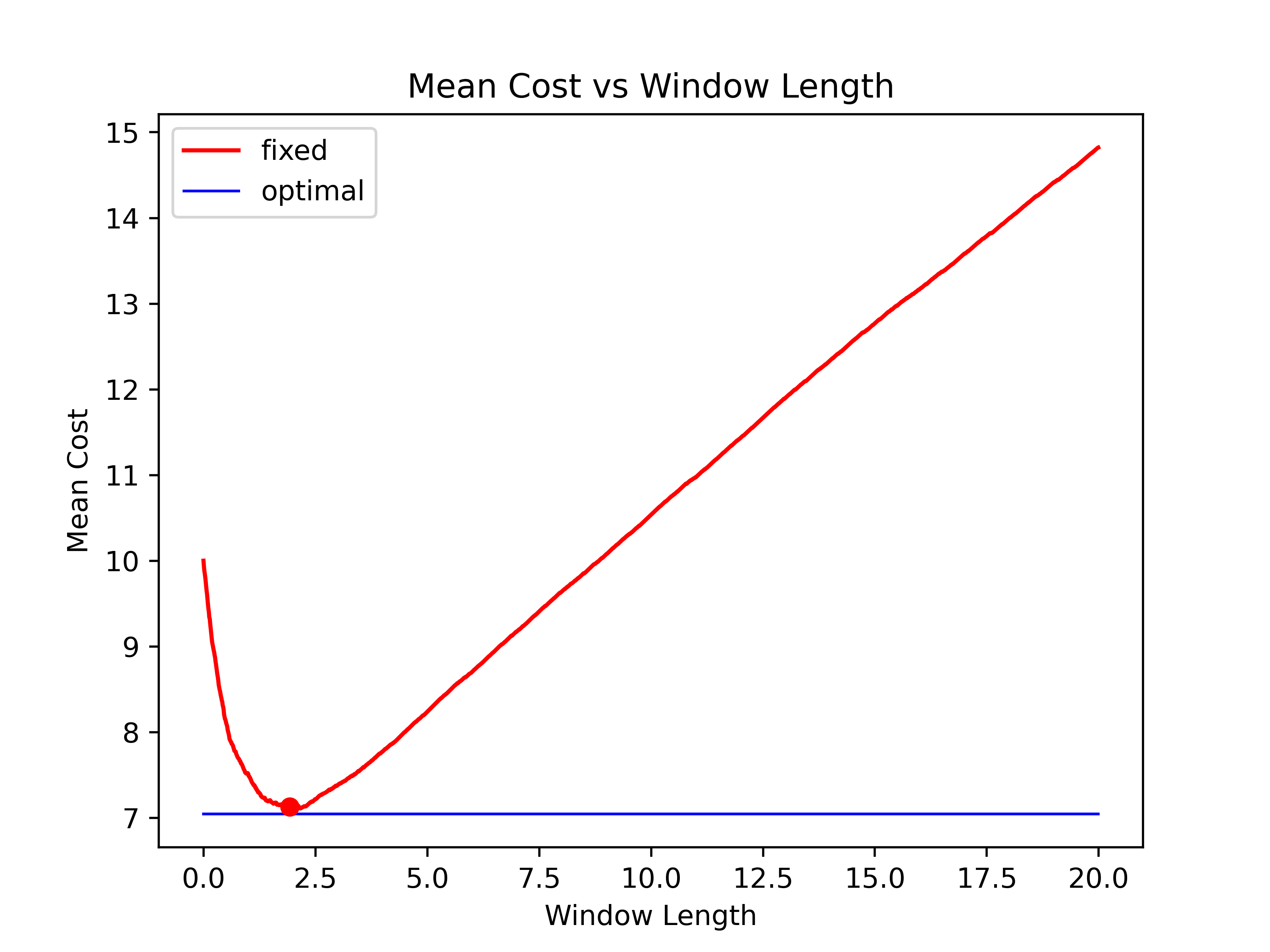}
        \caption{$c_{p} =1, c_{cs} = 10$}
    \end{subfigure}
    ~
    \begin{subfigure}[t]{0.3\textwidth}
        \centering
        \includegraphics[height=1.4in]{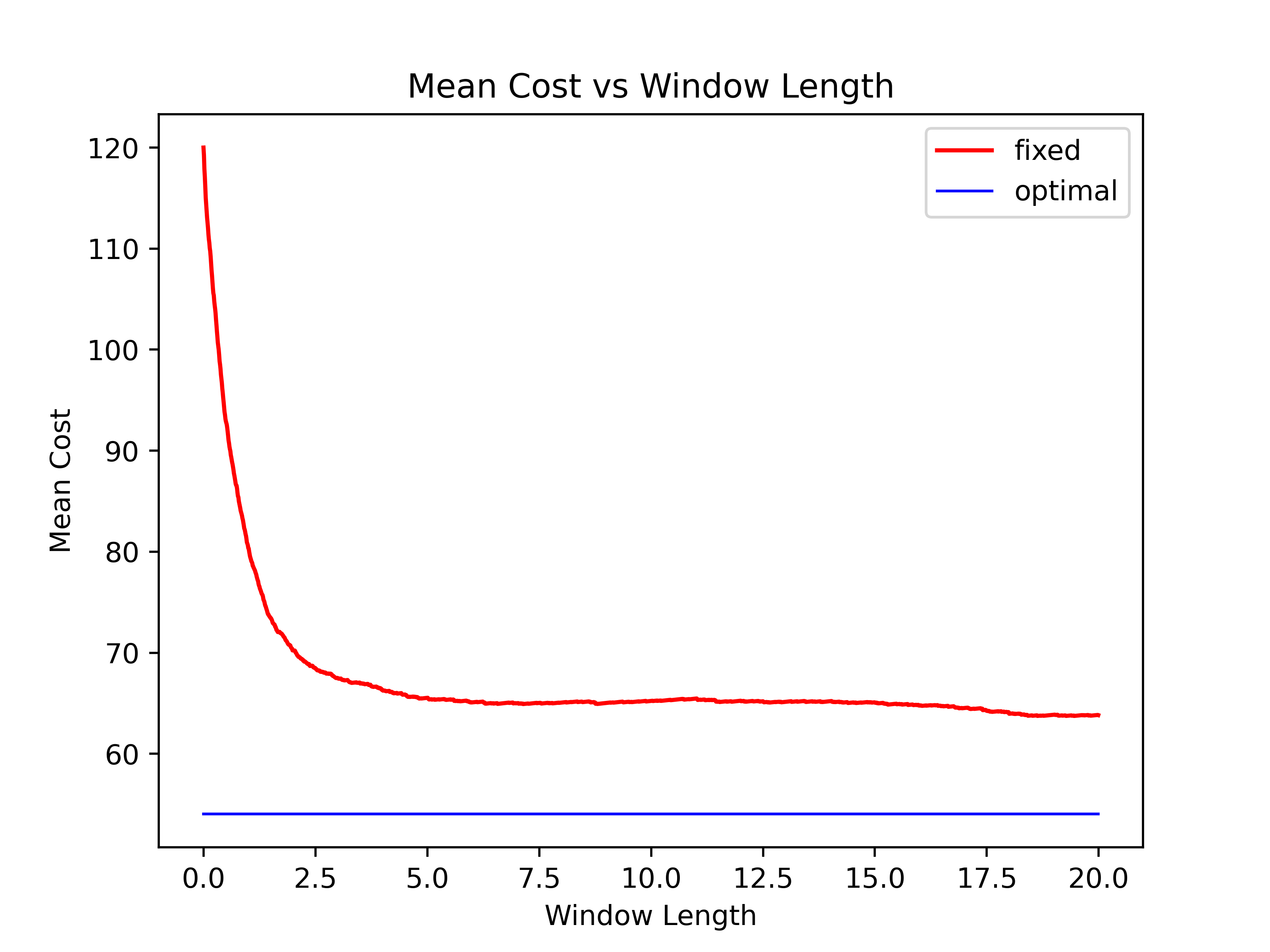}
        \caption{$c_{p} =1, c_{cs} = 120$}
    \end{subfigure}    
    \caption{Comparisons between optimal and fixed policies with Hawkes process parameters $\lambda_{0} = 0.01, \alpha = 0.5, \beta = 1.0$}
    \label{cost_sim}
\end{figure*}

Figure \ref{cost_sim} shows the average cost of the policies for the three possible cases of the optimal policy given in Corollary \ref{optimal_decreasing}.  It does so by varying $c_{cs}$ while the value of $c_{p}$ is normalized to 1. The red curve shows the performance of the fixed policy as the length of the fixed keep-alive window is varied, while the blue horizontal line indicates the performance of the optimal history dependent policy. The red dot indicates the cost of the fixed policy evaluated at the length of the average optimal keep-alive window (i.e optimized-TTL).

In Figure \ref{cost_sim} (a), since $c_{p}/c_{cs}$ is large compared to $\lambda_{0}$, the optimal keep-alive policy is to have a keep-alive window length of 0. In other words, the optimal policy is to encounter a cold start for every invocation. Here, the optimized-TTL policy is the same as optimal since the optimized-TTL window is 0.

In Figure \ref{cost_sim} (b), the red dot indicating the  average window length used by the optimal policy is also near the point where the average cost of the fixed window keep-alive policy curve changes from a decreasing function of window length to an increasing function of window length. Intuitively, this point on the red curve corresponds to the point $\tau_{\text{opt}}$ where the averaged expected cost function $\overline{g}(x) = c_{p}(1 - \overline{F}(x)) - c_{cs}\overline{f}(x)$, where the averages are taken over the different histories and runs,  changes from negative to positive. It is suggestive of some of the underlying regularities of Hawkes processes that the average of the optimal solutions and the optimal solution to the average problem are close to each other.  Thus, optimized-TTL finds a near-optimal point in the space of fixed policies and the performance there is close to the true optimum.

In Figure \ref{cost_sim} (c), we see that the optimal keep-alive policy is to always be active. Here, there is no indication of a red dot as it is out of bounds for the plot, where the red curve asymptotes toward the blue.

In the appendix we provide results for additional parameter settings showing the strong performance of optimized-TTL across a wide range of parameter settings.

\subsection{Azure Data trace Experimental Setup} \label{subsec:azure_experiments}

We evaluate the performance of the optimal policy and the optimized-TTL policy by comparing them with the fixed keep-alive policy on a subset of Azure traces released by \citet{shahrad2020serverless}.\footnote{These traces are available at \url{https://github.com/Azure/AzurePublicDatset}}. The traces  collect invocation counts of functions binned in 1-minute intervals. In Azure Functions, an application comprises of multiple functions where each function performs a specific task for the application. Since allocation of resources is based on applications (which are the unit of caching), we aggregate the bin counts of the function invocations belonging to the same application.  

We evaluate performance using two metrics. The first is the amount of memory time that is wasted, normalized to the amount wasted by the default policy of a 10 minute keep-alive window. We accumulate the wasted memory time across all applications for a given policy. We assume the function execution times to be zero, to quantify the worst case wasted memory time. For this calculation we assume that all the applications use the same amount of memory. The second is the number of cold starts. We evaluate the cold start behavior by computing the average number of cold starts per application. We assume the first invocation of an application to be a cold start. These modeling decisions are generally consistent with those of \citet{shahrad2020serverless}.

The fixed keep-alive policy was implemented by adding a fixed keep-alive window after an application invocation of, 5 minutes, 10 minutes, 20 minutes, 30 minutes, 45 minutes, 60 minutes, 90 minutes, or 120 minutes. The other two policies required fitting a Hawkes Process to the invocation pattern of an application. The Azure traces collect the data of application invocations for 14 days from July 15th to July 28th 2019. To avoid horizon effects or assuming unreasonable amounts of prior data about an application, we estimated the Hawkes process parameters based on the application invocations on day 8. We test for the appropriateness of the estimated Hawkes process parameters by using the corresponding application invocations on day 7 . We use a separate day of application invocations to check for the appropriateness of the Hawkes process parameters, since the goodness of fit test is known to have some limitations when the same data is used both to estimate the parameters and to compute the KS- statistic. \cite{reynaud2014goodness} propose a solution associated with sub-sampling. (\cite{van2016deep}  show similar issues for training and applying double Deep Q-learning Networks.) In the Appendix, we analyse this issue in detail. For simplicity, rather than sub-sampling the data we take advantage of additional data we are not currently using (e.g. day 7). We evaluate the policies for the application invocations during day 9. In our initial exploration we found that the results were largely the same when the policies were tested on other days instead, so to save on simulation time (since we are working with a datacenter-scale trace) we limited the evaluation to a single day.

We know that the data contains applications triggered by timers and other patterns which are quite different from Hawkes processes. Therefore, we applied our policies only to those applications which were a good fit to a  Hawkes process. The estimates of the Hawkes process parameters were computed as the minimum of the negative log-likelihood as described by \citet{laub2021elements}. The log-likelihood for the estimation of Hawkes process parameters $\lambda_{0}, \; \alpha , \;\text{and} \; \beta$  is
\begin{align*}
    \sum_{i=1}^{k} \log(\lambda_{0} + \alpha A(i)) - \lambda t_{k} + \frac{\alpha}{\beta} \sum_{i=1}^{k} \Big[ e^{- \beta(t_{k} - t_{i})} - 1 \Big]
\end{align*}
where $t_{1}, \; t_{2}, \cdots, \; t_{k}$ are the \textit{k} invocations of the application, and $A(i) = \sum_{j=1}^{i-1} e^{- \beta (t_{i} - t_{j})}$. The appropriateness of the application invocations being modeled by a Hawkes process is then determined by applying the Random time change theorem on the estimated parameters as detailed by  \citet{laub2021elements}. The similarity measure of the distribution of the application invocations to a Hawkes process was determined via the Kolmogorov–Smirnov (KS) test.  After testing various thresholds, we applied the optimal keep-alive policy based on the estimated Hawkes process parameters to the 25\% of application processes which had the best goodness of fit.  The default fixed policy was used for the remaining 75\%.

From Section \ref{sec:hawkes_optimal}, we know that, apart from degenerate cases, the optimal policy is given by \begin{equation*}
\tau_{\text{opt}, \mathcal{H}} = \frac{1}{\beta} \cdot \left( \log \alpha + \log \big(\sum_{j=1}^{m-1}e^{\beta (t_{j}-t_{m-1})} \big) - \log \Big( \frac{c_{p}}{c_{cs}} - \lambda_{0}\Big) \right).
\end{equation*} The second term of logarithmic sum of exponential expressed as $\log \big(\sum_{j=1}^{m-1}e^{\beta (t_{j}-t_{m-1})} \big)$ represents the weight of the previous arrivals. In order to compute the optimal keep-alive window efficiently for applications with frequent arrivals we consider no more than 200  previous arrivals. We normalize $c_{p} = 1$ and compute the optimal policy for $c_{cs} =$ 5, 10, 20, 30, 45, 60, 90, and 120. These values of $c_{cs}$ are chosen in order to match the lengths used by the fixed policy.

After obtaining the applications that are a good fit for the Hawkes process with its estimated parameters, we apply the optimized-TTL policy described as the average optimal keep-alive window policy over simulated arrivals. We compute the optimized-TTL policy based on the simulated arrivals instead of the arrivals available in the dataset because we eliminate any possible bias from using the same data to both fit the distribution and optimize the window lengths. Also, this shows that if the serverless provider has a lack of data for a particular Hawkes process application, it is still possible to get nearly equivalent performance to the optimal policy. We generate a single simulation of arrivals for each application based on the estimated Hawkes process parameters for 24 hours (1440 minutes). The average of the optimal keep-alive windows were computed for each application as their optimized-TTL based on the simulated arrivals for the various values of $c_{cs}$.

\subsection{Azure Datatrace Performance Results} \label{azure_results}


\begin{figure*}[t!]
    \centering
    \begin{subfigure}[t]{0.44\textwidth}
        \centering
        \includegraphics[height=2.1in]{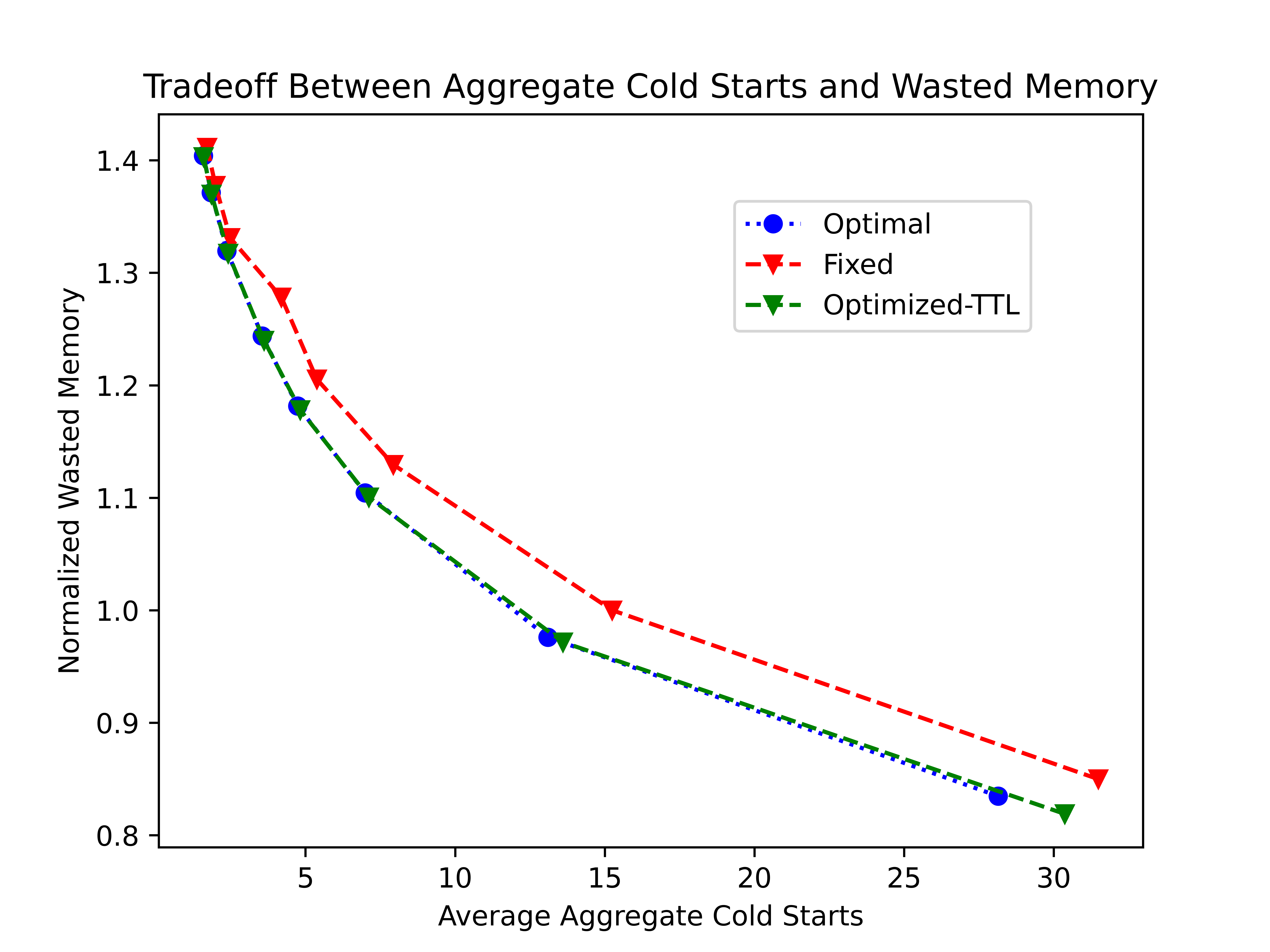}
        \caption{Hawkes process applications}
    \end{subfigure}%
    ~ 
    \begin{subfigure}[t]{0.44\textwidth}
        \centering
        \includegraphics[height=2.1in]{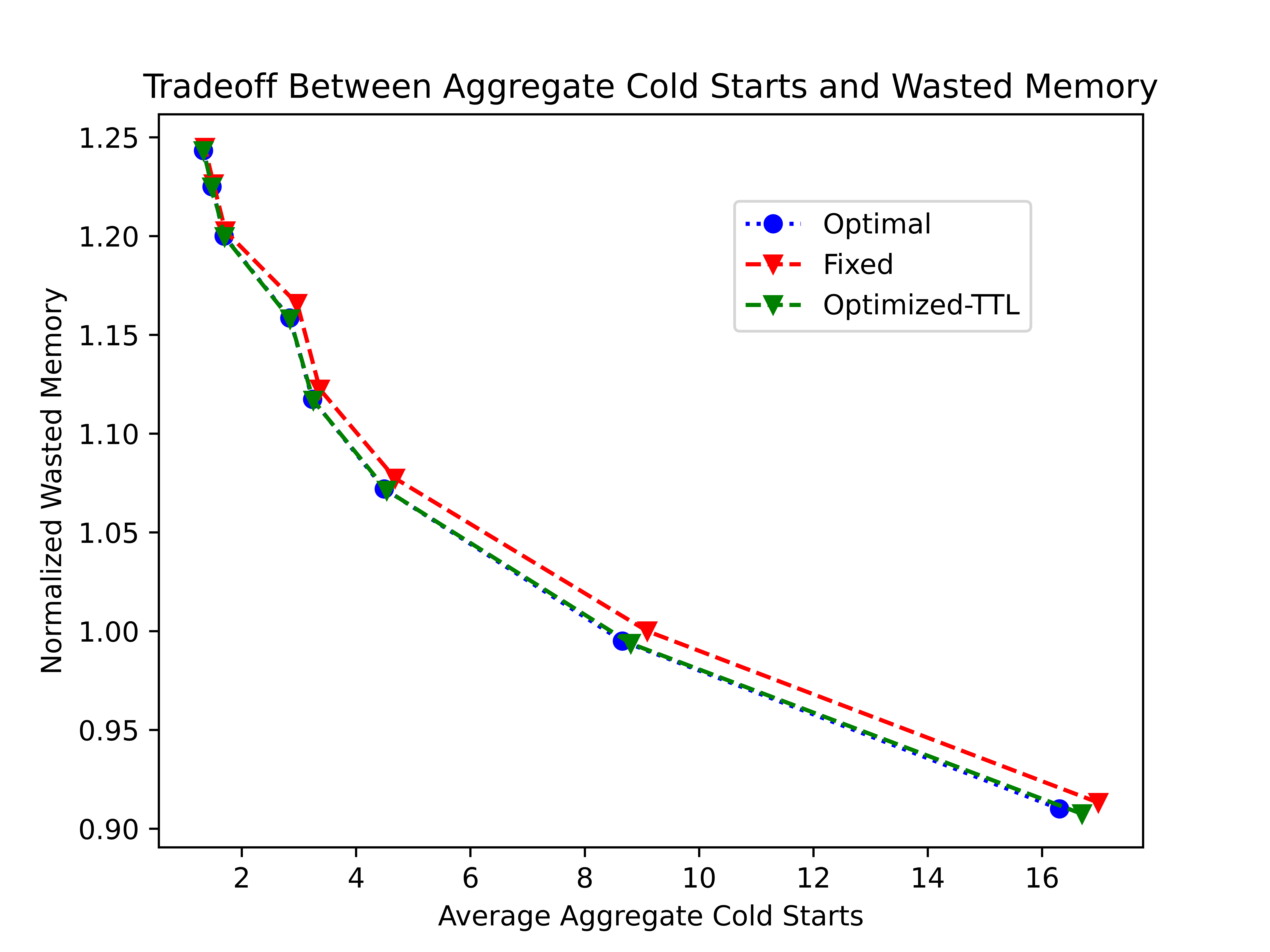}
        \caption{All applications}
    \end{subfigure}
    
    \begin{subfigure}[t]{0.44\textwidth}
        \centering
        \includegraphics[height=2.1in]{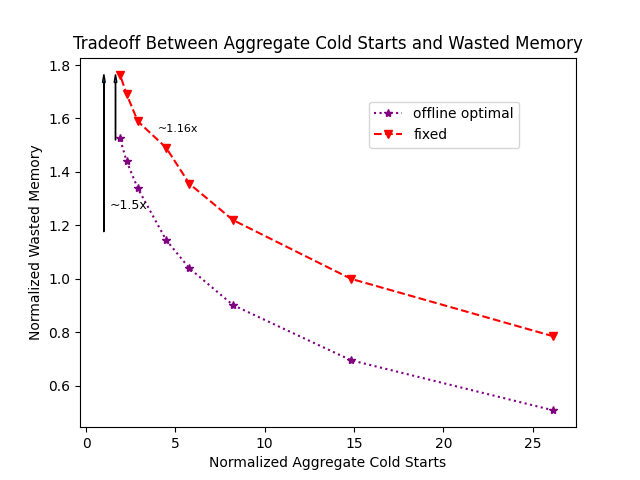}
        \caption{Hawkes process applications}
    \end{subfigure}%
    ~ 
    \begin{subfigure}[t]{0.44\textwidth}
        \centering
        \includegraphics[height=2.1in]{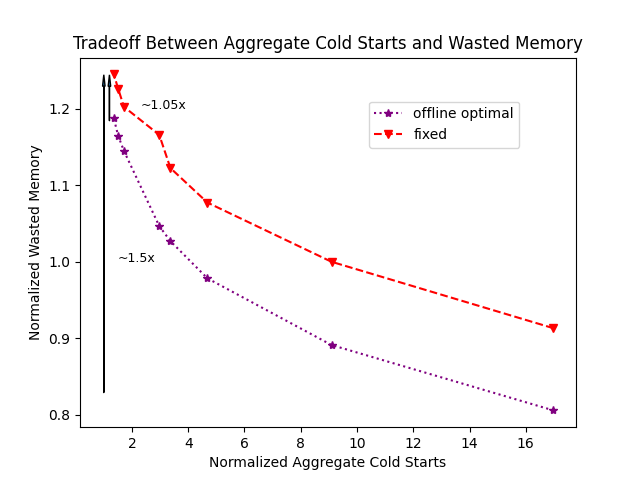}
        \caption{All applications}
    \end{subfigure}
   \caption{Trade-off curve of average number of cold starts vs normalized wasted memory for various policies}
    \label{azure_fig}
\end{figure*}

\begin{table*}[t]
\centering
\begin{tabular}{|c|c|c|c|c|}
\hline
Policy & Avg. Cold Start Savings (Hawkes) & (All) & Avg. Memory Savings (Hawkes) & (All)\\
\hline
Optimal & 1.012 &  0.1296 &  0.0457 & 0.0095 \\
\hline
Optimized-TTL & 0.965 & 0.1196 & 0.0436 & 0.0088\\
\hline
Offline-Optimal & 4.038 & 1.29 & 0.269 & 0.0948\\
\hline
\end{tabular}
\caption{\label{tab:1} Average performance improvement over fixed policy}
\end{table*}

Overall, our results show that applying the optimal policy to applications fit by a Hawkes process yields benefits that are economically significant at the scale of a datacenter.  Furthermore, the optimized-TTL policy yields near-equivalent results with no need to update based on history.

Figure \ref{azure_fig} (a,b) plots the trade-off curve between the average number of cold starts per application and the normalized wasted memory for optimal, optimized-TTL, and fixed policies. Figure \ref{azure_fig} (a) includes only the treated (Hawkes-process-like) applications, while Figure \ref{azure_fig} (b) includes all applications. We observe from Figure \ref{azure_fig}, that the optimal policy Pareto dominates the fixed keep-alive policy.  We see that for high values of $c_{cs}$ both metrics are almost the same for all policies. This is because, the optimal policy for many applications was selected to be the upper bound.  Similarly, for low values of $c_{cs}$ many applications use the lower bound. For intermediate values of $c_{cs}$ we see that the optimal policy has almost the same number of normalized cold starts but lower amount of wasted memory. Furthermore, the performance of the optimized-TTL policy is very similar to the optimal policy so we can get these benefits.  Of course, since we only treat 25\% of applications, these benefits are attenuated when considering all applications.

While the benefits are small in absolute terms we argue they are still economically significant.  We quantify them by computing the area between the Pareto curve of the fixed and the other two policies. By  dividing the area between the Pareto curves with the maximum number of average aggregate cold starts encountered by an application (effective x-axis length) or maximum normalized wasted memory (effective y-axis length). This gives a sense of the average improvement across the curve. We record the results in Table 1.


They show that the improvement of the optimized-TTL policy over the fixed policy is around 95\% of that of the optimal policy indicating that the optimized-TTL policy performs effectively close to the history dependent policy. Given that wasted memory in the existing system is normalized to 1,  average improvements of approximately 0.045 on treated applications and 0.0095 overall represent an overall decrease in resource use for the cache on the order of 4.5\% on treated applications (0.95\% overall).\footnote{Because of the shape of the curves in Figure~\ref{azure_fig}, the benefits may be modestly larger in practice because current operating point is toward the right end of the plots where the gap tends to be larger.} While small in absolute terms, percentage improvements of this scale translate to tens or hundreds of millions of dollars in cost savings for a major cloud provider~\cite{dierks2019cluster}.  Alternatively, this same improvement could be used to improve customer experience,  reducing the number of cold starts by an average of about 1.012 per day for treated customers.

To put these results in further context, recall that we are comparing against the fixed policy {\em tuned optimally per our theory}.  In terms of average cost, simply using the default choice of 10 minutes would be substantially worse in many cases.  Furthermore, our theory shows that this fixed policy is not a weak baseline but has strong theoretical properties in its own right. (See discussion at the end of Section~\ref{subsec:avg_optimal}.)

Finally, to give a bound on how much of the possible performance improvement our approach achieves, we also evaluate the performance of the offline optimal policy in Figure \ref{azure_fig} for (c) only the applications that have a Hawkes process distribution (d) and all applications. Here, the offline optimal policy chooses to have a cold start if $c_{p} x_m \geq c_{cs}$, otherwise the keep-alive window is of length $x_m$. Here, $x_{m}$ is the inter-arrival time for the $m$-th arrival of the application invocation. That is, this policy cheats in that it is tuned to the actual realized pattern of arrivals rather than any prediction. In this sense it provides a bound on how much we could hope to achieve. The results from Figure \ref{azure_fig} (c), and also quantified in Table 1, show that the optimal Hawkes policy achieves a meaningful fraction of it.  Figure \ref{azure_fig} (d) includes a comparison to the improvements reported by \citet{shahrad2020serverless}.  They report a 1.5x improvement in wasted memory at essentially no cost in cold starts, which is substantially larger than what even the offline optimal can achieve (1.05x).  This highlights how much of their improvement comes from pre-warming apps, for example timers, which our theoretical results show is not necessary for Hawkes processes.  They combine sophisticated predictive modeling with a simple rule for determining pre-warming and keep-alive decisions.  In contrast, our results provide a sophisticated rule for these decisions, making them complementary.

\section{Conclusion} \label{sec:conclusion}

Motivated by applications such as serverless computing, we presented a model of caching policies which captures the trade-off between the cost of keeping objects in the cache and the cost of cache misses.
We characterized optimal caching policies and examined the optimal policies in detail for Hawkes processes. Since optimal policies for Hawkes processes depend on the history of arrivals, we also developed history-independent policies based on the heuristic of averaging the optimal keep-alive window from simulated predictions of arrivals. Evaluation on Hawkes process simulations provided insights into the tuning and expected performance of these approximations.  Evaluation on a data trace of Azure functions showed this approach can yield small,yet economically meaningful improvements at the scale of a datacenter. Our results point to several avenues for future work.  Since our approach allows us to characterize optimal policies on a per-item basis, it is naturally suited for exploring customization based on individual customer utilities rather than an overall system average as done in our experiments and most prior work.  Another direction would be to use our model to examine optimal policies for more complex scenarios, such as a hierarchy of caches.

\subsection*{Acknowledgements}

This material is based upon work supported by the National Science Foundation under Grant No. 2110707.

\bibliography{keep_alive_caching_621}

\clearpage
\newpage

\section*{Appendix}
\appendix

This Supplementary Material contains proofs and other material omitted from the main manuscript.

\section{Omitted Proofs}

\begin{lemma}
The expected cost of a cache policy over an inter-arrival is
\begin{equation*}
\mathbb{E}[cost(\pi(\cdot |\mathcal{H}_{m-1}))] = c_{cs} + \displaystyle \int_{0}^{\infty} \pi(x|\mathcal{H}_{m-1}) \cdot g(x|\mathcal{H}_{m-1})\; dx , 
\end{equation*} 
where the instantaneous cost at \textit{x} units after the most recent arrival at $t_{m-1}$ is
\begin{equation*}
g(x|\mathcal{H}_{m-1}) = c_{p} \cdot \big(1 - F(x|\mathcal{H}_{m-1})\big) - c_{cs} \cdot f(x| \mathcal{H}_{m-1}).
\end{equation*} 
\end{lemma}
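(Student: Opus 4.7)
The plan is to first write the realized cost of an arbitrary multi-window policy $\pi$ as a function of the random inter-arrival time $x_m$, then take the expectation with respect to the conditional density $f(\cdot|\mathcal{H}_{m-1})$ and rearrange via Fubini's theorem. The key observation is that for an arbitrary union of keep-alive windows, the three-case formula in the model section generalizes to a single integral expression, because the total time the object spends in the cache before the $m$-th request arrives is just the measure of the keep-alive set restricted to $[0, x_m]$, and a cold start occurs exactly when the cache is inactive at the moment of arrival.

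First, I would argue that the realized cost, under the stated assumption that moving the item in and out is free, is
\begin{equation*}
C(x_m) \;=\; c_p \int_{0}^{x_m} \pi(y|\mathcal{H}_{m-1})\, dy \;+\; c_{cs}\bigl(1 - \pi(x_m|\mathcal{H}_{m-1})\bigr).
\end{equation*}
A quick sanity check recovers the three cases of the single-window formula: if $x_m < \tau_{\text{pw}}$ the integral is $0$ and $\pi(x_m)=0$, giving $c_{cs}$; if $\tau_{\text{pw}} \le x_m \le \tau_{\text{pw}} + \tau_{\text{ka}}$ the integral is $c_p(x_m - \tau_{\text{pw}})$ and $\pi(x_m)=1$, giving $c_p(x_m-\tau_{\text{pw}})$; if $x_m > \tau_{\text{pw}}+\tau_{\text{ka}}$ the integral is $c_p \tau_{\text{ka}}$ and $\pi(x_m)=0$, giving $c_p \tau_{\text{ka}} + c_{cs}$.

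Next I would take the expectation term by term. For the prefill-time term, swap the order of integration (Fubini is justified because $\pi \in \{0,1\}$ and $f$ is a density), using $\mathbb{P}(x_m \ge y) = 1 - F(y|\mathcal{H}_{m-1})$:
\begin{equation*}
\mathbb{E}\!\left[\int_0^{x_m}\!\! \pi(y|\mathcal{H}_{m-1})\, dy\right]
= \int_0^{\infty}\!\! \pi(y|\mathcal{H}_{m-1})\bigl(1 - F(y|\mathcal{H}_{m-1})\bigr)\, dy.
\end{equation*}
For the miss term, write $\mathbb{E}[c_{cs}(1-\pi(x_m|\mathcal{H}_{m-1}))] = c_{cs} - c_{cs}\int_0^\infty \pi(x|\mathcal{H}_{m-1}) f(x|\mathcal{H}_{m-1})\, dx$, since $f$ integrates to $1$.

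Finally I would combine the two pieces, pulling the common factor $\pi(x|\mathcal{H}_{m-1})$ outside, which produces exactly $c_{cs} + \int_0^\infty \pi(x|\mathcal{H}_{m-1})\,g(x|\mathcal{H}_{m-1})\,dx$ with the stated $g$. The only real subtlety is justifying the single-integral form of $C(x_m)$ for policies with multiple windows; once that representation is in hand the rest is Fubini and rearrangement. The ``free relocation'' caveat highlighted in the text is precisely what makes the cache-occupancy cost equal to $c_p$ times the Lebesgue measure of $\{\pi=1\}\cap[0,x_m]$, and without that assumption one would need extra boundary terms for each window activation.
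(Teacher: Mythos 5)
Your proposal is correct, and it reaches the result by a genuinely cleaner route than the paper. The paper expands the expectation case by case over every region of the window sequence (before the first window, inside each window, in each gap, and after the last window), tracking the accumulated cache time $Z(\mathcal{L},j)$ explicitly, applying integration by parts inside each window, and then cancelling a large number of telescoping terms. You instead collapse the realized cost of an arbitrary finite union of windows into the single pathwise formula
\begin{equation*}
C(x_m) \;=\; c_p \int_{0}^{x_m} \pi(y|\mathcal{H}_{m-1})\, dy \;+\; c_{cs}\bigl(1 - \pi(x_m|\mathcal{H}_{m-1})\bigr),
\end{equation*}
which you correctly verify against the model's three-case cost (and which matches the paper's Equation (1) region by region, since $\int_0^{x}\pi(y)\,dy$ equals $Z(\mathcal{L},j-1)+x-L_{2j}$ inside window $j$ and $Z(\mathcal{L},j)$ in the following gap). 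From there, Tonelli/Fubini on the nonnegative integrand turns $\mathbb{E}[\int_0^{x_m}\pi]$ into $\int_0^\infty \pi(y)(1-F(y|\mathcal{H}_{m-1}))\,dy$ and the miss term contributes $c_{cs}-c_{cs}\int_0^\infty\pi f$, giving exactly $c_{cs}+\int_0^\infty \pi\, g$. The two derivations are mathematically equivalent (the Fubini swap plays the role of the paper's integration by parts), but yours eliminates the bookkeeping over window indices entirely and makes transparent why the survival function $1-F$ appears; the paper's version has the modest advantage of never requiring one to recognize the unified pathwise representation, at the cost of a much longer cancellation argument. The only points worth stating explicitly in a polished write-up are that the window set is finite so $\pi$ is measurable, and that the continuity of $F$ assumed in the model means the value of $\pi$ at window endpoints does not affect the expectation.
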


\begin{proof}
Let $\mathcal{L} = \{L_{0}, L_{1}, L_{2}, \cdots, L_{2k-1}\}$ denote the set of points on the sequence of keep-alive windows for policy $\pi(\cdot | \mathcal{H}_{m-1})$ where even indices are the start of the windows and odd indices are the endpoints of the windows.  Let $Z(\mathcal{L}, j) = \sum_{p = 0}^{j} (L_{2p+1} - L_{2p})$ for $j \geq 0$. The function $Z(\mathcal{L}, j)$ represents the time accumulated in the cache through the $j$-th sequence of the keep-alive window. For $j < 0$, we have $Z(\mathcal{L},j) = 0$.  Then we have

\begin{align*}
\begin{split}
& \mathbb{E}[cost(\pi(\cdot|\mathcal{H}_{m-1}))]\\ 
& =  c_{cs} \cdot \displaystyle \int_{0}^{L_{0}} f(x | \mathcal{H}_{m-1}) \; dx \; + \;  \displaystyle \sum_{j = 0}^{k-1} \int_{L_{2j}}^{L_{2j+1}} \; c_{p} \cdot \Big( Z(\mathcal{L}, j-1) + x - L_{2j} \Big) \cdot f(x|\mathcal{H}_{m-1}) \; dx \; \\ & \quad + \sum_{j = 0}^{k-2} \int_{L_{2j+1}}^{L_{2j+2}} \; \Big( c_{cs} + c_{p} Z(\mathcal{L}, j) \Big) f(x|\mathcal{H}_{m-1}) \; dx \; + \int_{L_{2k-1}}^{\infty} \; \Big( c_{cs} + c_{p} Z(\mathcal{L}, k-1) \Big) f(x|\mathcal{H}_{m-1}) \; dx  \quad \text{(1)}\\
& = c_{cs} \cdot F(L_{0} | \mathcal{H}_{m-1}) + \displaystyle \sum_{j = 0}^{k-1} \Bigg( c_{p} \cdot \Big( Z(\mathcal{L}, j-1) + x - L_{2j} \Big) \cdot F(x|\mathcal{H}_{m-1})\Big|_{L_{2j}}^{L_{2j+1}}  - \int_{L_{2j}}^{L_{2j+1}} c_{p} F(x|\mathcal{H}_{m-1}) \; dx\Bigg) \;  \\  & \quad + \sum_{j = 0}^{k-2} \; \Big( c_{cs} + c_{p} \cdot Z(\mathcal{L}, j) \Big) \cdot F(x|\mathcal{H}_{m-1})\Big|_{L_{2j+1}}^{L_{2j+2}} + \Big( c_{cs} + c_{p} \cdot Z(\mathcal{L}, k-1) \Big) \cdot F(x|\mathcal{H}_{m-1})\Big|_{L_{2k-1}}^{\infty}  \qquad  \text{(2)}\\
& =  c_{cs} F(L_{0} | \mathcal{H}_{m-1}) + \displaystyle \sum_{j = 0}^{k-1} \; c_{p} \Big( Z(\mathcal{L}, j-1) + L_{2j+1} - L_{2j} \Big) F(L_{2j+1}|\mathcal{H}_{m-1})   - \sum_{j = 0}^{k-1} \; c_{p} Z(\mathcal{L}, j-1) F(L_{2j}|\mathcal{H}_{m-1}) \\ & \quad - \sum_{j = 0}^{k-1} \; \int_{L_{2j}}^{L_{2j+1}} c_{p} F(x|\mathcal{H}_{m-1}) \; dx + \sum_{j = 0}^{k-2} \; \Big( c_{cs} + c_{p} \cdot Z(\mathcal{L}, j) \Big) \cdot F(L_{2j+2}|\mathcal{H}_{m-1}) \\ & \quad - \sum_{j = 0}^{k-2} \; \Big( c_{cs} + c_{p} \cdot Z(\mathcal{L}, j) \Big) F(L_{2j+1}|\mathcal{H}_{m-1}) + \Big( c_{cs} + c_{p} \cdot Z(\mathcal{L}, k-1) \Big) \cdot 1 \\ & \quad - \Big( c_{cs} + c_{p} \cdot Z(\mathcal{L}, k-1) \Big) F(L_{2k-1}|\mathcal{H}_{m-1}) \qquad \qquad \text{(3)}\\
\end{split}    
\end{align*}

\begin{align*}
\begin{split}
 &\mathbb{E}[cost(\pi(\cdot|\mathcal{H}_{m-1}))] \\
& =  c_{cs} F(L_{0} | \mathcal{H}_{m-1}) \; + \displaystyle \sum_{j = 0}^{k-1} \; c_{p} Z(\mathcal{L}, j) F(L_{2j+1}| \mathcal{H}_{m-1})   - \sum_{j = 0}^{k-1} \; c_{p} Z(\mathcal{L}, j-1) F(L_{2j}| \mathcal{H}_{m-1}) \; \\ & \quad - \sum_{j = 0}^{k-1} \; \int_{L_{2j}}^{L_{2j+1}} c_{p} F(x|\mathcal{H}_{m-1}) \; dx \quad + \sum_{j = 0}^{k-2} \; c_{cs}  \cdot F(L_{2j+2}| \mathcal{H}_{m-1}) + \sum_{j = 0}^{k-2} \; c_{p} \cdot Z(\mathcal{L}, j) \cdot F(L_{2j+2}|\mathcal{H}_{m-1}) \; \\ & \quad - \sum_{j = 0}^{k-2} \;  c_{cs} \cdot F(L_{2j+1}|\mathcal{H}_{m-1}) \quad - \sum_{j = 0}^{k-2} c_{p} \cdot Z(\mathcal{L}, j) \cdot F(L_{2j+1}|\mathcal{H}_{m-1}) \\& \quad +  c_{cs} + c_{p} \cdot Z(\mathcal{L}, k-1) \; - c_{cs} \cdot F(L_{2k-1}|\mathcal{H}_{m-1}) \quad - c_{p} \cdot Z(\mathcal{L}, k-1) \cdot F(L_{2k-1}|\mathcal{H}_{m}) \qquad \qquad \text{(4)}\\   
\end{split}    
\end{align*}

We apply integration by parts to the term $\int_{L_{2j}}^{L_{2j+1}} \; c_{p} \cdot \big( Z(\mathcal{L}, j-1) + x - L_{2j} \big) \cdot f(x|\mathcal{H}_{m-1}) \; dx \;$ in Equation (1) to get the terms $c_{p} \cdot \big( Z(\mathcal{L}, j-1) + x - L_{2j} \big) \cdot F(x|\mathcal{H}_{m-1})\big|_{L_{2j}}^{L_{2j+1}}  - \int_{L_{2j}}^{L_{2j+1}} c_{p} F(x|\mathcal{H}_{m-1}) \; dx \;$ in Equation (2). In Equation (4), we have substituted $ \; Z(\mathcal{L},j) \;$ for the terms $\; Z(\mathcal{L},j-1) + (L_{2j+1} - L_{2j})$ in Equation (3) since, $\; Z(\mathcal{L},j) = Z(\mathcal{L},j-1) + (L_{2j+1} - L_{2j})$. The remainder of the proof consists of combining and canceling terms to simplify (4), then applying the fundamental theorem of calculus.

\begin{align*}
\begin{split}
&\mathbb{E}[cost(\pi(\cdot|\mathcal{H}_{m-1}))] \\
& = \; c_{cs} F(L_{0} | \mathcal{H}_{m-1}) \; + \displaystyle \sum_{j = 0}^{k-1} \; c_{p} Z(\mathcal{L}, j) \cdot F(L_{2j+1}| \mathcal{H}_{m-1}) - \sum_{j = 0}^{k-1} \; c_{p} Z(\mathcal{L}, j-1) \cdot F(L_{2j}|\mathcal{H}_{m-1}) \; \\ & \quad - \sum_{j = 0}^{k-1} \; \int_{L_{2j}}^{L_{2j+1}} c_{p} F(x|\mathcal{H}_{m-1}) \; dx  \quad + \sum_{j = 0}^{k-2} \;  c_{cs} \cdot F(L_{2j+2}|\mathcal{H}_{m-1}) \quad + \sum_{j = 0}^{k-2} \;  c_{p} \cdot Z(\mathcal{L}, j) \cdot F(L_{2j+2}|\mathcal{H}_{m-1}) \; \\ & \quad  - \sum_{j = 0}^{k-1} \;  c_{cs} \cdot F(L_{2j+1}|\mathcal{H}_{m-1}) \quad- \sum_{j = 0}^{k-1} c_{p} \cdot Z(\mathcal{L}, j) \cdot F(L_{2j+1}|\mathcal{H}_{m-1}) +  c_{cs} + c_{p} \cdot Z(\mathcal{L}, k-1)\\
& = \; \displaystyle - \sum_{j=0}^{k-1} \int_{L_{2j}}^{L_{2j+1}} c_{p} F(x|\mathcal{H}_{m-1}) \;dx + \; \sum_{j= 0}^{k-1} c_{cs} F(L_{2j}|\mathcal{H}_{m-1}) - \sum_{j = 0}^{k-1} c_{cs} F(L_{2j+1}|\mathcal{H}_{m-1}) + c_{cs} + c_{p} Z(\mathcal{L}, k-1) \quad \text{(5)}\\
& = \quad \displaystyle   c_{p} \cdot Z(\mathcal{L}, k-1) - \sum_{j = 0}^{k-1} \; \int_{L_{2j}}^{L_{2j+1}} c_{p} F(x|\mathcal{H}_{m-1}) \; dx  - \sum_{j = 0}^{k-1} \;  c_{cs} \cdot \Big( F(L_{2j+1}|\mathcal{H}_{m-1}) - F(L_{2j}|\mathcal{H}_{m-1}) \Big) + c_{cs}\\
& = \quad \displaystyle \int_{\mathcal{I}} c_{p}(1 - F(x|\mathcal{H}_{m-1})) -c_{cs} \cdot f(x|\mathcal{H}_{m-1}) \; dx \; + \; c_{cs}
\end{split}    
\end{align*}

\noindent where $\mathcal{I} =\begin{cases}1, \; \text{for} \quad x \in [L_{0} , \; L_{1}] \bigcup \cdots \bigcup \; [L_{2k-2}, L_{2k-1}]\\ 0, \; \text{otherwise} \end{cases} $.\\

\noindent In Equation (5),  we combine \; $c_{cs} \cdot F(L_{0} | \mathcal{H}_{m-1})$ \; and \; $\displaystyle \sum_{j=0}^{k-2} \; c_{cs} F(L_{2j+2} | \mathcal{H}_{m-1})$ \; to obtain $\displaystyle \sum_{j=0}^{k-1} c_{cs} F(L_{2j} | \mathcal{H}_{m-1})$.

\end{proof}

\begin{theorem} 
The points $L_i$ of the sequence of keep-alive windows over an inter-arrival for the optimal policy $\pi_{\text{opt}}(\cdot | \mathcal{H}_{m-1})$ are at 0, $\infty$, or solutions to the equation $c_{p} - (c_{cs} \cdot \lambda(x | \mathcal{H}_{m-1})) = 0 \;$ where the sign changes.
\end{theorem}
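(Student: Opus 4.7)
The approach is to take Lemma~\ref{expect_cost} as given and view the choice of $\pi(\cdot|\mathcal{H}_{m-1})$ as a pointwise optimization problem. Since
\[
\mathbb{E}[\text{cost}(\pi(\cdot|\mathcal{H}_{m-1}))] \;=\; c_{cs} + \int_0^\infty \pi(x|\mathcal{H}_{m-1})\, g(x|\mathcal{H}_{m-1})\, dx,
\]
and $\pi$ enters only as a $\{0,1\}$-valued indicator, the integral is minimized precisely by the indicator of the set $\{x : g(x|\mathcal{H}_{m-1}) < 0\}$. Setting $\pi(x|\mathcal{H}_{m-1}) = 1$ wherever $g < 0$ strictly reduces the integral, while setting $\pi = 0$ wherever $g > 0$ avoids adding a positive contribution; on the zero set the choice is immaterial for the cost. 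Thus the optimal policy is supported exactly on the closure of the negative-cost set of $g$.

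Next I would translate the sign condition on $g$ into the sign condition on $c_p - c_{cs}\lambda(x|\mathcal{H}_{m-1})$. Writing
\[
g(x|\mathcal{H}_{m-1}) \;=\; \bigl(1 - F(x|\mathcal{H}_{m-1})\bigr)\left[\, c_p \;-\; c_{cs}\cdot \frac{f(x|\mathcal{H}_{m-1})}{1 - F(x|\mathcal{H}_{m-1})}\,\right] \;=\; \bigl(1 - F(x|\mathcal{H}_{m-1})\bigr)\bigl[c_p - c_{cs}\,\lambda(x|\mathcal{H}_{m-1})\bigr],
\]
and noting that $1 - F(x|\mathcal{H}_{m-1}) > 0$ on $[0,\infty)$ (wherever future arrivals are still possible), the sign of $g$ equals the sign of $c_p - c_{cs}\lambda(x|\mathcal{H}_{m-1})$. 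Consequently, the optimal keep-alive set is $\{x \geq 0 : c_p - c_{cs}\lambda(x|\mathcal{H}_{m-1}) \leq 0\}$.

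From here, the endpoints $L_i$ of this set (viewed as a union of intervals, as in the definition of $\pi$) are by inspection precisely the points where $c_p - c_{cs}\lambda(x|\mathcal{H}_{m-1})$ changes sign, together with the boundary points $0$ and $\infty$ when the set extends to or includes the endpoints of $[0,\infty)$. That is, $L_0 = 0$ whenever the sign is already non-positive at $x = 0$, and the final $L_{2k-1} = \infty$ whenever the sign remains non-positive for all sufficiently large $x$; every other endpoint corresponds to a sign change in $c_p - c_{cs}\lambda(x|\mathcal{H}_{m-1})$.

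\textbf{Main obstacle.} The argument is essentially a pointwise optimization, so the real care is in the bookkeeping: verifying that the support of the optimal indicator is indeed a countable union of intervals (guaranteed by continuity of $f$ and $F$, so $\lambda$ is continuous where $1-F>0$), handling the measure-zero set $\{g=0\}$ where the choice of $\pi$ does not affect cost, and treating the boundary cases $0$ and $\infty$ cleanly so they are identified as legitimate endpoints rather than sign-change points. Once those technicalities are addressed, the structural claim follows directly from the factored form of $g$.
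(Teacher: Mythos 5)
Your proposal is correct and reaches the same conclusion via the same key ingredients (Lemma~\ref{expect_cost} plus the factorization $g(x|\mathcal{H}_{m-1}) = (1 - F(x|\mathcal{H}_{m-1}))(c_p - c_{cs}\lambda(x|\mathcal{H}_{m-1}))$), but the optimization step is argued differently, and arguably better. The paper treats the window endpoints $L_k$ as decision variables and imposes a first-order stationarity condition, identifying candidate endpoints as zeros of $g$ (i.e., $\partial\,\mathbb{E}[cost]/\partial x = g = 0$), and then uses the sign-change discussion to sort out which zeros are window starts versus ends and to handle the degenerate all-positive/all-negative cases. You instead exploit the fact that $\pi$ enters the cost linearly as a $\{0,1\}$-valued multiplier and minimize the integrand pointwise, so the optimal policy is immediately the indicator of $\{g < 0\}$ (up to the measure-zero set $\{g=0\}$); this yields global optimality directly rather than mere stationarity, and the endpoint characterization then falls out as a description of the boundary of that set. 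The one technicality worth flagging is your parenthetical that $1 - F > 0$ "wherever future arrivals are still possible": the paper only asserts $1 - F \geq 0$ and explicitly lists $F(L_k|\mathcal{H}_{m-1}) = 1$ as an alternative way $g$ can vanish, so strictly speaking the sign of $g$ is determined by $c_p - c_{cs}\lambda$ only on the region where $F < 1$; on $\{F = 1\}$ the policy choice is cost-irrelevant anyway, so this does not affect the conclusion, but it should be stated rather than assumed.
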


\begin{proof}
From Lemma \ref{expect_cost} we know that the expected cost is given by $\mathbb{E}[cost(\pi(\cdot|\mathcal{H}_{m-1}))] = \int_{0}^{\infty} \pi(x|\mathcal{H}_{m-1}) \cdot g(x|\mathcal{H}_{m-1}) \; dx +c_{cs}$. The points of the sequence of keep-alive windows $L_{k}\; \text{for} \; k = 0, 1, 2, \cdots$ for the optimal policy are the points where the first order partial derivative of the expected cost is zero, that is, $\displaystyle \frac{\partial\; \mathbb{E}[cost(\pi(\cdot|\mathcal{H}_{m-1}))]}{\partial x} = \; 0$ at $x = L_{k}\; \forall k$. The first order derivative of the expected cost is $\displaystyle \frac{\partial\; \mathbb{E}[cost(\pi(\cdot|\mathcal{H}_{m-1}))]}{\partial x} \; = \; g(x |\mathcal{H}_{m-1})$. We simplify $g(x |\mathcal{H}_{m-1})$ as follows,

\begin{align*}
\begin{split}
& g(x | \mathcal{H}_{m-1}) \\ 
& = c_{p}(1 - F(x | \mathcal{H}_{m-1})) - c_{cs} f(x | \mathcal{H}_{m-1}) \\
& = c_{p} (1 - F(x | \mathcal{H}_{m-1})) - c_{cs} \lambda(x | \mathcal{H}_{m-1}) (1 - F(x|\mathcal{H}_{m-1})) \\
& = (1 - F(x | \mathcal{H}_{m-1})) \cdot (c_{p} - c_{cs}\lambda(x | \mathcal{H}_{m-1}))
\end{split}    
\end{align*}

\begin{align*}
& g(x = L_{k} |\mathcal{H}_{m-1}) \; = \; 0   \\
\implies & 1 - F(x = L_{k} |\mathcal{H}_{m-1}) = 0 \quad \text{or} \quad  c_{p} - c_{cs} \lambda(x = L_{k} |\mathcal{H}_{m-1}) = 0 \\
\implies & F(x = L_{k}|\mathcal{H}_{m-1}) = 1 \quad \text{or} \quad \displaystyle \frac{c_{p}}{c_{cs}} = \lambda(x = L_{k} |\mathcal{H}_{m-1}) = \frac{f(x = L_{k}|\mathcal{H}_{m-1})}{1 - F(x = L_{k}| \mathcal{H}_{m-1})}
\end{align*}

\noindent Hence, the points where $g(x | \mathcal{H}_{m-1}) = 0$ are also the points where $c_{p} - c_{cs} \lambda(x | \mathcal{H}_{m-1}) = 0$. We know that the instantaneous cost of the policy over an inter-arrival is given by $g(x | \mathcal{H}_{m-1})$.  Let $x = L_{k}$ be an arbitrary root of the equation $c_{p}- c_{cs} \lambda(x |\mathcal{H}_{m-1}) = 0$. If $g(x |\mathcal{H}_{m-1})$ changes sign from positive to negative as it goes through $x = L_{k}$, that is, $g(x |\mathcal{H}_{m-1}) >\; 0\;$ for $x < L_{k}$ and changes sign to $g(x | \mathcal{H}_{m-1}) < \; 0\;$ for $x > L_{k}$. It would be optimal for the cache policy to start the keep-alive window from $x = L_{k}$, since the cost of caching the object from $x = L_{k}$ benefits the policy. Similarly, if $g(x |\mathcal{H}_{m-1})$ changes sign from negative to positive as it passes $x = L_{k}$, that is, $g(x |\mathcal{H}_{m-1}) <\; 0\;$ for $x < L_{k}$ and changes sign to $g(x | \mathcal{H}_{m-1}) > \; 0\;$ for $x > L_{k}$. It would be optimal to stop the keep-alive window after $x = L_{k}$, since the cost of caching the object after $x = L_{k}$ will not benefit the policy.  Since $1 - F(x | \mathcal{H}_{m-1}) \; \geq \; 0,  \quad \forall x,\;$ the sign of $c_{p} - c_{cs} \cdot \lambda(x | \mathcal{H}_{m-1})$ determines the sign of $g(x |\mathcal{H}_{m-1})$. The sign of $c_{p} - c_{cs} \lambda(x | \mathcal{H}_{m-1})$ should change as it passes through the root of the equation $x = L_{k}$ for $x = L_{k}$ to be considered as a point where the keep-alive window of the optimal policy starts or ends.

This leaves the end cases where there is no solution to the equation $c_{p} - c_{cs} \lambda(x | \mathcal{H}_{m-1}) = 0$ or the sign of $c_{p} - c_{cs} \lambda(x | \mathcal{H}_{m-1})$ does not change $\forall \; x$. When $g(x |\mathcal{H}_{m-1})$ is always non-negative, it is optimal to have a keep-alive window length of 0. This is because having an active keep-alive window length in this case would be more expensive than a cold start. On the other hand, when $g(x |\mathcal{H}_{m-1})$ is always non-positive, it is optimal for the keep-alive window to always be active since keeping the object in cache is beneficial to the policy. 
    
\end{proof}

\begin{corollary}
If the hazard rate is weakly decreasing, the optimal policy $\pi_{\text{opt}}(\cdot|\mathcal{H}_{m-1})$  is a single keep-alive window starting at $\tau_{pw}=0$, and is given by \\ $\pi_{\text{opt}}(x|\mathcal{H}_{m-1}) =  \begin{cases}1 \; , \quad \forall x \in [0 , \; \tau_{\text{opt}, \mathcal{H}_{m-1}}]\\ 0 \; , \quad \text{otherwise}\end{cases}$ where,
\begin{enumerate}
    \item $\tau_{\text{opt}, \mathcal{H}_{m-1}} = \infty$, i.e., the optimal policy is to have the keep-alive window always be active when \\ $\forall x, \; \displaystyle \frac{c_{p}}{c_{cs}} < \; \lambda(x | \mathcal{H}_{m-1}), \; $
    \item $\tau_{\text{opt}, \mathcal{H}_{m-1}} = 0$, i.e., the optimal policy would be to not cache and always have a cold start when \\ $\displaystyle \frac{c_{p}}{c_{cs}} >\; \lambda(x = 0 | \mathcal{H}_{m-1})\;$
    \item The optimal policy is a keep-alive window of length $\tau_{\text{opt}, \mathcal{H}_{m-1}}$ given by the solution to the equation \\ $ \displaystyle  \frac{c_{p}}{c_{cs}} = \frac{f(x = \tau_{\text{opt}, \mathcal{H}_{m-1}} | \mathcal{H}_{m-1})}{1 - F(x = \tau_{\text{opt}, \mathcal{H}_{m-1}} | \mathcal{H}_{m-1})}$, otherwise. 
\end{enumerate}
\end{corollary}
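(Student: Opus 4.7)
The plan is to specialize Theorem~\ref{opt_policy} to the monotone setting and argue that weak monotonicity of $\lambda$ forces the sign structure of $g$ to admit at most one transition, which then produces a single keep-alive window anchored at $x=0$. Recall from the proof of Theorem~\ref{opt_policy} that $g(x|\mathcal{H}_{m-1}) = (1-F(x|\mathcal{H}_{m-1}))\cdot(c_p - c_{cs}\lambda(x|\mathcal{H}_{m-1}))$, and that since $1 - F(x|\mathcal{H}_{m-1}) \geq 0$ the sign of $g$ is governed entirely by the sign of $c_p - c_{cs}\lambda(x|\mathcal{H}_{m-1})$.

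First I would observe that when $\lambda(\cdot|\mathcal{H}_{m-1})$ is weakly decreasing, the function $x \mapsto c_p - c_{cs}\lambda(x|\mathcal{H}_{m-1})$ is weakly increasing in $x$. Consequently it can change sign at most once, and if it does the sign transition is from negative to positive. By the structural description in Theorem~\ref{opt_policy}, a sign change from negative to positive is the right endpoint of a keep-alive window, not a left endpoint. Combined with the convention that $L_0$ may be $0$, this rules out any nontrivial pre-warming: the optimal policy must have the form $\pi_{\text{opt}}(x|\mathcal{H}_{m-1}) = 1$ on $[0,\tau_{\text{opt},\mathcal{H}_{m-1}}]$ for some $\tau_{\text{opt},\mathcal{H}_{m-1}} \in [0,\infty]$.

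Next I would dispatch the three cases by reading off the sign of $c_p - c_{cs}\lambda(x|\mathcal{H}_{m-1})$. In case (1), the hypothesis $c_p/c_{cs} < \lambda(x|\mathcal{H}_{m-1})$ for all $x$ makes $g \leq 0$ pointwise with no sign change, so by Theorem~\ref{opt_policy} the window is never ended and $\tau_{\text{opt},\mathcal{H}_{m-1}} = \infty$. In case (2), the decreasing property of $\lambda$ together with $c_p/c_{cs} > \lambda(0|\mathcal{H}_{m-1})$ gives $c_p/c_{cs} > \lambda(x|\mathcal{H}_{m-1})$ for every $x \geq 0$, so $g \geq 0$ pointwise and keeping the object ever in the cache only increases expected cost, forcing $\tau_{\text{opt},\mathcal{H}_{m-1}} = 0$. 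In the remaining case, $g$ is strictly negative near $0$ and nonnegative for large $x$, so by continuity of $f$ and $F$ there is a crossing point solving $c_p/c_{cs} = f(x|\mathcal{H}_{m-1})/(1-F(x|\mathcal{H}_{m-1}))$; monotonicity of $c_p - c_{cs}\lambda$ makes it the unique relevant crossing and Theorem~\ref{opt_policy} identifies it with $\tau_{\text{opt},\mathcal{H}_{m-1}}$.

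The step I expect to require the most care is the passage from the generic multi-window characterization of Theorem~\ref{opt_policy} to the claim that the optimal window actually starts at $0$ rather than at some positive $\tau_{\text{pw}}$. The monotonicity argument rules out a left endpoint inside $(0,\infty)$, but one still has to verify that $L_0 = 0$ is optimal whenever $g(0|\mathcal{H}_{m-1}) < 0$; this follows directly from $\mathbb{E}[\text{cost}]= c_{cs} + \int \pi\,g\,dx$ in Lemma~\ref{expect_cost}, since extending an interval leftward into a region where $g<0$ strictly decreases the cost. Everything else is bookkeeping about the sign of a single monotone function.
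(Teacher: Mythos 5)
Your proposal is correct and follows essentially the same route as the paper's proof: both factor $g(x|\mathcal{H}_{m-1}) = (1-F(x|\mathcal{H}_{m-1}))(c_p - c_{cs}\lambda(x|\mathcal{H}_{m-1}))$, use weak monotonicity of $\lambda$ to show $g$ can change sign at most once and only from negative to positive, and then read off the three cases from the sign of $c_p - c_{cs}\lambda$. Your extra remark justifying $L_0 = 0$ via the integral formula of Lemma~\ref{expect_cost} is a slightly more explicit version of a step the paper treats implicitly, but it is not a different argument.
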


\begin{proof}
Assume that the hazard rate of the arrival of function invocations over an inter-arrival $\lambda (x|\mathcal{H}_{m-1})$ is (weakly) decreasing. From Lemma \ref{expect_cost} we know that the expected cost is given by $\mathbb{E}[cost(\pi(\cdot|\mathcal{H}_{m-1}))] = \int_{0}^{\infty} \pi(x|\mathcal{H}_{m-1}) \cdot g(x|\mathcal{H}_{m-1}) \; dx +c_{cs}$. We prove a single keep-alive window is optimal by showing that $g(x|\mathcal{H}_{m-1})$ can only change its sign from negative to positive at most once. 
Thus $g$ is optimized by single window policy that keeps the object in cache until the transition from negative to positive occurs.
To begin,
\begin{equation*}
g(x |\mathcal{H}_{m-1})  = \; (1 - F(x|\mathcal{H}_{m-1})) \cdot (c_{p} - c_{cs} \lambda(x| \mathcal{H}_{m-1}))
\end{equation*}

\noindent Since $\lambda(x|\mathcal{H}_{m-1})$ is weakly decreasing, $c_{p} - c_{cs} \lambda(x| \mathcal{H}_{m-1})$ is weakly increasing. Also, $1 - F(x|\mathcal{H}_{m-1})$ is always positive. If $g(x|\mathcal{H}_{m-1}) \; \geq 0$, then the optimal policy is to have a keep-alive window length of 0. This is because having an active keep-alive window in this case would be more expensive than a cold start. If $g(x|\mathcal{H}_{m-1})$ is always negative, then it is always beneficial for the keep-alive window to be active. Otherwise, $g(x|\mathcal{H}_{m-1})$ can change its sign at most once and such a change must be from negative to positive. It is no longer beneficial for the provider to keep things in memory after $g(x|\mathcal{H}_{m-1})$ has changed from negative to positive because the cost of keeping in memory outweighs the cost of a cold start. Thus, the optimal policy is of the form of a single keep-alive window. Now, it only remains to determine the point$\tau_{\text{opt}, \mathcal{H}_{m-1}}$ at which $g(x|\mathcal{H}_{m-1})$ changes from negative to positive.

\begin{align*}
& g(x = \tau_{\text{opt}, \mathcal{H}_{m-1}}|\mathcal{H}_{m-1}) \; = \; 0   \\
\implies & F(x = \tau_{\text{opt},\mathcal{H}_{m-1}}|\mathcal{H}_{m-1}) = 1 \quad \text{or} \quad \displaystyle \frac{c_{p}}{c_{cs}} = \lambda(x = \tau_{\text{opt},\mathcal{H}_{m-1}}|\mathcal{H}_{m-1}) = \frac{f(x =\tau_{\text{opt},\mathcal{H}_{m-1}}|\mathcal{H}_{m-1})}{1 - F(x = \tau_{\text{opt},\mathcal{H}_{m-1}}| \mathcal{H}_{m-1})}
\end{align*}

\end{proof}

\begin{corollary} 
If the hazard rate is weakly increasing, the optimal policy $\pi_{\text{opt}}(\cdot|\mathcal{H}_{m-1})$  is a single keep-alive window with $\tau_{ka} = \infty$ and a pre-warming window, and is given by \\ $\pi_{\text{opt}}(x|\mathcal{H}_{m-1}) =  \begin{cases}1 , \; \tau_{\text{pw}, \mathcal{H}_{m-1}} \leq x \\ 0, \; \text{otherwise}\end{cases}$ where,
\begin{enumerate}
    \item $\tau_{\text{pw}, \mathcal{H}_{m-1}} = 0$, i.e., the optimal policy is to have the keep-alive window always be active when\\ $\forall x, \quad \displaystyle \frac{c_{p}}{c_{cs}} < \; \lambda(x | \mathcal{H}_{m-1}), \; $
    \item $\tau_{\text{pw}, \mathcal{H}_{m-1}} = \infty$, i.e., the optimal policy is to always have a cold start when $\forall x, \; \displaystyle \frac{c_{p}}{c_{cs}} > \; \lambda(x | \mathcal{H}_{m-1}) \;$. 
    \item $\tau_{\text{pw}, \mathcal{H}_{m-1}}$ satisfies the equation\\ $\; \displaystyle \frac{c_{p}}{c_{cs}} = \displaystyle \frac{f(x = \tau_{\text{pw}, \mathcal{H}_{m-1}} |  \mathcal{H}_{m-1})}{1 - F(x = \tau_{\text{pw}, \mathcal{H}_{m-1}} |  \mathcal{H}_{m-1})}$, i.e., an infinite keep-alive window after a pre-warming window of length $\tau_{\text{pw}, \mathcal{H}_{m-1}}$ when $c_{p} - c_{cs} \lambda(x = 0 | \mathcal{H}_{m-1}) \; > 0$ and changes sign. 
\end{enumerate}
\end{corollary}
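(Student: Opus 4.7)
The plan is to mirror the proof of Corollary~\ref{optimal_decreasing}, exploiting the symmetry between the weakly decreasing and weakly increasing hazard rate cases. The starting point is Lemma~\ref{expect_cost}, which tells us the expected cost is $c_{cs} + \int_0^\infty \pi(x|\mathcal{H}_{m-1}) g(x|\mathcal{H}_{m-1})\, dx$ where, as in the previous proofs, $g$ factors as
\begin{equation*}
g(x|\mathcal{H}_{m-1}) = \bigl(1 - F(x|\mathcal{H}_{m-1})\bigr)\cdot \bigl(c_p - c_{cs}\lambda(x|\mathcal{H}_{m-1})\bigr).
\end{equation*}
The first factor is always nonnegative, so the sign of $g$ is determined entirely by the second factor.

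First I would argue that $g$ changes sign at most once, and any such change is from positive to negative. Since $\lambda(x|\mathcal{H}_{m-1})$ is weakly increasing in $x$, the expression $c_p - c_{cs}\lambda(x|\mathcal{H}_{m-1})$ is weakly decreasing in $x$. Thus $g$ is weakly decreasing in sign: nonnegative on some initial interval (possibly empty, possibly all of $[0,\infty)$) and nonpositive afterward. To minimize $\int_0^\infty \pi \cdot g\, dx$ over admissible indicator functions $\pi$, we must set $\pi=0$ wherever $g>0$ and $\pi=1$ wherever $g<0$. This immediately forces the optimal policy to take the form of a pre-warming window $[0, \tau_{\text{pw},\mathcal{H}_{m-1}}]$ on which caching is off, followed by an infinite keep-alive window, matching the form claimed in the corollary statement.

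Next I would dispatch the three cases by checking the boundary behavior of $g$. In case 1, the hypothesis $c_p/c_{cs} < \lambda(x|\mathcal{H}_{m-1})$ for all $x$ forces $g < 0$ everywhere, so by Theorem~\ref{opt_policy} the keep-alive window is active throughout, which corresponds to $\tau_{\text{pw},\mathcal{H}_{m-1}} = 0$. In case 2, $c_p/c_{cs} > \lambda(x|\mathcal{H}_{m-1})$ everywhere forces $g > 0$ everywhere, so caching is never beneficial and $\tau_{\text{pw},\mathcal{H}_{m-1}} = \infty$. In the remaining case, $c_p - c_{cs}\lambda(x=0|\mathcal{H}_{m-1}) > 0$ and the sign of $c_p - c_{cs}\lambda(x|\mathcal{H}_{m-1})$ flips as $x$ increases, so by Theorem~\ref{opt_policy} the transition point is a zero of $c_p - c_{cs}\lambda(x|\mathcal{H}_{m-1})$, equivalently a solution of $c_p/c_{cs} = f(x|\mathcal{H}_{m-1})/(1-F(x|\mathcal{H}_{m-1}))$.

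There is no real obstacle here beyond careful bookkeeping: the argument is essentially the time-reversed analogue of Corollary~\ref{optimal_decreasing}, with ``negative-then-positive'' replaced by ``positive-then-negative'' and a single keep-alive interval starting at $0$ replaced by a single keep-alive interval extending to $\infty$. The mild subtlety to double-check is that in case 3 the monotonicity of $c_p - c_{cs}\lambda(x|\mathcal{H}_{m-1})$ guarantees a genuine sign change (not merely a zero tangent), so that $\tau_{\text{pw},\mathcal{H}_{m-1}}$ is well-defined as the unique transition point up to the measure-zero set where $g$ may vanish.
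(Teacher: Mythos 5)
Your proposal is correct and follows essentially the same route as the paper's proof: factor $g(x|\mathcal{H}_{m-1}) = (1-F(x|\mathcal{H}_{m-1}))(c_p - c_{cs}\lambda(x|\mathcal{H}_{m-1}))$, observe that a weakly increasing hazard rate makes the second factor weakly decreasing so that $g$ can only change sign once from positive to negative, and then read off the three cases from where (if anywhere) that sign change occurs. The paper's argument is the same mirror-image of Corollary~\ref{optimal_decreasing} that you describe, so no further comparison is needed.
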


\begin{proof}

Following the proof of Theorem \ref{opt_policy}, we know that $g(x | \mathcal{H}_{m-1}) = (1 - F(x| \mathcal{H}_{m-1}))\cdot (c_{p} - c_{cs} \cdot \lambda(x| \mathcal{H}_{m-1}))$. Since $\lambda(x | \mathcal{H}_{m-1})$ is weakly increasing, $c_{p} - c_{cs} \lambda(x|\mathcal{H}_{m-1})$ is weakly decreasing. Also, $1 - F(x|\mathcal{H}_{m-1})$ is always positive. If initially $g(x | \mathcal{H}_{m-1}) < \; 0$, then $g(x |\mathcal{H}_{m-1})$ will always be negative. Hence, it is optimal to have the keep-alive window always be active. If $g(x|\mathcal{H}_{m-1}) > 0, \quad \forall x$, that is, $g$ is always positive, then the optimal policy is to encounter a cold start. If $g(x |\mathcal{H}_{m-1})$ is initially positive, then changes to a negative sign as $\lambda(x | \mathcal{H}_{m-1})$ is weakly increasing, then the optimal policy will be a pre-warming window of length decided by the position of the change of sign. We obtain $\tau_{\text{pw}, \mathcal{H}_{m-1}}\;$ from solving $g(x = \tau_{\text{pw},\mathcal{H}_{m-1}}|\mathcal{H}_{m-1}) = 0$ as follows.

\begin{align*}
& g(x = \tau_{\text{pw}, \mathcal{H}_{m-1}}|\mathcal{H}_{m-1}) \; = \; 0   \\
\implies & F(x = \tau_{\text{pw},\mathcal{H}_{m-1}}|\mathcal{H}_{m-1}) = 1 \quad \text{or} \quad \displaystyle \frac{c_{p}}{c_{cs}} = \lambda(x = \tau_{\text{pw},\mathcal{H}_{m-1}}|\mathcal{H}_{m-1}) = \frac{f(x =\tau_{\text{pw},\mathcal{H}_{m-1}}|\mathcal{H}_{m-1})}{1 - F(x = \tau_{\text{pw},\mathcal{H}_{m-1}}| \mathcal{H}_{m-1})}
\end{align*}

After the sign changes to negative, the keep-alive window should always be active,, that is, $\tau_{\text{ka},\mathcal{H}_{m-1}} = \infty$. 
\end{proof}

Corollary \ref{optimal_decreasing} characterizes the optimal policy when the distribution of arrival requests follow the Hawkes process to be one of the following policies,
\begin{itemize}
    \item The keep-alive window is to always be active with $\tau_{\text{opt}, \mathcal{H}_{m-1}} = \infty$ when, as in the Poisson case, the background intensity is sufficiently high: $\displaystyle \frac{c_{p}}{c_{cs}} < \; \lambda_{0}$.
    \item Experience a cold start with $\tau_{\text{opt}, \mathcal{H}_{m-1}} = 0$ when $\displaystyle \frac{c_{p}}{c_{cs}} > \; \lambda(x| \mathcal{H}_{m-1}))$, after the most recent arrival request.
    \item The keep-alive window is given by the expression 
    \begin{equation*}
        \tau_{\text{opt}, \mathcal{H}_{m-1}} = \frac{1}{\beta} \Big( \log \alpha + \log \Big(\sum_{j=1}^{m-1} e^{\beta  (t_{j}-t_{m-1})} \Big) - \log \Big( \frac{c_{p}}{c_{cs}} - \lambda_{0}\Big) \Big)
    \end{equation*}
    otherwise. \\To compute $\tau_{\text{opt}, \mathcal{H}_{m-1}}$, we  know that the length of the optimal keep-alive window is $\tau_{\text{opt}, \mathcal{H}}  = t_{\text{opt}, \mathcal{H}} - \; t_{m-1}$, where $t_{m-1}$ is the most recent arrival request. This expression is obtained by substituting the conditional intensity of the Hawkes process in Corollary \ref{optimal_decreasing} and solving for $t_{\text{opt}, \mathcal{H}_{m-1}}$.
\end{itemize}

\begin{align*}
\frac{c_{p}}{c_{cs}} & = \lambda_{0} + \sum_{t_{j} \in \mathcal{H}_{m-1}} \alpha \cdot e^{- \beta \cdot (t_{\text{opt}, \mathcal{H}_{m-1}} - t_{j})}\\
\frac{1}{\alpha}\Big(\frac{c_{p}}{c_{cs}} - \lambda_{0} \Big) & = \sum_{t_{j} \in \mathcal{H}_{m-1}} e^{- \beta \cdot (t_{\text{opt}, \mathcal{H}_{m-1}} - t_{j})}\\
\frac{1}{\alpha}\big(\frac{c_{p}}{c_{cs}} - \lambda_{0} \Big) & = e^{- \beta \cdot t_{\text{opt}, \mathcal{H}_{m-1}}} \cdot \sum_{t_{j} \in \mathcal{H}_{m-1}} e^{\beta \cdot t_{j}} \\
\log \left(\frac{1}{\alpha}\big(\frac{c_{p}}{c_{cs}} - \lambda_{0} \Big)\right) & = - \beta \cdot t_{\text{opt}, \mathcal{H}_{m-1}} + \log \left(\sum_{t_{j} \in \mathcal{H}_{m-1}} e^{\beta \cdot t_{j}}\right) \\
\beta \cdot t_{\text{opt}, \mathcal{H}_{m-1}} & = \log \alpha + \log \left(\sum_{t_{j} \in \mathcal{H}_{m-1}} e^{\beta \cdot t_{j}}\right) - \log \Big(\frac{c_{p}}{c_{cs}} - \lambda_{0}\Big)\\
\beta \cdot t_{\text{opt}, \mathcal{H}_{m-1}} & = \log \alpha + \log \left(\Big(e^{\beta \cdot t_{m-1}} \Big) \cdot \Big(\sum_{t_{j} \in \mathcal{H}_{m-1}} e^{\beta \cdot t_{j} - \beta \cdot t_{m-1}}\Big) \right) - \log \Big(\frac{c_{p}}{c_{cs}} - \lambda_{0}\Big)\\
\beta \cdot t_{\text{opt}, \mathcal{H}_{m-1}} & = \log \alpha + \beta \cdot t_{m-1} + \log \left(\sum_{t_{j} \in \mathcal{H}_{m-1}} e^{\beta \cdot (t_{j} - t_{m-1})}\right) - \log \Big(\frac{c_{p}}{c_{cs}} - \lambda_{0}\Big)\\
t_{\text{opt}, \mathcal{H}_{m-1}} & = t_{m-1} + \frac{1}{\beta} \left(\log \alpha +  \log \left(\sum_{t_{j} \in \mathcal{H}_{m-1}} e^{\beta \cdot (t_{j} - t_{m-1})}\right) - \log \Big(\frac{c_{p}}{c_{cs}} - \lambda_{0}\Big)\right)\\
\tau_{\text{opt}, \mathcal{H}_{m-1}}  & = \frac{1}{\beta} \cdot \left( \log \alpha + \log \big(\sum_{t_{j} \in \mathcal{H}_{m-1}} e^{\beta \cdot (t_{j}-t_{m-1})} \big) - \log \Big( \frac{c_{p}}{c_{cs}} - \lambda_{0}\Big) \right)
\end{align*}

\begin{corollary}
When the parameters of the Hawkes process are such that $c_{p} - (c_{cs} \cdot \lambda(x | \mathcal{H})) = 0$ has a solution, the optimal policy has a history independent lower bound,  and an upper bound expressed as follows
\begin{align*}
\tau_{\text{opt}, \mathcal{H}} & \geq \frac{1}{\beta} \cdot \left(\log \alpha - \log \Big(\frac{c_{p}}{c_{cs}} - \lambda_{0}\Big)\right) \\
\tau_{\text{opt}, \mathcal{H}} & \leq \frac{1}{\beta} \cdot \left(\log \alpha + \log \delta + 1 - \log \Big(\frac{c_{p}}{c_{cs}} - \lambda_{0}\Big)\right)
\end{align*}
\noindent where $\delta$ satisfies
\begin{equation*}
\sum_{i=m-\delta}^{m-1} e^{\beta \cdot (t_i - t_{m-1})} 
\geq \frac{1}{2} \sum_{i=1}^{m-1} e^{\beta \cdot (t_i - t_{m-1})} 
\end{equation*}
\end{corollary}

\begin{proof}
We can rewrite the formula for the optimal policy for a Hawkes process with a given history as:
 
\begin{align*}
\tau_{\text{opt}, \mathcal{H}} & = t_{\text{opt},\mathcal{H}} - t_{m} \\
& = \frac{1}{\beta} \cdot \left(\log \alpha + \log \left(\sum_{t_{j} \in \mathcal{H}} e^{\beta \cdot t_{j}}\right) - \log \Big(\frac{c_{p}}{c_{cs}} - \lambda_{0}\Big)\right) - t_{m} \\
& = \frac{1}{\beta} \cdot \left(\log \alpha + \log \left( e^{\beta \cdot t_{1}} + e^{\beta \cdot t_{2}} + \cdots + e^{\beta \cdot t_{m}}\right) - \log \Big(\frac{c_{p}}{c_{cs}} - \lambda_{0}\Big)\right) - t_{m}\\
& = \frac{1}{\beta} \cdot \left(\log \alpha + \log \left( e^{\beta \cdot t_{1}} + e^{\beta \cdot t_{2}} + \cdots + e^{\beta \cdot t_{m}}\right) - \log \Big(\frac{c_{p}}{c_{cs}} - \lambda_{0}\Big)\right) - t_{m}\\
& = \frac{1}{\beta} \cdot \left(\log \alpha + \log \left( (e^{\beta \cdot t_{m}}) \cdot ( e^{\beta \cdot (t_{1} - t_{m})} + e^{\beta \cdot (t_{2} - t_{m})} + \cdots + e^{\beta \cdot (t_{m} - t_{m})})\right) - \log \Big(\frac{c_{p}}{c_{cs}} - \lambda_{0}\Big)\right) - t_{m}\\
& = \frac{1}{\beta} \cdot \left(\log \alpha + \log \left( e^{\beta \cdot (t_{1} - t_{m})} + e^{\beta \cdot (t_{2} - t_{m})} + \cdots + e^{\beta \cdot (t_{m} - t_{m})}\right) - \log \Big(\frac{c_{p}}{c_{cs}} - \lambda_{0}\Big)\right)\\
& = \frac{1}{\beta} \cdot \left(\log \alpha + \log \left( e^{\beta \cdot (t_{1} - t_{m})} + e^{\beta \cdot (t_{2} - t_{m})} + \cdots + 1 \right) - \log \Big(\frac{c_{p}}{c_{cs}} - \lambda_{0}\Big)\right)\\
\end{align*}

This has three terms, two of which are independent of the history.  Thus we can obtain a lower bound on the optimal policy for {\em any} history as$\tau_{\text{opt}, \mathcal{H}} \geq \frac{1}{\beta} \cdot \left(\log \alpha - \log \Big(\frac{c_{p}}{c_{cs}} - \lambda_{0}\Big)\right)$.  In fact, this is the optimal policy for the empty history. 

For the term that depends on history, all exponents are negative so each term is at most 1.  This yields a trivial upper bound of $\tau_{\text{opt}, \mathcal{H}} \leq \frac{1}{\beta} \cdot \left(\log \alpha + \log m - \log \Big(\frac{c_{p}}{c_{cs}} - \lambda_{0}\Big)\right)$. 
While it grows slowly due to the log,
this bound is unappealing to apply directly because it grows with the length of the history.  In reality, many of the terms of the sum are close to 0 because $t_i - t_m$ is very negative for $t_i$ substantially in the past.

To get a better estimate, let $\delta$ be such that 
\begin{equation*}
\sum_{i=m-\delta+1}^m e^{\beta \cdot (t_i - t_m)} 
\geq \frac{1}{2} \sum_{i=1}^m e^{\beta \cdot (t_i - t_m)} 
\end{equation*}

That is, the most recent $\delta$ arrivals provide at least half the total weight.  This can be thought of as only having $\delta$ arrivals that are recent enough to matter.  Then we have the 
upper bound of $\tau_{\text{opt}, \mathcal{H}} \leq \frac{1}{\beta} \cdot \left(\log \alpha + \log \delta + 1 - \log \Big(\frac{c_{p}}{c_{cs}} - \lambda_{0}\Big)\right)$.  
\end{proof}

\section{Omitted Figures from Section 5.1}

\begin{figure*}[h!]
    \centering
    \begin{subfigure}[t]{0.32\textwidth}
        \centering
        \includegraphics[height=1.5in]{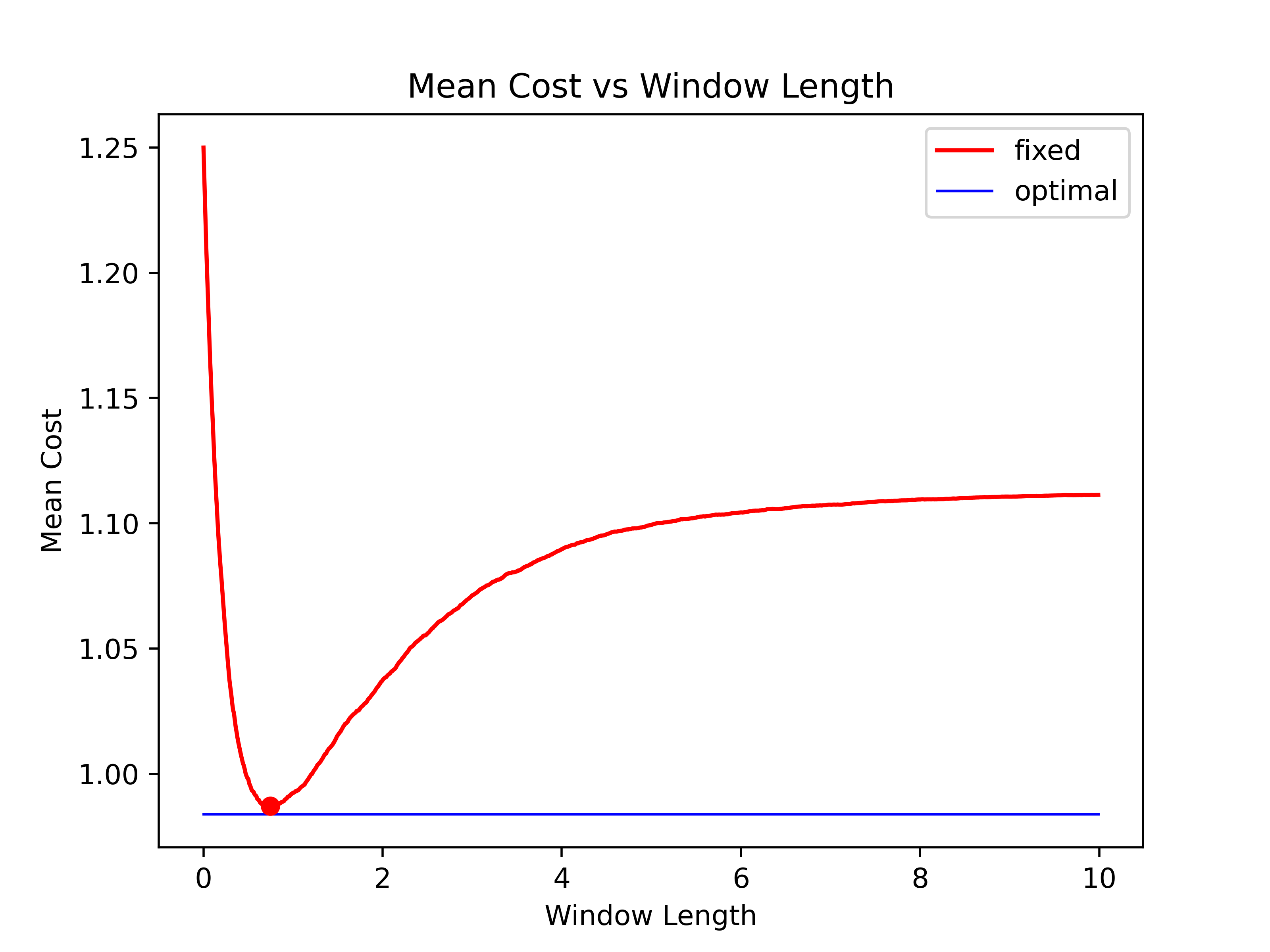}
        \caption{$\alpha = 0.8$}
    \end{subfigure}
    ~ 
    \begin{subfigure}[t]{0.32\textwidth}
        \centering
        \includegraphics[height=1.5in]{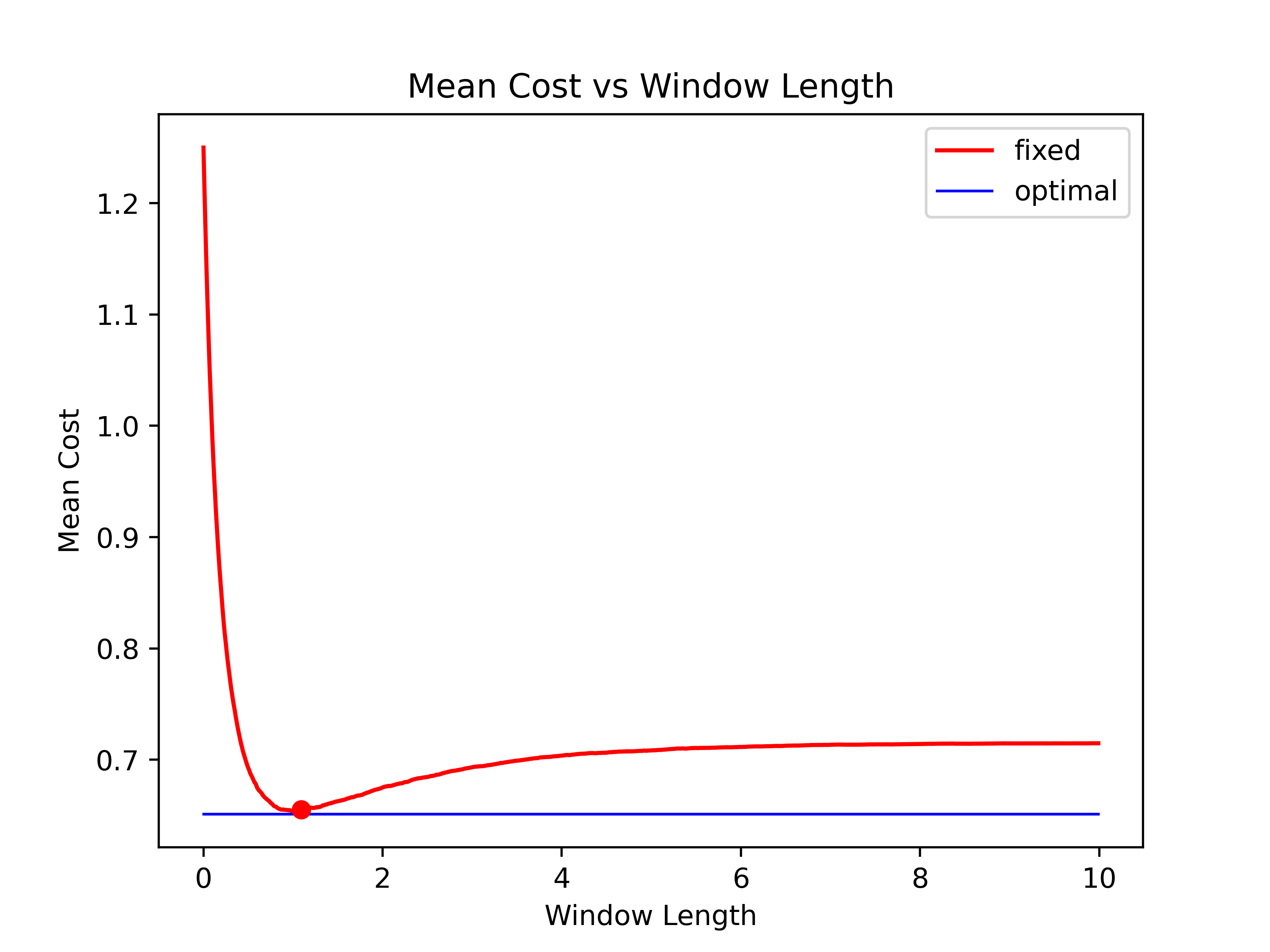}
        \caption{$\alpha = 1.4$}
    \end{subfigure}
    ~
    \begin{subfigure}[t]{0.32\textwidth}
        \centering
        \includegraphics[height=1.5in]{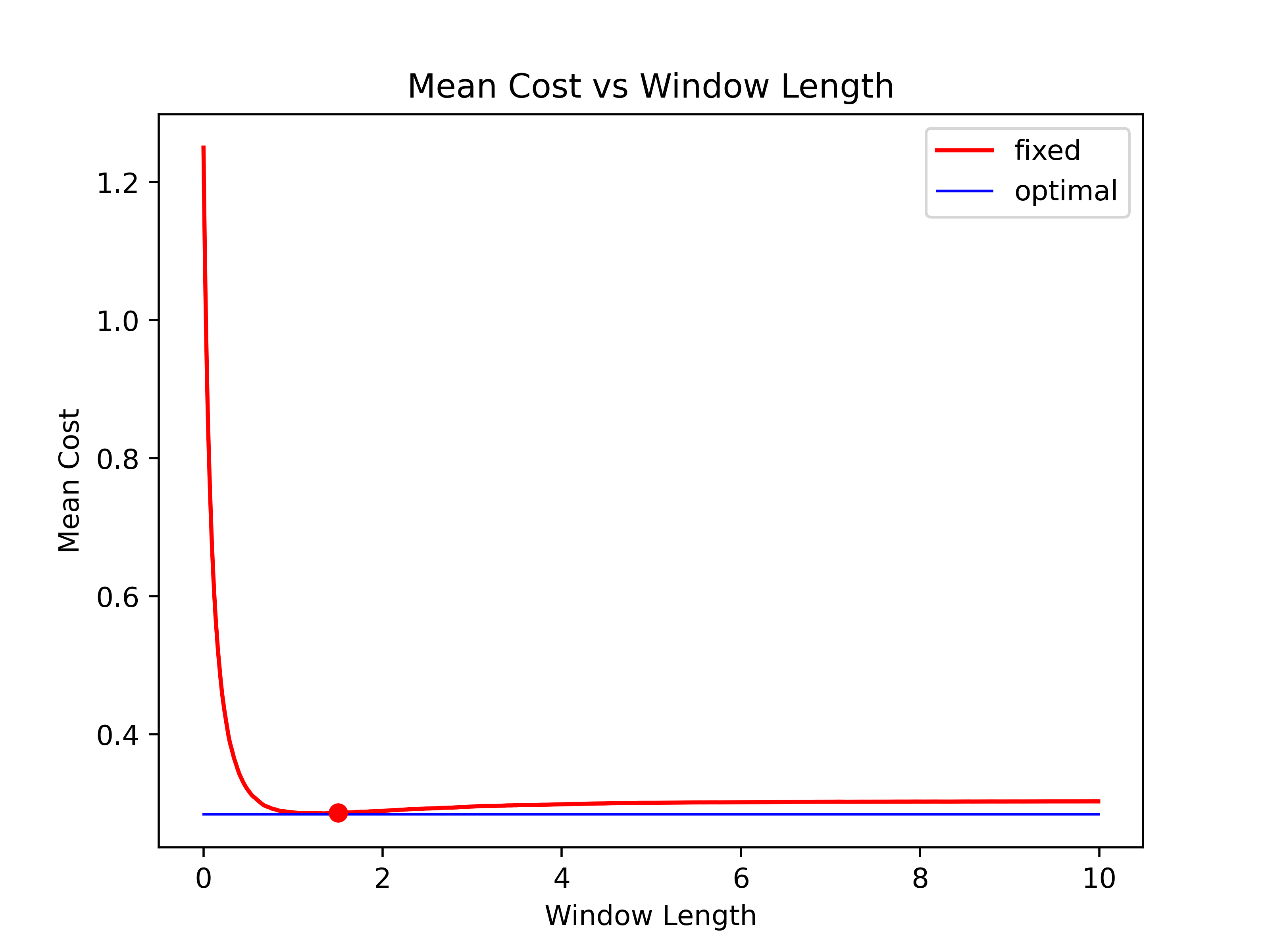}
        \caption{$\alpha = 2.0$}
    \end{subfigure}    
    \caption{Plots of average costs for policies when $\alpha$ is increased, given $\lambda_{0} = 0.6, \beta = 2.4, c_{p} = 1.0, c_{cs} = 1.25$}
    \label{cost_compare_alpha}
\end{figure*}

\begin{figure*}[h!]
    \centering
    \begin{subfigure}[t]{0.32\textwidth}
        \centering
        \includegraphics[height=1.5in]{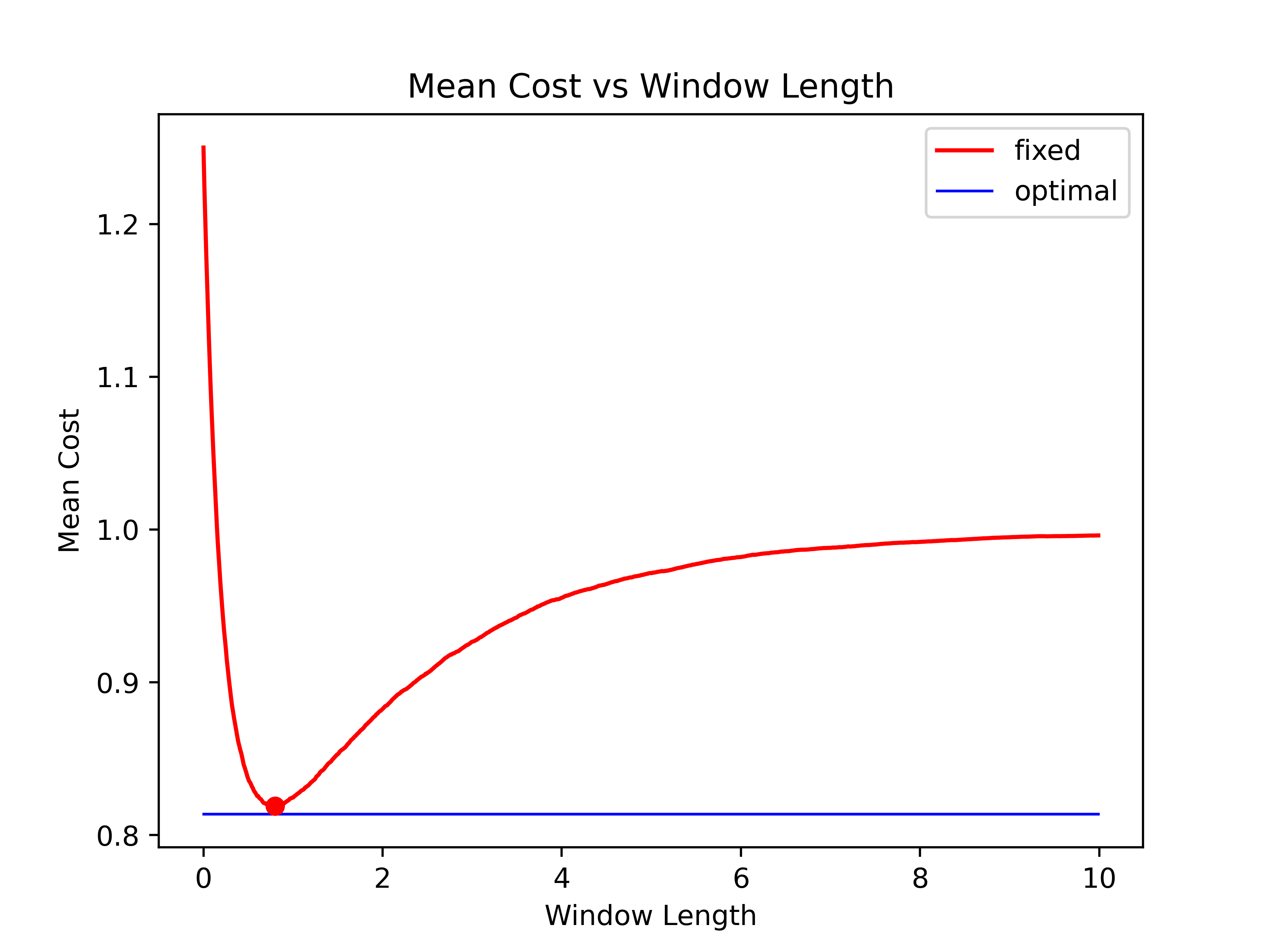}
        \caption{$\lambda_{0} = 0.5$}
    \end{subfigure}
    ~ 
    \begin{subfigure}[t]{0.32\textwidth}
        \centering
        \includegraphics[height=1.5in]{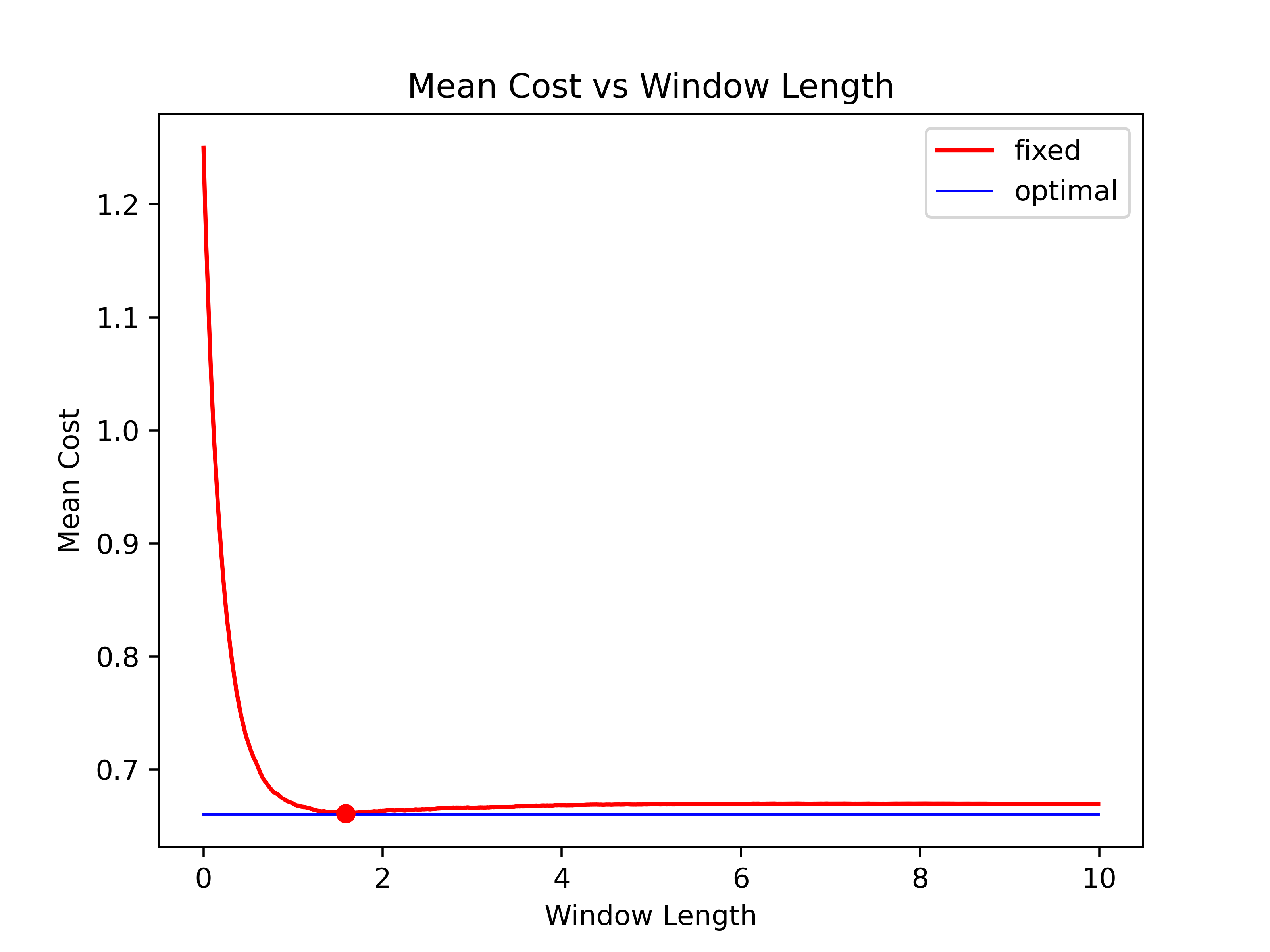}
        \caption{$\lambda_{0} = 0.75$}
    \end{subfigure}
    ~
    \begin{subfigure}[t]{0.32\textwidth}
        \centering
        \includegraphics[height=1.5in]{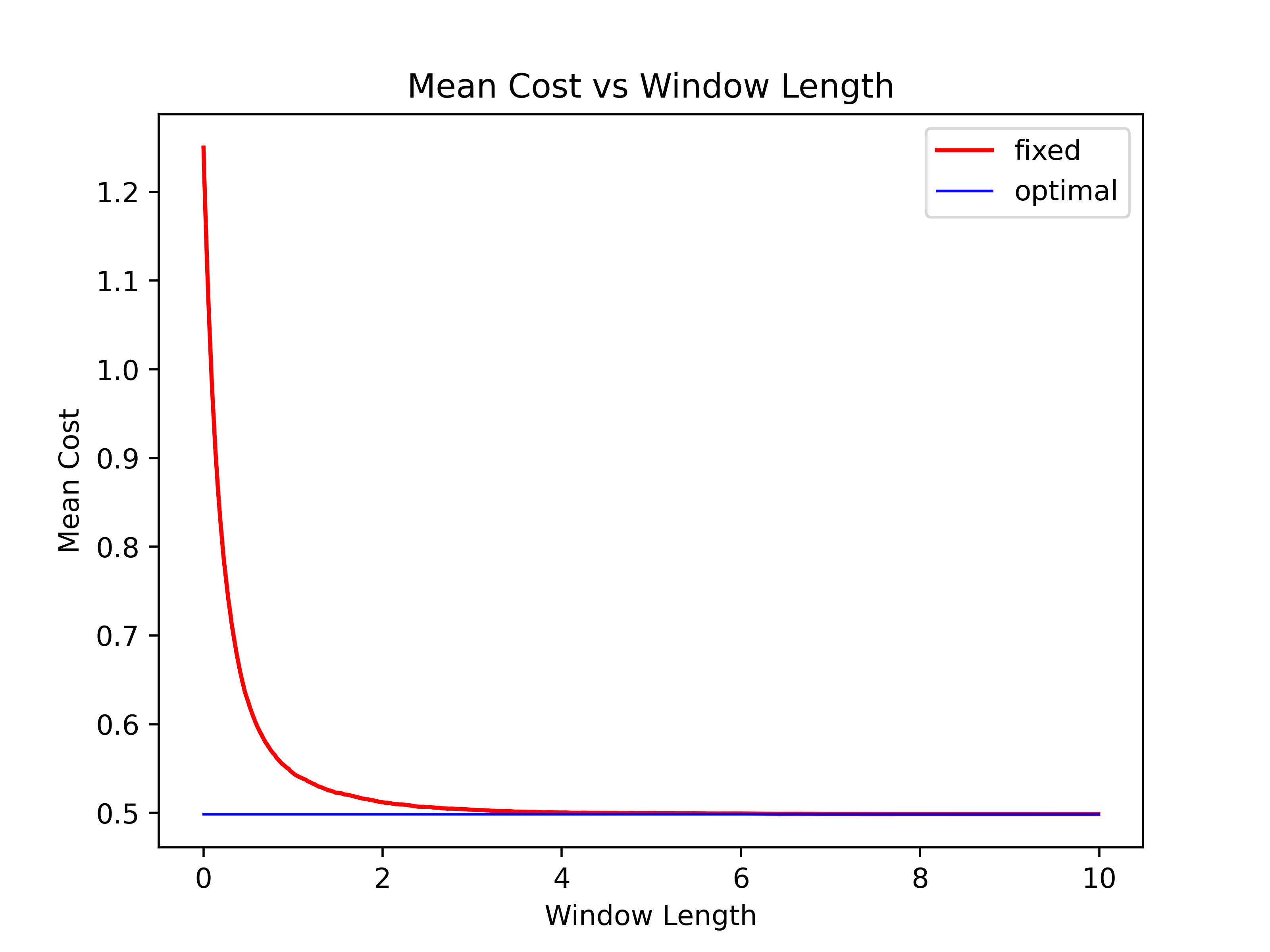}
        \caption{$\lambda_{0} = 1.0$}
    \end{subfigure}    
    \caption{Plots of average costs for policies when $\lambda_{0}$ is increased, given $\alpha = 1.2, \beta = 2.4, c_{p} = 1.0, c_{cs} = 1.25$}
    \label{cost_compare_lambda}
\end{figure*}

\begin{figure*}[h!]
    \centering
    \begin{subfigure}[t]{0.32\textwidth}
        \centering
        \includegraphics[height=1.5in]{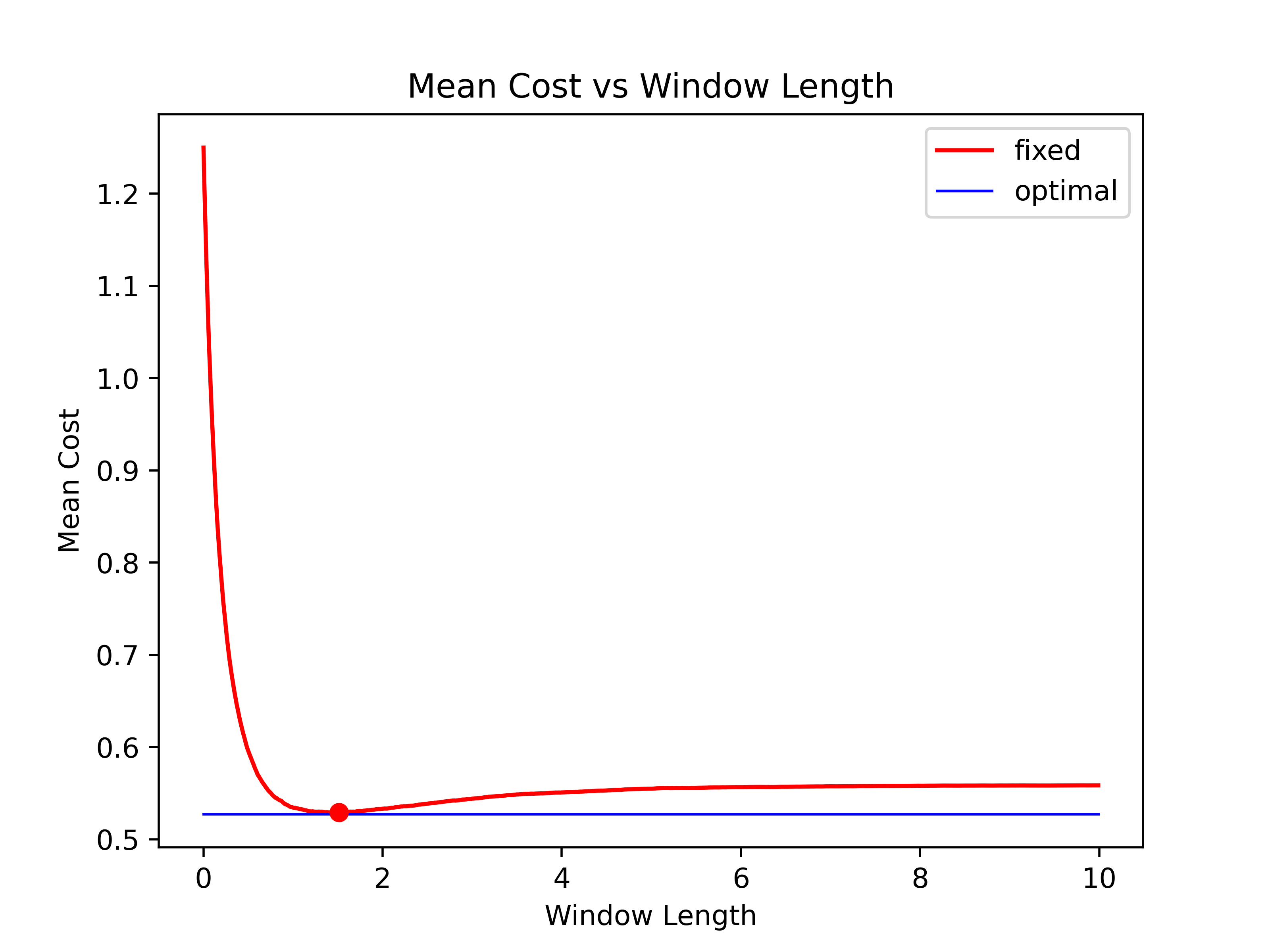}
        \caption{$\beta = 1.8$}
    \end{subfigure}
    ~ 
    \begin{subfigure}[t]{0.32\textwidth}
        \centering
        \includegraphics[height=1.5in]{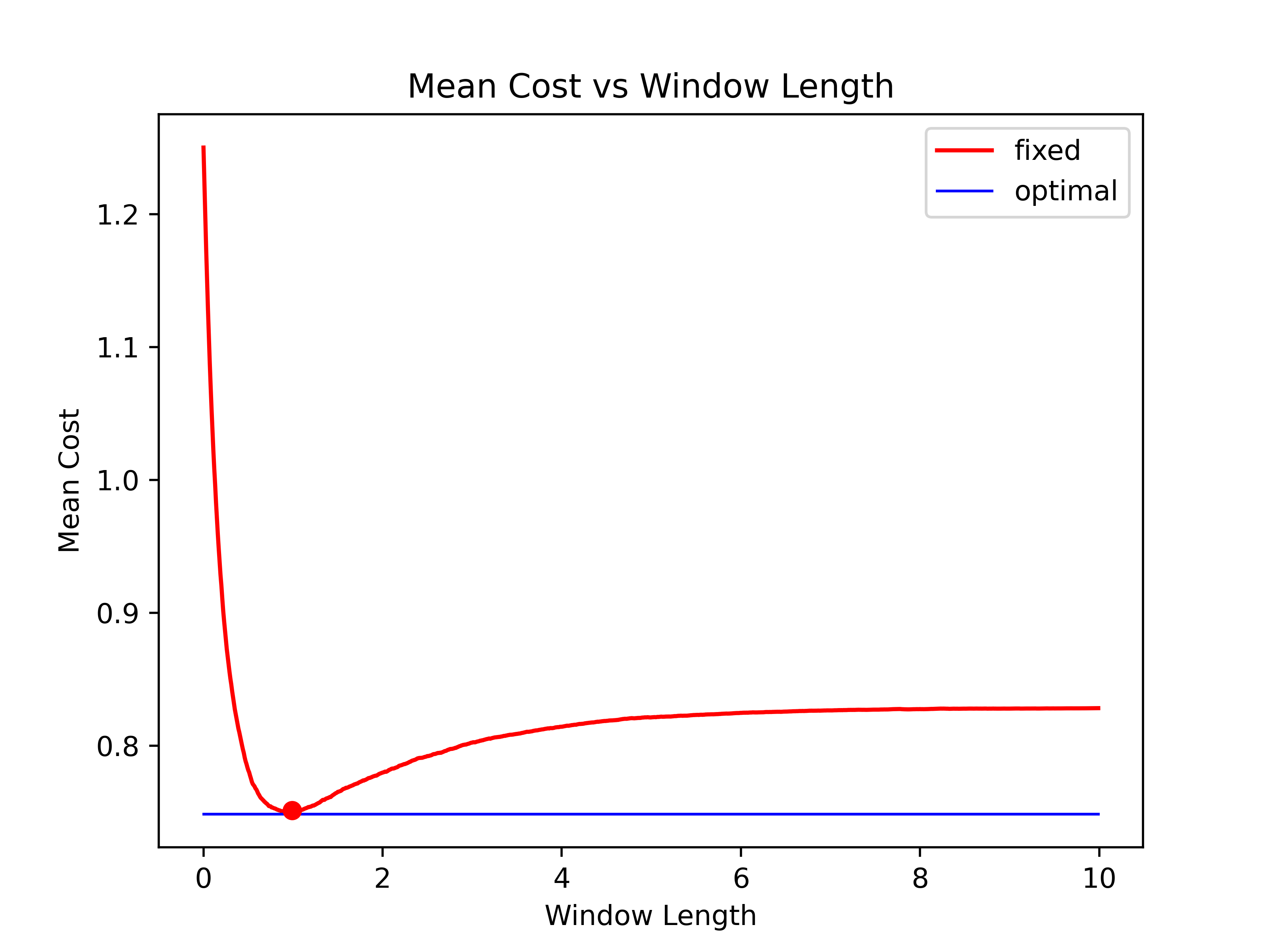}
        \caption{$\beta = 2.4$}
    \end{subfigure}
    ~
    \begin{subfigure}[t]{0.32\textwidth}
        \centering
        \includegraphics[height=1.5in]{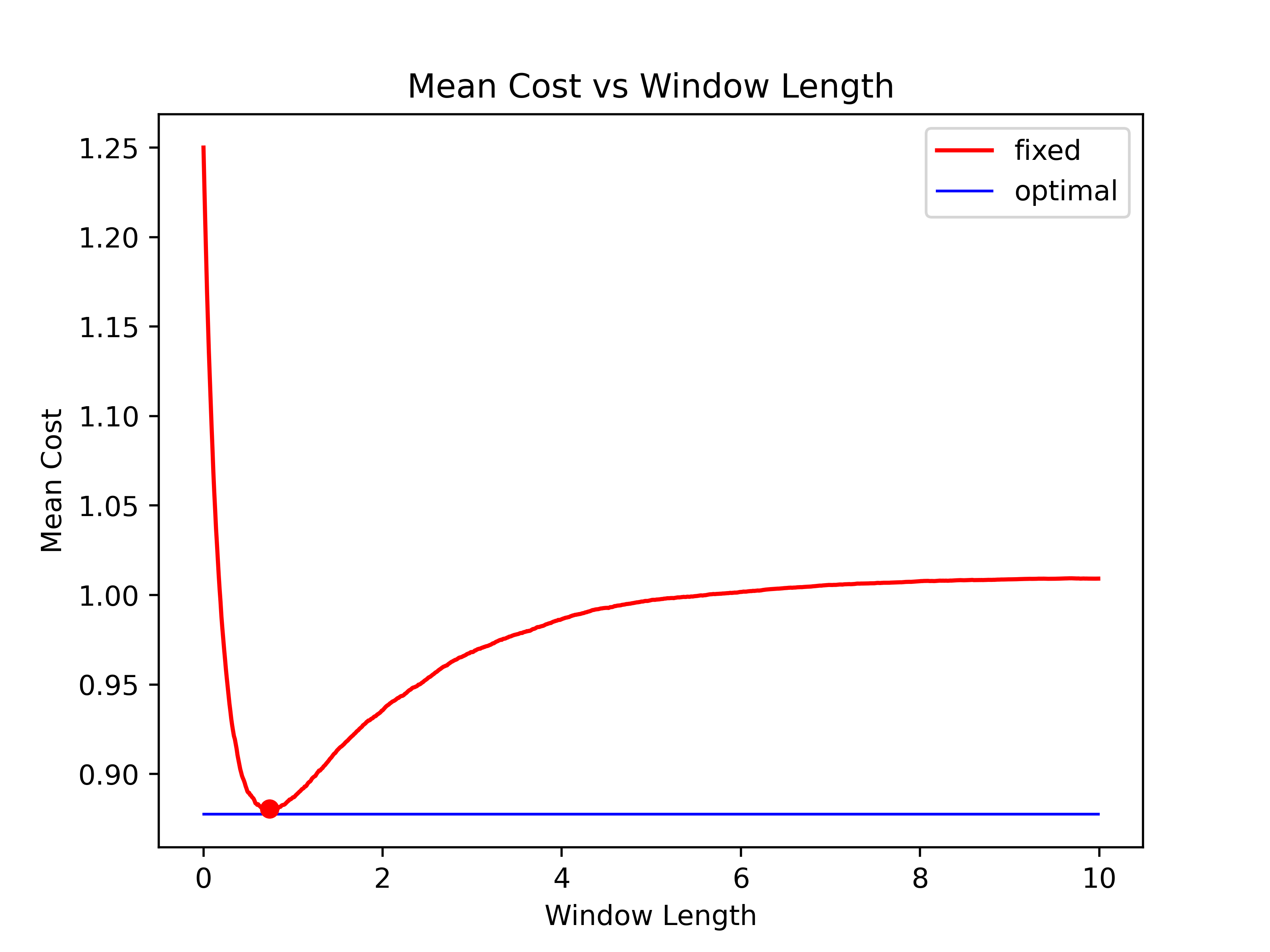}
        \caption{$\beta = 3.0$}
    \end{subfigure}    
    \caption{Plots of average costs for policies when $\beta$ is increased, given $\lambda_{0} = 0.6, \alpha = 1.2, c_{p} = 1.0, c_{cs} = 1.25$}
    \label{cost_compare_beta}
\end{figure*}

These additional figures demonstrate the robustness of the performance of Optimized-TTL with respect to a range of parameters. In them, we examine how the average costs behave with respect to the Hawkes process parameters $\lambda_{0}, \; \alpha, \text{and} \; \beta$ while holding the costs fixed. Figure \ref{cost_compare_alpha} shows the behavior of the average cost of the policies for different values of $\alpha$ of the Hawkes process. We see that as $\alpha$ increases, the average length of the optimal keep-alive window increases. This is because for higher values of $\alpha$, the intensity of the subsequent arrival will be larger making it larger keep-alive windows more desirable. This intuition can be made more precise with Corollary \ref{optimal_decreasing} since $g(x|\mathcal{H}_{m-1}) = (1 - F(x|\mathcal{H}_{m-1})) \cdot (c_{p} - c_{cs}\lambda(x|\mathcal{H}_{m-1}))$. Increasing $\alpha$, increases $\lambda(x|\mathcal{H}_{m-1})$ causing $g(x|\mathcal{H}_{m-1})$ to be more negative. Therefore, the point $\tau_{\text{opt}}$ where $g(x|\mathcal{H}_{m-1})$ changes sign from negative to positive is larger for a larger $\alpha$. The behavior of the average costs of the policies when $\lambda_{0}$ increases is similar to that of $\alpha$ as shown in Figure \ref{cost_compare_lambda}. From Figure \ref{cost_compare_beta} we see that as $\beta$ increases, the average length of the optimal keep-alive window decreases. The decay rate of the arrivals' influence is larger for a larger $\beta$ which makes shorter keep-alive windows more optimal. This connects to Corollary \ref{optimal_decreasing}, where a higher value of $\beta$ causes $g(x|\mathcal{H}_{m-1}) = (1 - F(x|\mathcal{H}_{m-1})) \cdot (c_{p} - c_{cs}\lambda(x|\mathcal{H}_{m-1}))$ to change from negative to positive earlier.

\section{Extension: worst-case guarantees for Hawkes processes}

We know from Theorem \ref{opt_policy}, that computing the optimal keep-alive policy requires the history $\mathcal{H}_{m-1}$ of previous $m-1$ invocations. As described in Section 4.2, the computational complexity of the optimal policy increases as the history of invocations increase. Hence, we propose history independent policies that do not require any information regarding past arrival requests. This problem is similar in spirit to the \textit{Ski Rental} problem in online algorithms, where the customer can buy an item for $\$$ B or rent the item for $\$$ R per the unit of time. There, a 2-approximation results from renting until the cost of buying has been paid in rental fees as if the input ends during the rental period the policy was optimal and otherwise buying immediately would have been optimal so the policy overpaid by a factor of 2.  Similarly, in our setting a fixed keep-alive policy can achieve a 2-approximation (Theorem \ref{fixed_policy}). This bound does not use any information about the parameters of the Hawkes process. In Theorem \ref{approx_policy}, we propose a history independent approximate policy that requires only the parameters of the Hawkes process (i.e. is independent of the history), and approximates the optimal cost by a factor of $ \displaystyle 1 + \left( \frac{1}{ \frac{c_{p}}{c_{cs}} \cdot \tau_{\text{opt}, \mathcal{H} = \phi} + 1} \right)^{\frac{1}{2}} $. Both results follow from the following lemma, which bounds the performance of arbitrary history independent keep-alive policies.

\begin{lemma} \label{simple_policy}
A policy with keep-alive window $\tau$ which does not depend on the history of arrivals of invocations is at least \quad $\displaystyle \max \Big\{\frac{c_{p} \cdot \tau}{c_{p} \cdot \tau_{\text{opt}, \mathcal{H} = \phi} +c_{cs}} + 1, 1 + \frac{c_{cs}}{c_{p} \cdot \tau} \Big\} \;$ approximation to the cost of the optimal policy $\tau_{\text{opt}, \mathcal{H}_{m-1}}$ for any history $\mathcal{H}_{m-1}$. 
\end{lemma}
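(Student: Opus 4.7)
The plan is to prove the bound pointwise for each realized inter-arrival time $x_m$ and then take conditional expectation against $f(\cdot\mid\mathcal{H}_{m-1})$. Fix a history $\mathcal{H}_{m-1}$ and write $\tau^{\star}=\tau_{\text{opt},\mathcal{H}_{m-1}}$. Any single-window keep-alive policy with window $W$ pays $c_{p}x_m$ when $x_m\le W$ and $c_{p}W+c_{cs}$ when $x_m>W$, so the per-realization cost is piecewise linear and non-decreasing in $x_m$. I would split the positive real line into four regions according to how $x_m$ sits relative to $\tau$ and $\tau^{\star}$ and, in each region, bound the ratio $\mathrm{cost}_{\tau}(x_m)/\mathrm{cost}_{\tau^{\star}}(x_m)$ by one of the two quantities inside the $\max$.

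When $x_m\le\min(\tau,\tau^{\star})$ both policies are warm and the ratio is exactly $1$. When $x_m>\max(\tau,\tau^{\star})$ both are cold and the ratio is $\frac{c_{p}\tau+c_{cs}}{c_{p}\tau^{\star}+c_{cs}}=1+\frac{c_{p}(\tau-\tau^{\star})}{c_{p}\tau^{\star}+c_{cs}}$; dropping $\tau^{\star}$ in the numerator and replacing $\tau^{\star}$ with the history-independent lower bound $\tau_{\text{opt},\mathcal{H}=\phi}$ in the denominator bounds this by the first argument of the $\max$. In the case $\tau<x_m\le\tau^{\star}$ the fixed policy pays a cold start while the optimum is still warm, so the ratio is $\frac{c_{p}\tau+c_{cs}}{c_{p}x_m}\le 1+\frac{c_{cs}}{c_{p}\tau}$, i.e.\ the second argument. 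In the symmetric case $\tau^{\star}<x_m\le\tau$ the ratio $\frac{c_{p}x_m}{c_{p}\tau^{\star}+c_{cs}}$ is at most $\frac{c_{p}\tau}{c_{p}\tau_{\text{opt},\mathcal{H}=\phi}+c_{cs}}$, which is strictly less than the first argument.

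Since the pointwise ratio is bounded by $\max\{\text{term 1},\text{term 2}\}$ for every $x_m$, taking conditional expectation given $\mathcal{H}_{m-1}$ (under which $\tau^{\star}$ is deterministic and can be pulled out of the expectation) transfers the inequality to the expected costs computed by Lemma~\ref{expect_cost}, which is exactly what the lemma asserts. The only non-routine ingredient is the step $\tau^{\star}\ge\tau_{\text{opt},\mathcal{H}=\phi}$, which I expect to be the main subtlety: $\tau_{\text{opt},\mathcal{H}=\phi}$ must be read as the history-independent lower bound on $\tau_{\text{opt},\mathcal{H}}$ established in the preceding corollary, not as a literal substitution of an empty history into the Hawkes closed form (which would produce $-\infty$). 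Once this convention is fixed, monotonicity of each ratio in $\tau^{\star}$ makes every case bound immediate.
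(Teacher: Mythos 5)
Your proposal is correct and follows essentially the same argument as the paper: a pointwise case analysis of the cost ratio according to where $x_m$ falls relative to $\tau$ and $\tau_{\text{opt},\mathcal{H}_{m-1}}$, using $1+\frac{c_{cs}}{c_p\tau}$ when the fixed window misses an arrival the optimal policy catches and $\frac{c_p\tau}{c_p\tau_{\text{opt},\mathcal{H}=\phi}+c_{cs}}+1$ when both (or only the optimal policy) go cold, justified by $\tau_{\text{opt},\mathcal{H}}\ge\tau_{\text{opt},\mathcal{H}=\phi}$. Your reading of $\tau_{\text{opt},\mathcal{H}=\phi}$ as the history-independent lower bound (the sum in the Hawkes closed form always contains the term for the most recent arrival, so it is at least $1$) matches the paper's convention, and the concluding expectation step, while harmless, is not needed since the lemma is a per-realization guarantee.
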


\begin{proof}
Given history of application invocations $\mathcal{H}_{m-1}$, we denote the length of the optimal keep-alive window  by $\tau_{\text{opt}, \mathcal{H}_{m-1}}$. Let $\tau$ denote the length of a history independent policy. We examine the upper bound of the ratio of the cost of the history independent policy to the cost of the optimal policy when the application is invoked at the $m$-th inter-arrival $x_{m}$, that is, $\displaystyle \frac{cost(x_{m}, \tau)}{cost(x_{m}, \tau_{\text{opt}, \mathcal{H}_{m-1}})}$, where $x_{m} = t_{m} - t_{m-1}$ is the length of the $m$-th inter-arrival. There are three possibilities when comparing keep-alive window $\tau$ with $\tau_{\text{opt}, \mathcal{H}_{m-1}}$. They are,
\begin{enumerate}
    \item $\tau \; = \; \tau_{\text{opt}, \mathcal{H}_{m-1}}$
    \item $\tau \;  < \; \tau_{\text{opt}, \mathcal{H}_{m-1}}$
    \item $\tau \;  > \; \tau_{\text{opt}, \mathcal{H}_{m-1}}$
\end{enumerate}

We examine the upper bound of the ratio of the cost of the history independent policy to the cost of the optimal policy for each case listed above.

\noindent \textbf{Case 1}: When $\tau = \; \tau_{\text{opt}, \mathcal{H}_{m-1}}$, both the history independent policy and the optimal policy have the same cost. That is , 
\begin{equation*}
    cost(x_{m}, \tau) \; = \; cost(x_{m}, \tau_{\text{opt}, \mathcal{H}_{m-1}})
\end{equation*}

\noindent \textbf{Case 2}: When $\tau \; < \; \tau_{\text{opt}, \mathcal{H}_{m-1}}$, the cost of the policies can be compared based on when the application invocation occurs.

\begin{itemize}
    \item When the application invocation occurs before the end of the history independent keep-alive window, that is, $x_{m} \leq \tau$,  then both policies encounter a warm start. Hence, both policies have the same cost. That is, 
    \begin{equation*}
    cost(x_{m}, \tau) \; = \; cost(x_{m}, \tau_{\text{opt}, \mathcal{H}_{m-1}}) = c_{p} \cdot x_{m}
    \end{equation*}
    \item When the application invocation is after the keep-alive window $\tau$, but before the end of the optimal policy, that is, $\tau < \; x_{m} \; \leq \; \tau_{\text{opt}, \mathcal{H}_{m-1}}$, then the history independent policy encounters a cold start whereas the optimal policy experiences a warm start. The ratio of the cost of the history independent policy to the cost of the optimal policy is expressed as follows, 
    \begin{align*}
    \frac{cost(x_{m}, \tau)}{cost(x_{m}, \tau_{\text{opt}, \mathcal{H}_{m-1}})} & = \; \frac{c_{p} \cdot \tau + c_{cs}}{c_{p} \cdot x_{m}} \\
    & \leq \; \frac{c_{p} \cdot \tau + c_{cs}}{c_{p} \cdot \tau} \quad \qquad \text{(6)}\\
    & = \; 1 + \frac{c_{cs}}{c_{p} \cdot \tau}
    \end{align*}
    In Equation (6) above we see that the minimum possible cost of the optimal policy in this scenario is when the application gets invoked just after the fixed keep-alive window, that is, when $x_{m}= \tau$.
    \item When the application invocation is after the optimal keep-alive policy, that is, $\tau \; < \; \tau_{\text{opt}, \mathcal{H}_{m-1}} \; <\; x_{m}$, then both policies encounter a cold start. Here, the cost of the optimal policy is larger than the cost of the history independent keep-alive policy because the cost of a policy when a cold start occurs is proportional to the length of the keep-alive window. That is,
    \begin{align*}
    \tau \; & \leq \; \tau_{\text{opt}, \mathcal{H}_{m-1}} \\
    \implies c_{p} \cdot \tau  + c_{cs} \; & \leq \; c_{p} \cdot \tau_{\text{opt}, \mathcal{H}_{m-1}} + c_{cs} \\
    \implies cost(x_{m}, \tau) \; & \; \leq \; cost(x_{m}, \tau_{\text{opt}, \mathcal{H}_{m-1}})
    \end{align*}
\end{itemize}

\noindent \textbf{Case 3}: When $\tau \; > \; \tau_{\text{opt}, \mathcal{H}_{m-1}}$, the cost of the policies can be compared based on the arrival of application invocations.
\begin{itemize}
    \item When the application is invoked before the end of the optimal keep-alive window, that is, $x_{m} \; \leq \; \tau_{\text{opt},\mathcal{H}_{m-1}}$, then both the policies encounter a warm start. Hence, both the policies have the same costs. That is,
    \begin{equation*}
    cost(x_{m}, \tau) = \; cost(x_{m}, \tau_{\text{opt}, \mathcal{H}_{m-1}}) = \; c_{p} \cdot x_{m} 
    \end{equation*}
    \item When the application invocation occurs after the optimal keep-alive window, that is, $x_{m} \; > \; \tau_{\text{opt}, \mathcal{H}_{m-1}}$, then the optimal policy experiences a cold start. The ratio of the cost of the history independent policy to the cost of the optimal policy is upper bounded when the history independent policy has a cold start. We compute the upper bound on the ratio of costs as follows,
    \begin{align*}
    \frac{cost(x_{m}, \tau)}{cost(x_{m}, \tau_{\text{opt},\mathcal{H}_{m-1}})} \;& \leq \; \frac{c_{p} \cdot \tau + c_{cs}}{c_{p} \cdot \tau_{\text{opt}, \mathcal{H}_{m-1}}+ c_{cs}} \\
    \; & \leq \; \frac{c_{p} \cdot \tau}{c_{p} \cdot \tau_{\text{opt}, \mathcal{H} = \phi} + c_{cs}} + 1  \quad \qquad \text{(7)}
    \end{align*}
    where in Equation (7) we have substituted $\mathcal{H} = \phi$ to compute the upper bound.
\end{itemize}

We have now established two separate upper bounds for cases 2 and 3. The approximation factor of the history independent policy with respect to the optimal policy for an arbitrary history is the maximum of the two upper bounds, that is,\\ $\displaystyle \max \Big\{\frac{c_{p} \cdot \tau}{ c_{p} \cdot \tau_{\text{opt}, \mathcal{H} = \phi} + c_{cs}} + 1, 1 + \frac{c_{cs}}{c_{p} \cdot \tau} \Big\}$.

\end{proof}

\subsection{Fixed Keep-Alive Policy}

We first show our Ski rental style result.

\begin{theorem}
The cost of the fixed keep-alive policy $\tau_{\text{fixed}}=c_{cs}/c_{p}$ is at most twice the cost of the optimal keep-alive policy $\tau_{\text{opt}, \mathcal{H}_{m-1}}$. That is, when a function is invoked at time $t_{m}$ after previous $m-1$ arrivals we have, $cost(x_{m}, \tau_{\text{fixed}}) \; \leq \;  2 \cdot cost(x_{m}, \tau_{\text{opt}, \mathcal{H}_{m-1}})$, where $x_{m} = t_{m} - t_{m-1}$ is the length of the $m$-th inter-arrival, and $c_{p} \cdot \tau_{\text{fixed}} = c_{cs}$.
\label{fixed_policy}
\end{theorem}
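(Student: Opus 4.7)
The plan is to invoke Lemma~\ref{simple_policy} with the specific choice $\tau = \tau_{\text{fixed}} = c_{cs}/c_p$ and verify that each of the two arguments inside the max collapses to at most $2$. Since Lemma~\ref{simple_policy} already establishes, for an \emph{arbitrary} history-independent keep-alive length $\tau$ and an \emph{arbitrary} history $\mathcal{H}_{m-1}$, the pointwise bound
\begin{equation*}
\frac{cost(x_m,\tau)}{cost(x_m,\tau_{\text{opt},\mathcal{H}_{m-1}})} \;\leq\; \max\!\left\{\frac{c_p\cdot\tau}{c_p\cdot\tau_{\text{opt},\mathcal{H}=\phi}+c_{cs}}+1,\; 1+\frac{c_{cs}}{c_p\cdot\tau}\right\},
\end{equation*}
the conceptual work is done; all that remains is an algebraic evaluation at the ski-rental breakeven value of $\tau$.

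Next I would substitute $\tau = c_{cs}/c_p$ into each term. The second term becomes $1 + c_{cs}/(c_p \cdot c_{cs}/c_p) = 1 + 1 = 2$ exactly. The first term becomes $1 + c_{cs}/(c_p\cdot\tau_{\text{opt},\mathcal{H}=\phi} + c_{cs})$, and because $c_p\cdot\tau_{\text{opt},\mathcal{H}=\phi} \geq 0$, the fraction is at most $1$, so this term is also at most $2$. Taking the max of the two quantities and rearranging gives $cost(x_m,\tau_{\text{fixed}}) \leq 2\cdot cost(x_m,\tau_{\text{opt},\mathcal{H}_{m-1}})$ pointwise in $x_m$ and for every history, which is precisely the claim.

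There is no real obstacle here beyond checking the two arithmetic substitutions: the hard part was already isolated inside Lemma~\ref{simple_policy}, whose case analysis over whether the invocation falls inside or outside each window supplies the pointwise bound. The only thing worth emphasizing in the write-up is why the choice $\tau_{\text{fixed}} = c_{cs}/c_p$ is natural: it equalizes the two arms of the max (up to the slack coming from $c_p\tau_{\text{opt},\mathcal{H}=\phi}$), which is the standard ski-rental balancing condition. Any other history-independent $\tau$ would strictly worsen at least one of the two regimes.
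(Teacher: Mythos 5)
Your proposal is correct and follows essentially the same route as the paper: both invoke Lemma~\ref{simple_policy} at $\tau = c_{cs}/c_p$ and observe that the second arm of the max equals exactly $2$ while the first is at most $2$ because $c_p\cdot\tau_{\text{opt},\mathcal{H}=\phi}\geq 0$. The only cosmetic difference is that the paper derives the choice $c_p\tau_{\text{fixed}}=c_{cs}$ by explicitly balancing the two (upper-bounded) arms, whereas you take that value as given and verify it, noting the balancing rationale as a remark.
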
 

While this has a simple direct proof, we illustrate how it follows from Lemma~\ref{simple_policy}.

\begin{proof}

From Lemma \ref{simple_policy}, we know that the approximation factor of the fixed policy with respect to the optimal policy is $\displaystyle \max \Big\{\frac{c_{p} \cdot \tau_{\text{fixed}}}{c_{p} \cdot \tau_{\text{opt}, \mathcal{H} = \phi}+ c_{cs}} + 1, 1 + \frac{c_{cs}}{c_{p} \cdot \tau_{\text{fixed}}} \Big\}$. The upper bound  of $\displaystyle \frac{c_{p} \cdot \tau_{\text{fixed}}}{ c_{p} \cdot \tau_{\text{opt}, \mathcal{H} = \phi}+ c_{cs}} + 1$ can further be reduced to $\displaystyle \frac{c_{p} \cdot \tau_{\text{fixed}}}{ c_{p} \cdot \tau_{\text{opt}, \mathcal{H} = \phi}+ c_{cs}} + 1 \; \leq \; \frac{c_{p} \cdot \tau_{\text{fixed}}}{c_{cs}} + 1$ by substituting $\tau_{\text{opt}, \mathcal{H} = \phi} = 0$ because a fixed policy should accommodate for any history independent policy.

The best length of the keep-alive window for the fixed policy is the length which minimizes the maximum of the two upper bounds on the ratio of the cost of the fixed policy and the optimal policy. Mathematically, the best length of the fixed policy is expressed as 
\begin{equation*}
    \arg \min_{\tau_{\text{fixed}}} \max \Big\{\frac{c_{p} \cdot \tau_{\text{fixed}}}{c_{cs}} + 1, 1 + \frac{c_{cs}}{c_{p} \cdot \tau_{\text{fixed}}} \Big\}
\end{equation*}

We obtain the length of the fixed policy by solving the above expression,

\begin{align*}
    \frac{c_{p} \cdot \tau_{\text{fixed}}}{c_{cs}} + 1 \; & = \; 1 + \frac{c_{cs}}{c_{p}\cdot \tau_{\text{fixed}}}\\
    \frac{c_{p} \cdot \tau_{\text{fixed}}}{c_{cs}} \; & = \;  \frac{c_{cs}}{c_{p}\cdot \tau_{\text{fixed}}} \\
    \big(c_{p} \cdot \tau_{\text{fixed}}\big)^{2} \; & = \; (c_{cs})^{2}\\
    c_{p} \cdot \tau_{\text{fixed}} \; & = \;  c_{cs} 
\end{align*}

Substituting this back to the upper bound on the ratio of the cost of the fixed policy to the cost of the optimal policy, we get

\begin{align*}
    \frac{cost(x_{m}, \tau_{\text{fixed}})}{cost(x_{m}, \tau_{\text{opt}, \mathcal{H}_{m-1}})} \; & \leq \; 1 + \frac{c_{p} \cdot \tau_{\text{fixed}}}{c_{cs}}\\
    & = \; 1  + \frac{c_{cs}}{c_{cs}} \\
    & = \; 2
\end{align*}
\end{proof}

\subsection{History Independent Keep-Alive Policies}

More generally, we can take advantage of Lemma~\ref{simple_policy} to achieve a tighter bound that makes use of the parameters of the Hawkes process only through the policy they induce given the empty history.

\begin{theorem} \label{approx_policy}
There exists a policy with keep-alive window $\tau_{\text{approx}}$ which does not require the history of arrivals of application invocations with its cost upper bounded by  a factor of $ \; 1 + \left(\frac{1}{ \frac{c_{p}}{c_{cs}} \cdot \tau_{\text{opt}, \mathcal{H} = \phi} + 1} \right)^{\frac{1}{2}}$ with respect to the cost of the optimal keep-alive policy $\tau_{\text{opt}, \mathcal{H}_{m-1}}$. In other words, for a given history of invocations $\mathcal{H}_{m-1}$, when the application invocation has an inter-arrival of length $x_{m}$,  $ \displaystyle \frac{cost(x_{m}, \tau_{\text{approx}})}{cost(x_{m}, \tau_{\text{opt}, \mathcal{H}_{m-1}})} \; \leq 1 + \left( \frac{1}{ \frac{c_{p}}{c_{cs}} \cdot \tau_{\text{opt}, \mathcal{H}_{m-1} = \phi} + 1} \right)^{\frac{1}{2}} \leq 2$.
\end{theorem}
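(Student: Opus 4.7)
The plan is to obtain $\tau_{\text{approx}}$ as the minimizer of the history-independent approximation factor supplied by Lemma~\ref{simple_policy}. Recall that for any fixed window $\tau$ (independent of $\mathcal{H}_{m-1}$), that lemma bounds the ratio $cost(x_m,\tau)/cost(x_m,\tau_{\text{opt},\mathcal{H}_{m-1}})$ by
\[
\max\Bigl\{\tfrac{c_p \cdot \tau}{c_p \cdot \tau_{\text{opt},\mathcal{H}=\phi} + c_{cs}} + 1,\; 1 + \tfrac{c_{cs}}{c_p \cdot \tau}\Bigr\}.
\]
The first argument is increasing in $\tau$ and the second is decreasing in $\tau$, so the minimum of the max is achieved at the unique $\tau$ where the two expressions are equal.

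The next step is to solve this equation. Setting the two bounds equal leads to $(c_p \tau)^2 = c_{cs}(c_p \cdot \tau_{\text{opt},\mathcal{H}=\phi} + c_{cs})$, from which
\[
\tau_{\text{approx}} \;=\; \frac{1}{c_p}\sqrt{c_{cs}\bigl(c_p \cdot \tau_{\text{opt},\mathcal{H}=\phi} + c_{cs}\bigr)}.
\]
Plugging this back into either branch of the max gives
\[
\frac{cost(x_m,\tau_{\text{approx}})}{cost(x_m,\tau_{\text{opt},\mathcal{H}_{m-1}})}
\;\le\; 1 + \frac{c_{cs}}{c_p \cdot \tau_{\text{approx}}}
\;=\; 1 + \sqrt{\frac{c_{cs}}{c_p \cdot \tau_{\text{opt},\mathcal{H}=\phi} + c_{cs}}}
\;=\; 1 + \left(\frac{1}{\tfrac{c_p}{c_{cs}}\tau_{\text{opt},\mathcal{H}=\phi} + 1}\right)^{\!1/2},
\]
which is the claimed bound. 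Finally, since $\tau_{\text{opt},\mathcal{H}=\phi}\ge 0$ the quantity under the square root is at most $1$, giving the uniform upper bound of $2$.

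The only non-routine piece is invoking Lemma~\ref{simple_policy} correctly: the improvement over the generic ski-rental-style bound from Theorem~\ref{fixed_policy} comes precisely from the fact that the Case~3 bound in the proof of Lemma~\ref{simple_policy} retains the term $c_p \cdot \tau_{\text{opt},\mathcal{H}=\phi}$ in the denominator rather than discarding it (as in Theorem~\ref{fixed_policy}, which had to cover \emph{all} instances and therefore substituted $\tau_{\text{opt},\mathcal{H}=\phi}=0$). So the main ``work'' is really bookkeeping: identify that the denominator $c_p\cdot\tau_{\text{opt},\mathcal{H}=\phi}+c_{cs}$ depends only on the Hawkes parameters and the costs (hence is available to the caching policy), then minimize the max. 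No genuine obstacle arises; the proof is a one-dimensional convex-style balancing argument on top of Lemma~\ref{simple_policy}.
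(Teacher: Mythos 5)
Your proposal is correct and follows essentially the same route as the paper: invoke Lemma~\ref{simple_policy}, balance the two branches of the max to solve for $\tau_{\text{approx}} = \bigl(\tfrac{c_{cs}}{c_p}(\tau_{\text{opt},\mathcal{H}=\phi} + \tfrac{c_{cs}}{c_p})\bigr)^{1/2}$, and substitute back to obtain the stated ratio and the uniform bound of $2$. Your added observation that the minimum of the max occurs at the crossing point (one branch increasing, the other decreasing in $\tau$) makes explicit a step the paper leaves implicit, but the argument is otherwise identical.
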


\begin{proof}

From Lemma \ref{simple_policy}, we know that the approximation factor of the approximate policy with respect to the optimal policy is $\displaystyle \max \Big\{\frac{c_{p} \cdot \tau_{\text{approx}}}{c_{p} \cdot \tau_{\text{opt}, \mathcal{H} = \phi}+ c_{cs}} + 1, 1 + \frac{c_{cs}}{c_{p} \cdot \tau_{\text{approx}}} \Big\}$. The best length of the keep-alive window of the approximate policy would minimize the maximum of the upper bounds of the ratio of the costs of the approximate policy and the optimal policy. Mathematically, the best approximate policy keep-alive window is expressed as,
\begin{equation*}
    \arg \min_{\tau_{\text{approx}}} \max \Big\{\frac{c_{p} \cdot \tau_{\text{approx}}}{c_{p} \cdot \tau_{\text{opt}, \mathcal{H} = \phi} +c_{cs}} + 1, 1 + \frac{c_{cs}}{c_{p} \cdot \tau_{\text{approx}}} \Big\}
\end{equation*}

\noindent We can obtain the length of the approximate keep-alive window by solving the above expression. 
\begin{align*}
    \frac{c_{p} \cdot \tau_{\text{approx}}}{c_{p} \cdot \tau_{\text{opt}, \mathcal{H} = \phi} + c_{cs}} + 1  & =  1 + \frac{c_{cs}}{c_{p} \cdot \tau_{\text{approx}}} \\
    \frac{c_{p} \cdot \tau_{\text{approx}}}{c_{p} \cdot \tau_{\text{opt}, \mathcal{H} = \phi} + c_{cs}}   & =  \frac{c_{cs}}{c_{p} \cdot \tau_{\text{approx}}} \\
    (c_{p} \cdot \tau_{\text{approx}})^{2} & = c_{cs} \cdot (c_{p} \cdot \tau_{\text{opt}, \mathcal{H} = \phi} + c_{cs}) \\
    \tau_{\text{approx}} & = \left( \frac{c_{cs}}{c_{p}} \cdot \Big( \tau_{\text{opt}, \mathcal{H} = \phi} + \frac{c_{cs}}{c_{p}} \Big) \right)^{\frac{1}{2}} \\
\end{align*}

\noindent Substituting $\tau_{\text{approx}}$ in the expression for the upper bound of the ratio of the cost of the approximate policy to the cost of the optimal policy, we get

\begin{align*}
    \frac{cost(x_{m}, \tau_{\text{approx}})}{cost(x_{m}, \tau_{\text{opt}, \mathcal{H}_{m-1}})} & = \;  1 + \frac{c_{p} \cdot \tau_{\text{approx}}}{c_{p} \cdot \tau_{\text{opt}, \mathcal{H} = \phi} + c_{cs}} \\
    & = \;  1 + \frac{\tau_{\text{approx}}}{ \tau_{\text{opt}, \mathcal{H} = \phi} + \displaystyle \frac{c_{cs}}{c_{p}}} \\
    & = \;  1 + \frac{\displaystyle \left( \frac{c_{cs}}{c_{p}} \cdot \Big( \tau_{\text{opt}, \mathcal{H} = \phi} + \frac{c_{cs}}{c_{p}} \Big) \right)^{\frac{1}{2}}}{ \tau_{\text{opt}, \mathcal{H} = \phi} + \displaystyle \frac{c_{cs}}{c_{p}}}\\[6 pt]
    & = \; 1 + \left( \frac{c_{cs}}{c_{p} \cdot \tau_{\text{opt}, \mathcal{H} = \phi} + c_{cs}} \right)^{\frac{1}{2}} \\
    & = \; 1 + \left( \frac{1}{ \displaystyle \frac{c_{p}}{c_{cs}} \cdot \tau_{\text{opt}, \mathcal{H} = \phi} + 1} \right)^{\frac{1}{2}} \\
    & \leq \; 1 + 1 = 2 \\
\end{align*}

\end{proof}

\subsection{Application to Poisson and Hawkes Processes}
As previously observed, the fixed policy from Theorem~\ref{fixed_policy} is independent of the process and so has a keep alive window of $\tau_{\text{fixed}}=c_{cs}/c_{p}$ for both Poisson and Hawkes processes. The behavior of $\tau_{\text{approx}}$
from Theorem~\ref{approx_policy} is more interesting.  For Poisson processes, we know that $\tau_{\text{opt}, \mathcal{H} = \phi}$ is either 0 or $\infty$.  In the former case, $\tau_{\text{approx}} = \tau_{\text{fixed}}$ and the approximation ratio of 2 is tight.  (Consider any input where $x_{m} > c_{p} / c_{cs}$.)  In the latter case however, $\tau_{\text{approx}} = \infty =  \tau_{\text{opt}}$ and so the approximation is 1.

For Hawkes processes, the length of the keep-alive window for the approximate policy is, 
\begin{align*}
    \tau_{\text{approx}}  = \left( \frac{c_{cs}}{c_{p}} \cdot \Big( \tau_{\text{opt}, \mathcal{H} = \phi} + \frac{c_{cs}}{c_{p}} \Big) \right)^{\frac{1}{2}} 
\end{align*}

From the more general expression for $\tau_{\text{opt}, \mathcal{H}}$ for Hawkes processes,
\begin{align*}
    \tau_{\text{opt}, \mathcal{H} = \phi} = \frac{1}{\beta} \cdot \Big( \log \alpha - \log \big( \frac{c_{p}}{c_{cs}} - \lambda_{0}\big) \Big)
\end{align*}

Combining these, we have 
\begin{align*}
    \tau_{\text{approx}}  = \left( \frac{c_{cs}}{c_{p}} \cdot \Bigg( \frac{1}{\beta} \cdot \Big( \log \alpha - \log \big( \frac{c_{p}}{c_{cs}} - \lambda_{0}\big) \Big) + \frac{c_{cs}}{c_{p}} \Bigg) \right)^{\frac{1}{2}} 
\end{align*}
This illustrates how $\tau_{\text{opt}, \mathcal{H} = \phi}$ implicitly brings the parameters of the Hawkes process into $\tau_{\text{approx}}$.

\subsection{Performances on Simulated Hawkes Processes}

We present similar simulations from before with two additional policies included (approximate policy and fixed policy of length $c_{cs}$).   Generally, they demonstrate the conservative approach these policies take to achieve their worst case guarantees.  In general, both perform worse than both the optimal policy (blue) and best fixed policy (red).  The relative performance of the two policies is not consistent, with each better in some cases.  The gap between the yellow line and the blue line is what provided room for the improvement of Optimal-TTL policy over Fixed policy.


\begin{figure*}[h!]
    \centering
    \begin{subfigure}[t]{0.45\textwidth}
        \centering
        \includegraphics[height=2.0in]{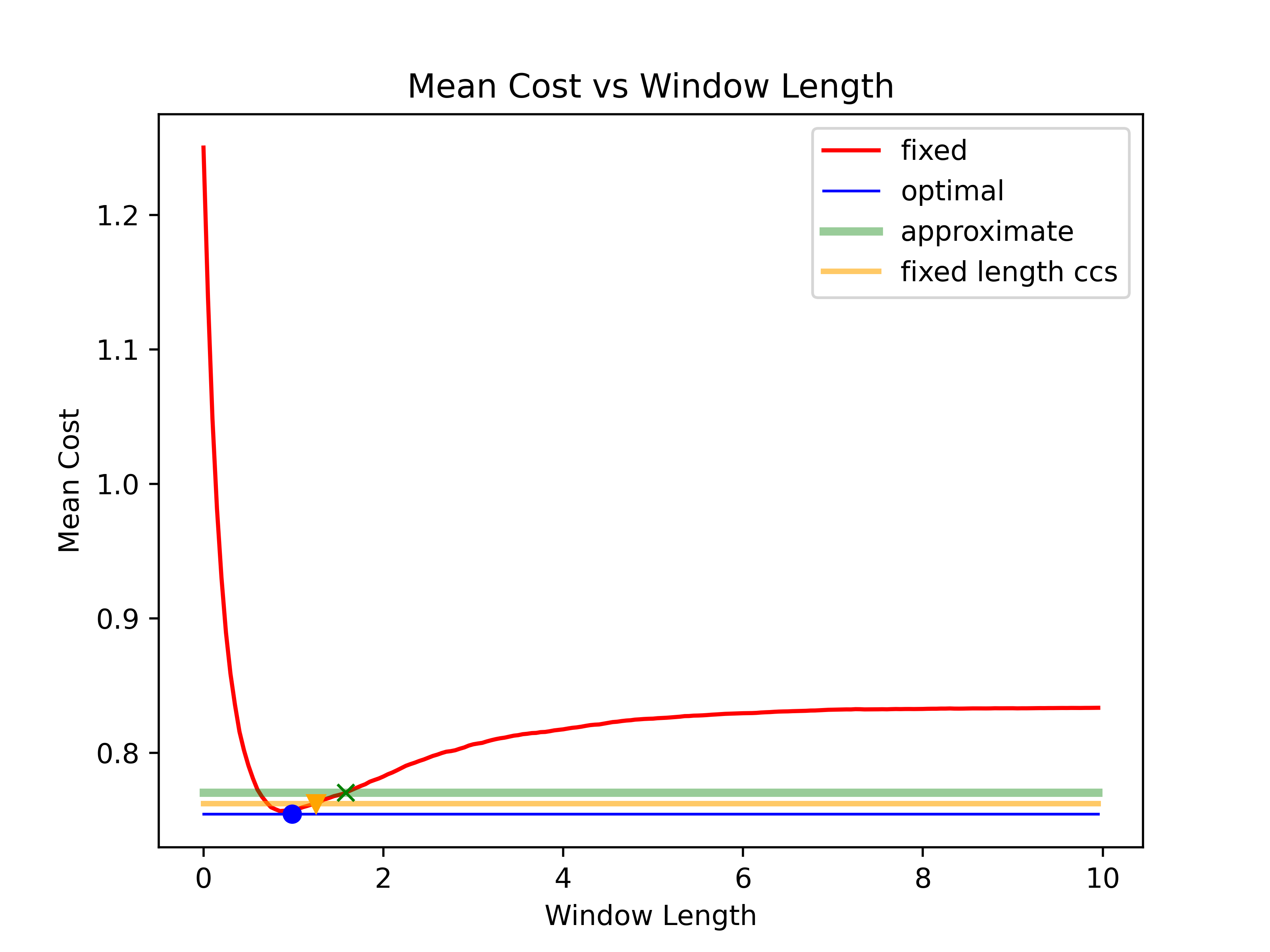}
        \caption{$\lambda_{0} = 0.6, \alpha = 1.2, \beta = 2.4, c_{p} = 1.0, c_{cs} = 1.25$}
    \end{subfigure}
    ~ 
    \begin{subfigure}[t]{0.5\textwidth}
        \centering
        \includegraphics[height=2.0in]{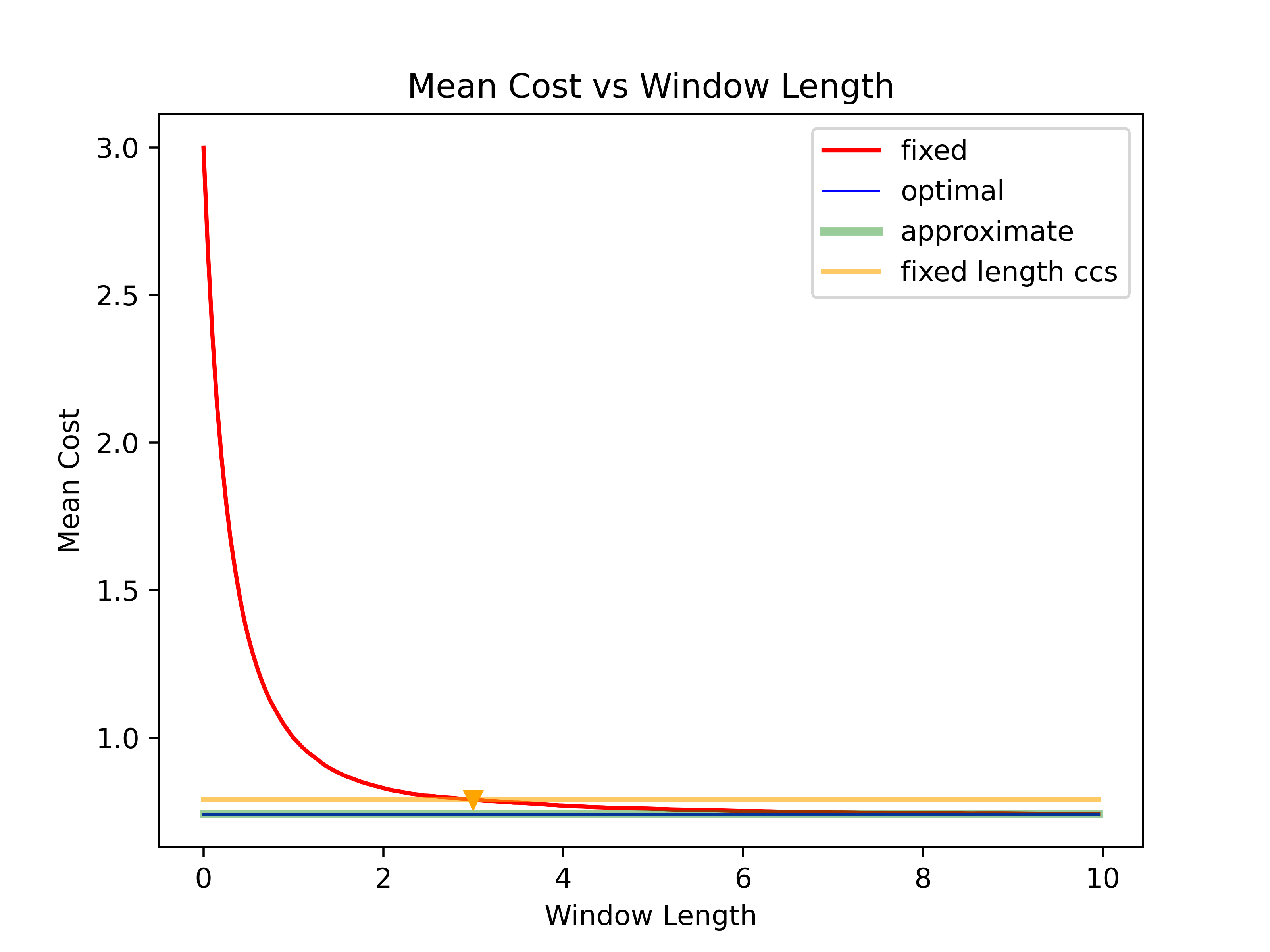}
        \caption{$\lambda_{0} = 0.45, \alpha = 0.8, \beta = 1.2, c_{p} = 1.0, c_{cs} = 3.0$}
    \end{subfigure}
    \caption{Plots of average cost comparisons between different policies for cases where $c_{p} \; \leq c_{cs}$}
    \label{cost_compare_sim1b}
\end{figure*}

\begin{figure*}[h!]
    \centering
    \begin{subfigure}[t]{0.45\textwidth}
        \centering
        \includegraphics[height=2.0in]{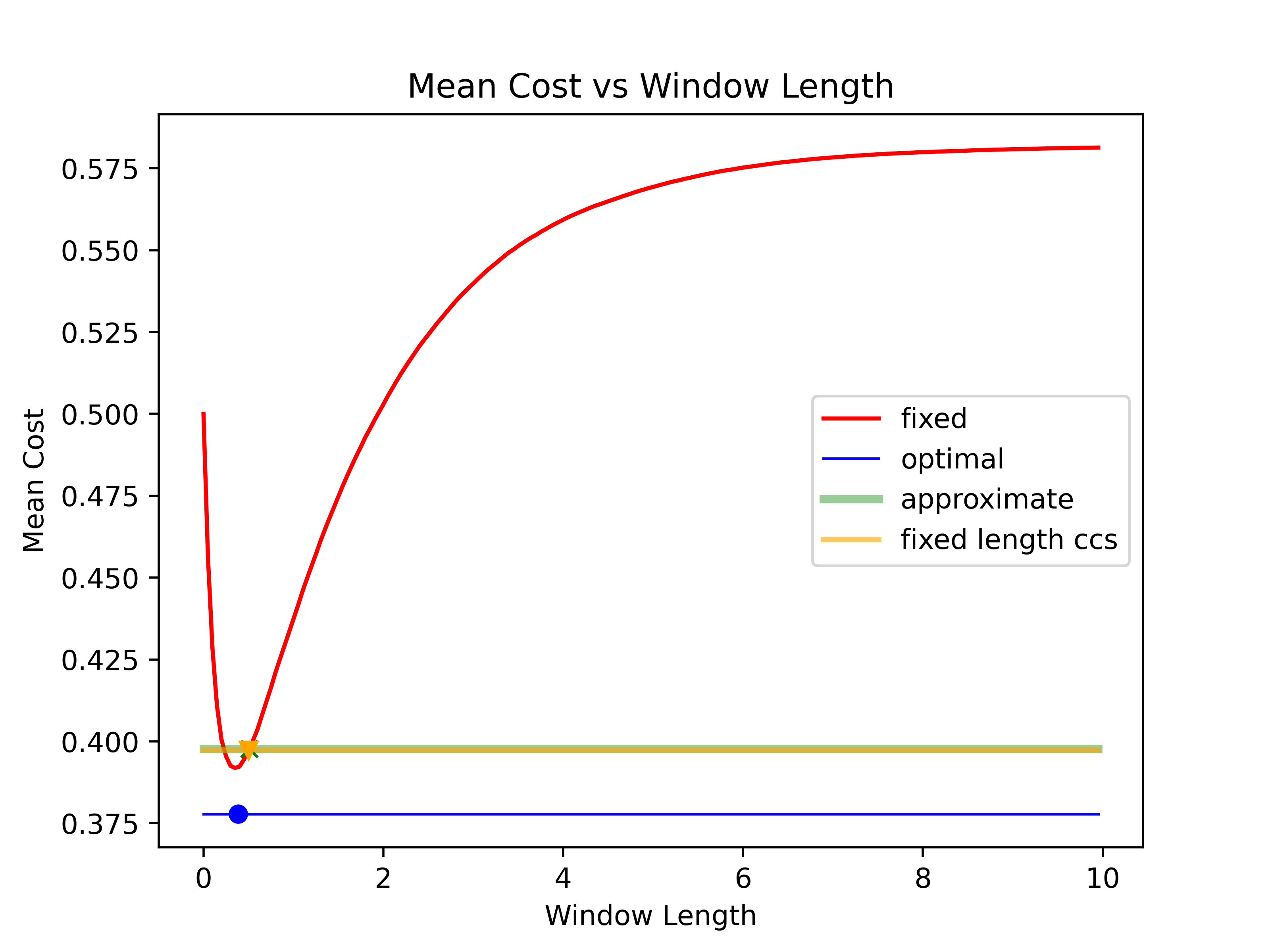}
        \caption{$\lambda_{0} = 0.65, \alpha = 1.4, \beta = 2.2, c_{p} = 1.0, c_{cs} = 0.5$}
    \end{subfigure}
    ~ 
    \begin{subfigure}[t]{0.5\textwidth}
        \centering
        \includegraphics[height=2.0in]{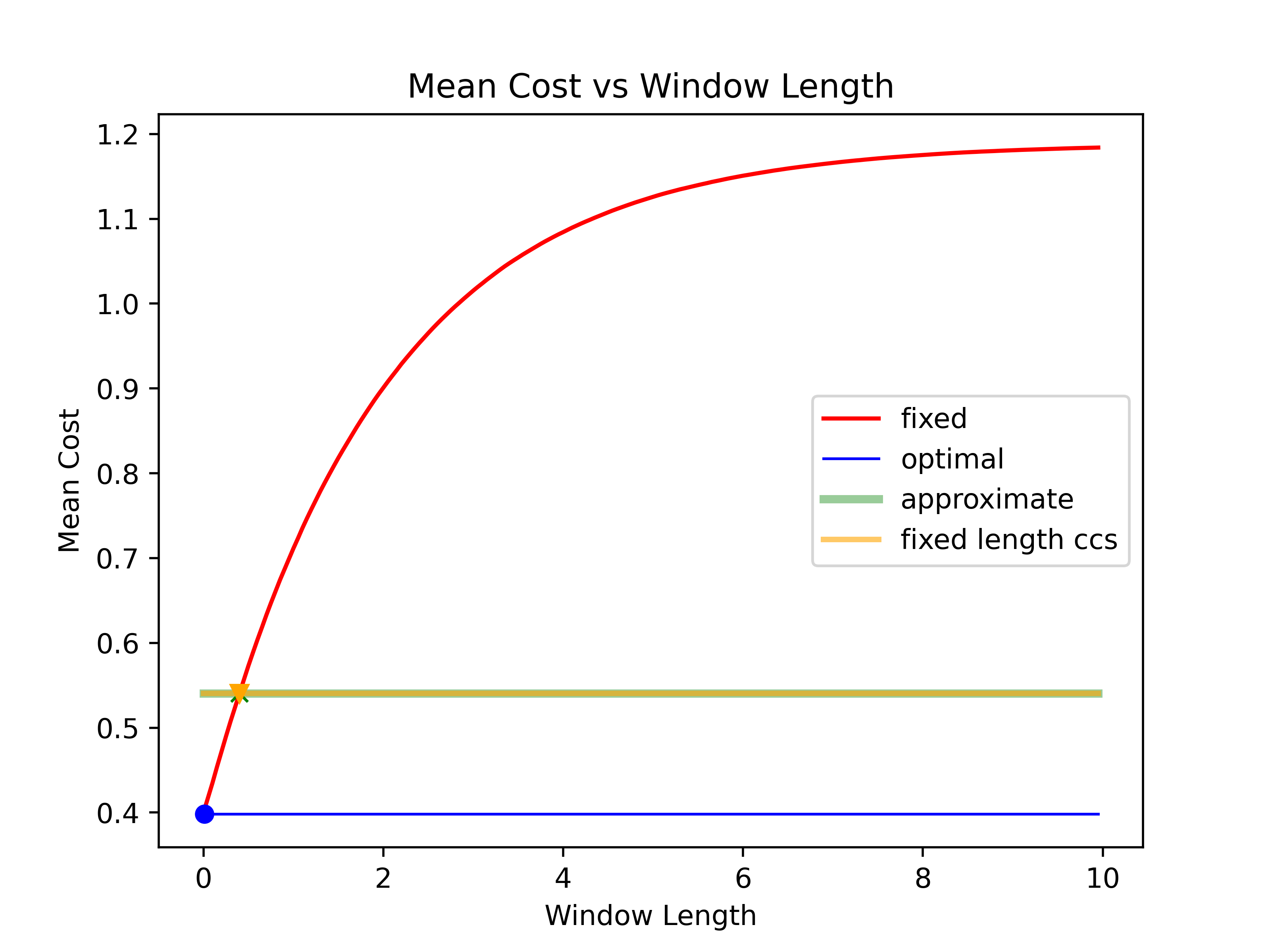}
        \caption{$\lambda_{0} = 0.5, \alpha = 0.6, \beta = 1.5, c_{p} = 1.0, c_{cs} = 0.4$}
    \end{subfigure}
    \caption{Plots of average cost comparisons between different policies for cases where $c_{p} \; \geq c_{cs}$}
    \label{cost_compare_sim2b}
\end{figure*}

As Figures \ref{cost_compare_sim1b}  and \ref{cost_compare_sim2b} illustrate, $\tau_{\text{approx}}$ is always more conservative than $\tau_{\text{fixed}}$ in that it chooses a weakly longer window length (which is is how it achieves its stronger worst case performance guarantee).  Despite this, their average performance is often quite similar.  In some situations, like Figure \ref{cost_compare_sim1b} (a), both policies are excessively conservative and so the extra conservatism of $\tau_{\text{approx}}$ causes it to perform worse.  This effect is bounded however, because in situations like Figure \ref{cost_compare_sim2b} (b) where $\tau_{\text{opt}, \mathcal{H} = \phi}$ is close to zero they become essentially the same policy.  In contrast, when the optimal window length is long, like Figure \ref{cost_compare_sim1b} (b), $\tau_{\text{approx}}$ performs better.  Again however the effect is small, this time because when optimal window lengths are relatively long it is typically unlikely that it will actually be a long time until the next arrival.

\begin{figure*}[h!]
    \centering
    \begin{subfigure}[t]{0.32\textwidth}
        \centering
        \includegraphics[height=1.5in]{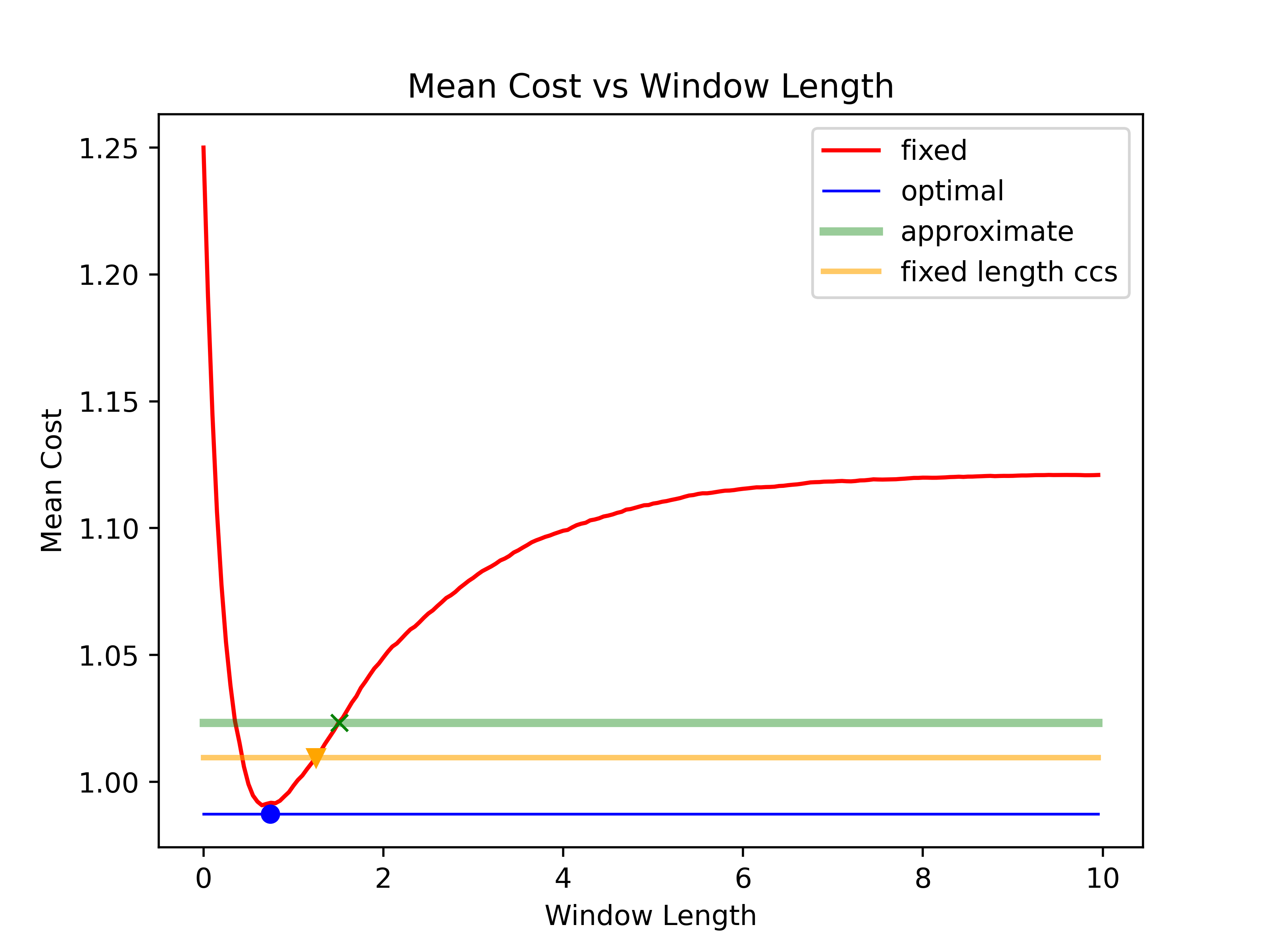}
        \caption{$\alpha = 0.8$}
    \end{subfigure}
    ~ 
    \begin{subfigure}[t]{0.32\textwidth}
        \centering
        \includegraphics[height=1.5in]{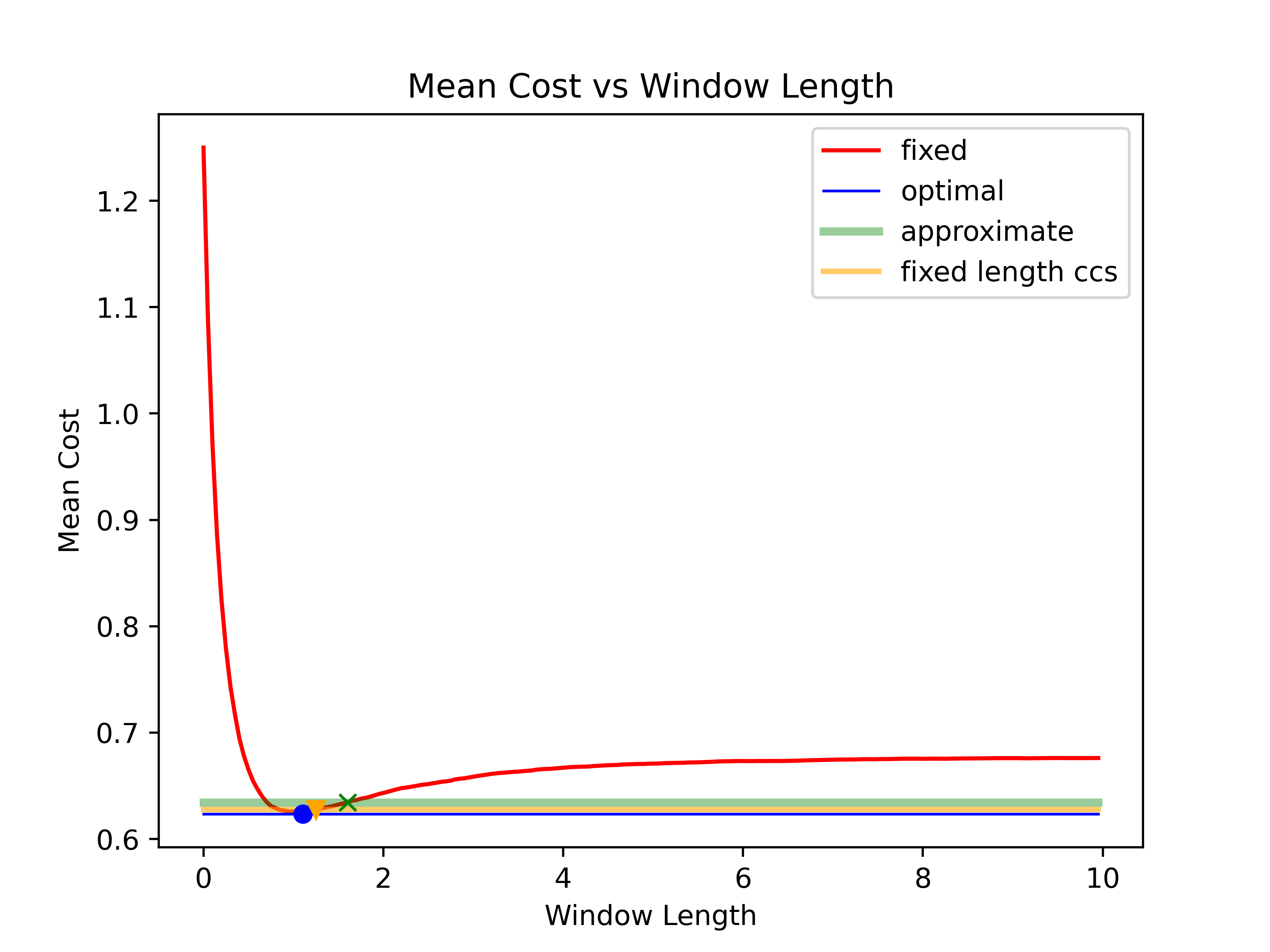}
        \caption{$\alpha = 1.4$}
    \end{subfigure}
    ~
    \begin{subfigure}[t]{0.32\textwidth}
        \centering
        \includegraphics[height=1.5in]{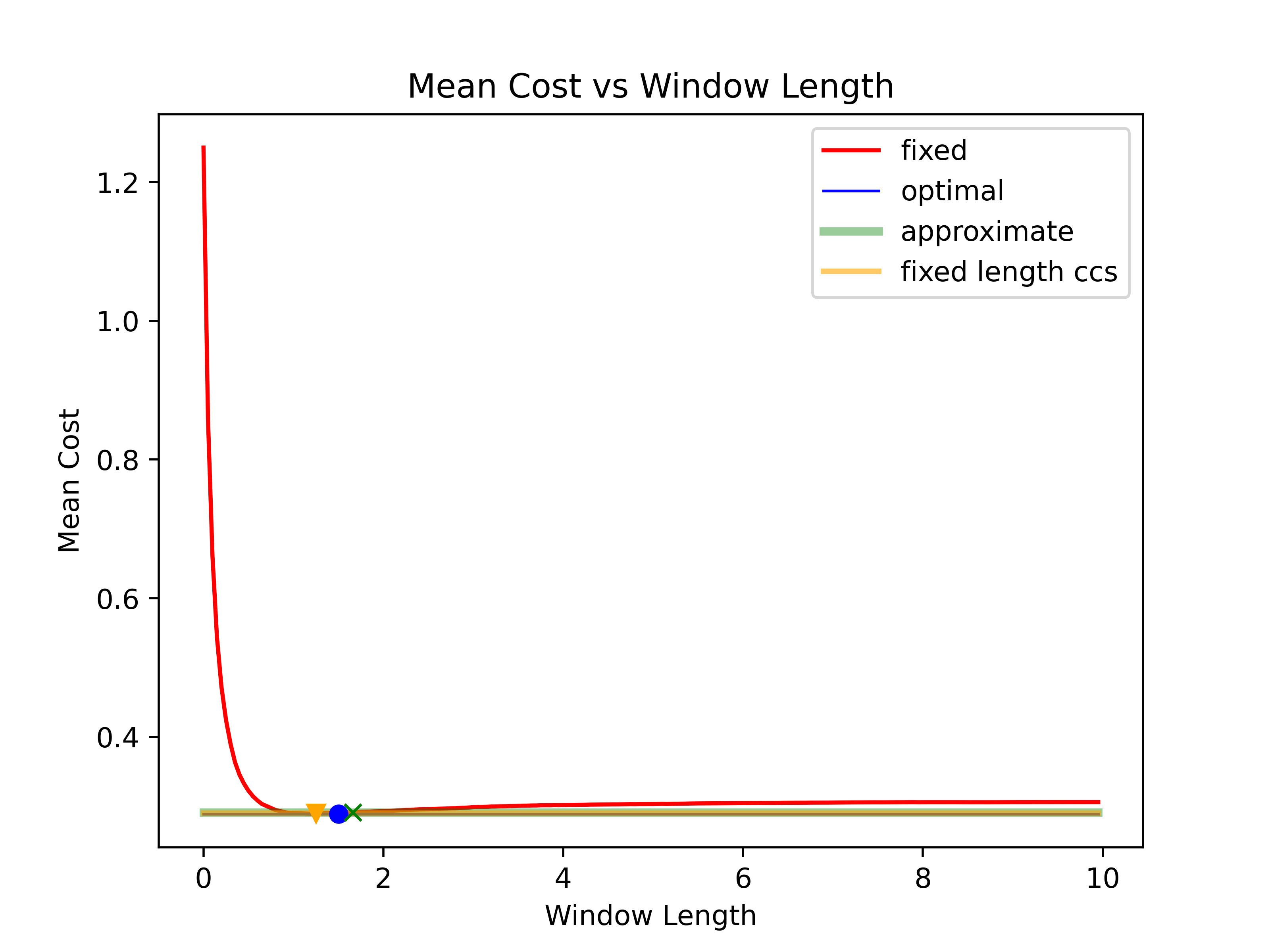}
        \caption{$\alpha = 2.0$}
    \end{subfigure}    
    \caption{Plots of average costs for policies when $\alpha$ is increased, given $\lambda_{0} = 0.6, \beta = 2.4, c_{p} = 1.0, c_{cs} = 1.25$}
    \label{cost_compare_alphab}
\end{figure*}

\begin{figure*}[h!]
    \centering
    \begin{subfigure}[t]{0.32\textwidth}
        \centering
        \includegraphics[height=1.5in]{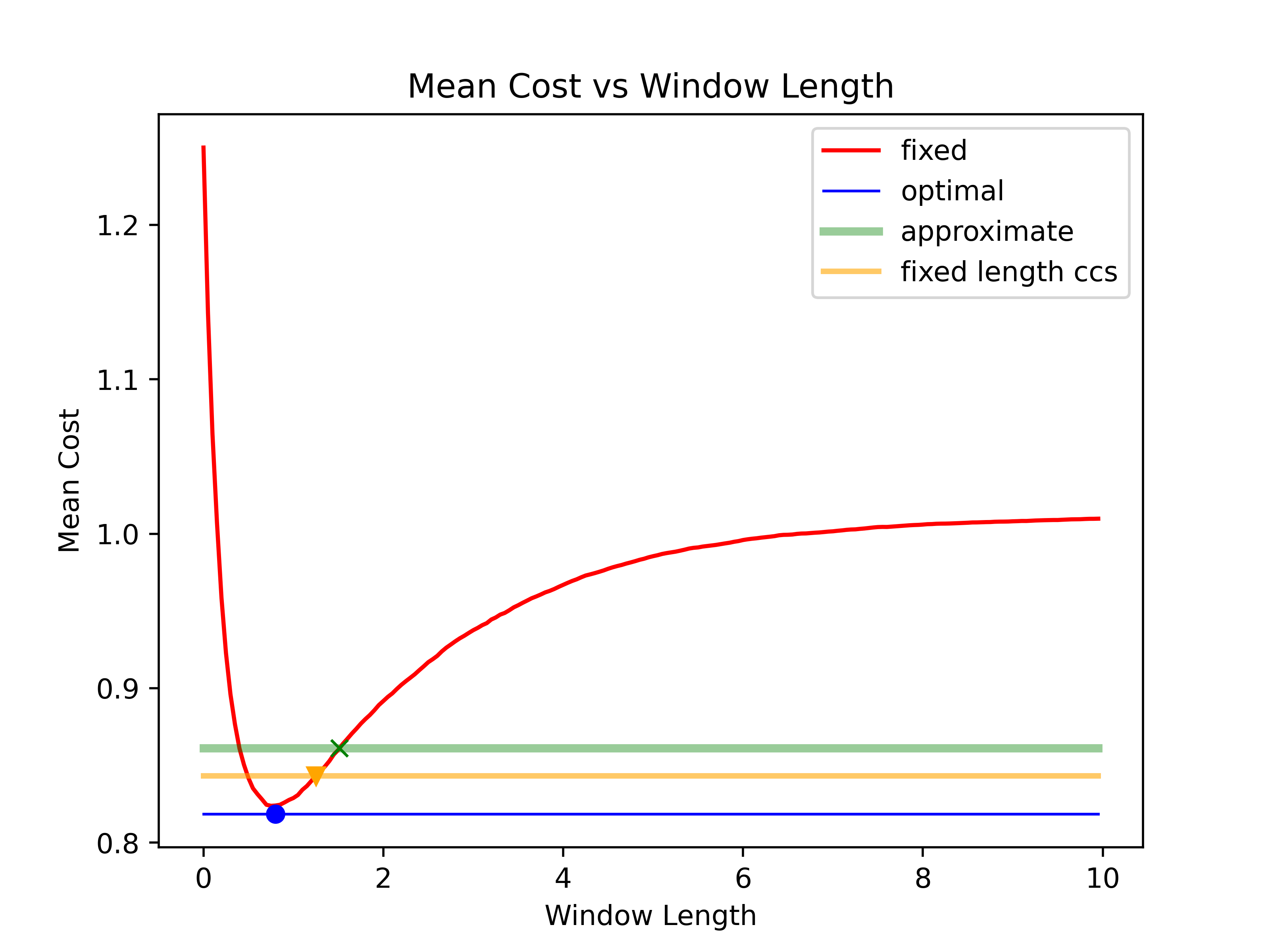}
        \caption{$\lambda_{0} = 0.5$}
    \end{subfigure}
    ~ 
    \begin{subfigure}[t]{0.32\textwidth}
        \centering
        \includegraphics[height=1.5in]{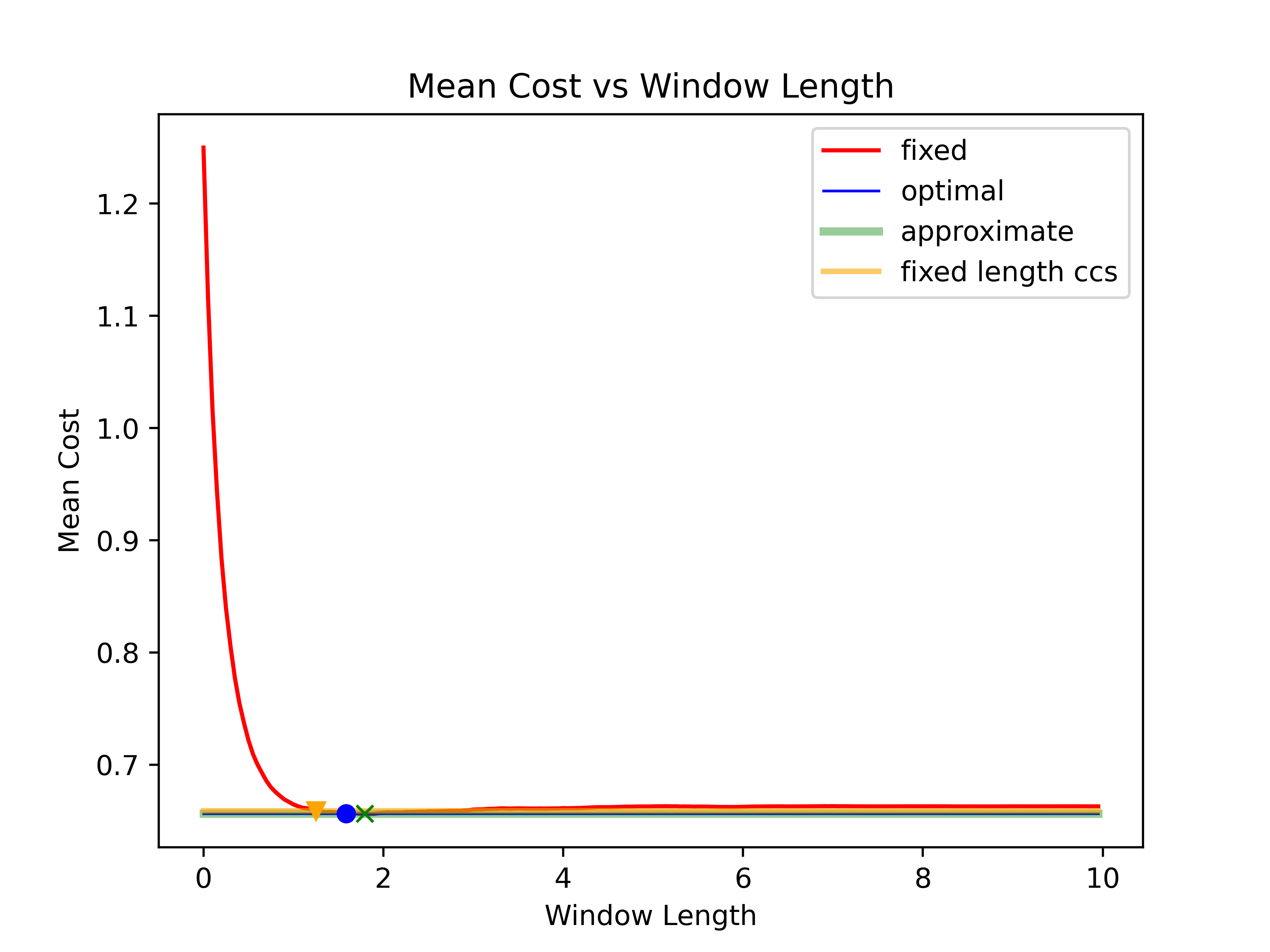}
        \caption{$\lambda_{0} = 0.75$}
    \end{subfigure}
    ~
    \begin{subfigure}[t]{0.32\textwidth}
        \centering
        \includegraphics[height=1.5in]{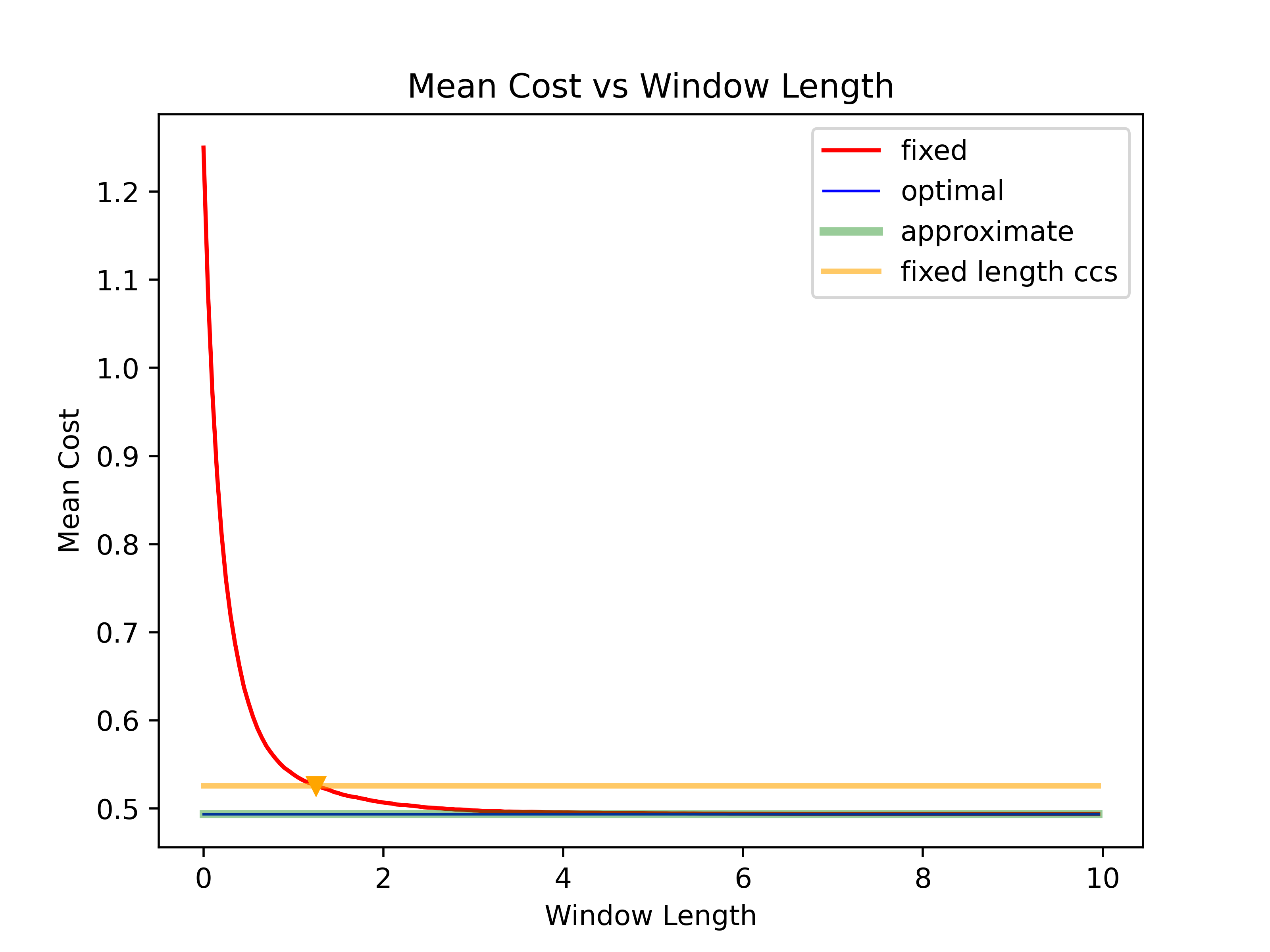}
        \caption{$\lambda_{0} = 1.0$}
    \end{subfigure}    
    \caption{Plots of average costs for policies when $\lambda_{0}$ is increased, given $\alpha = 1.2, \beta = 2.4, c_{p} = 1.0, c_{cs} = 1.25$}
    \label{cost_compare_lambdab}
\end{figure*}

\begin{figure*}[h!]
    \centering
    \begin{subfigure}[t]{0.32\textwidth}
        \centering
        \includegraphics[height=1.5in]{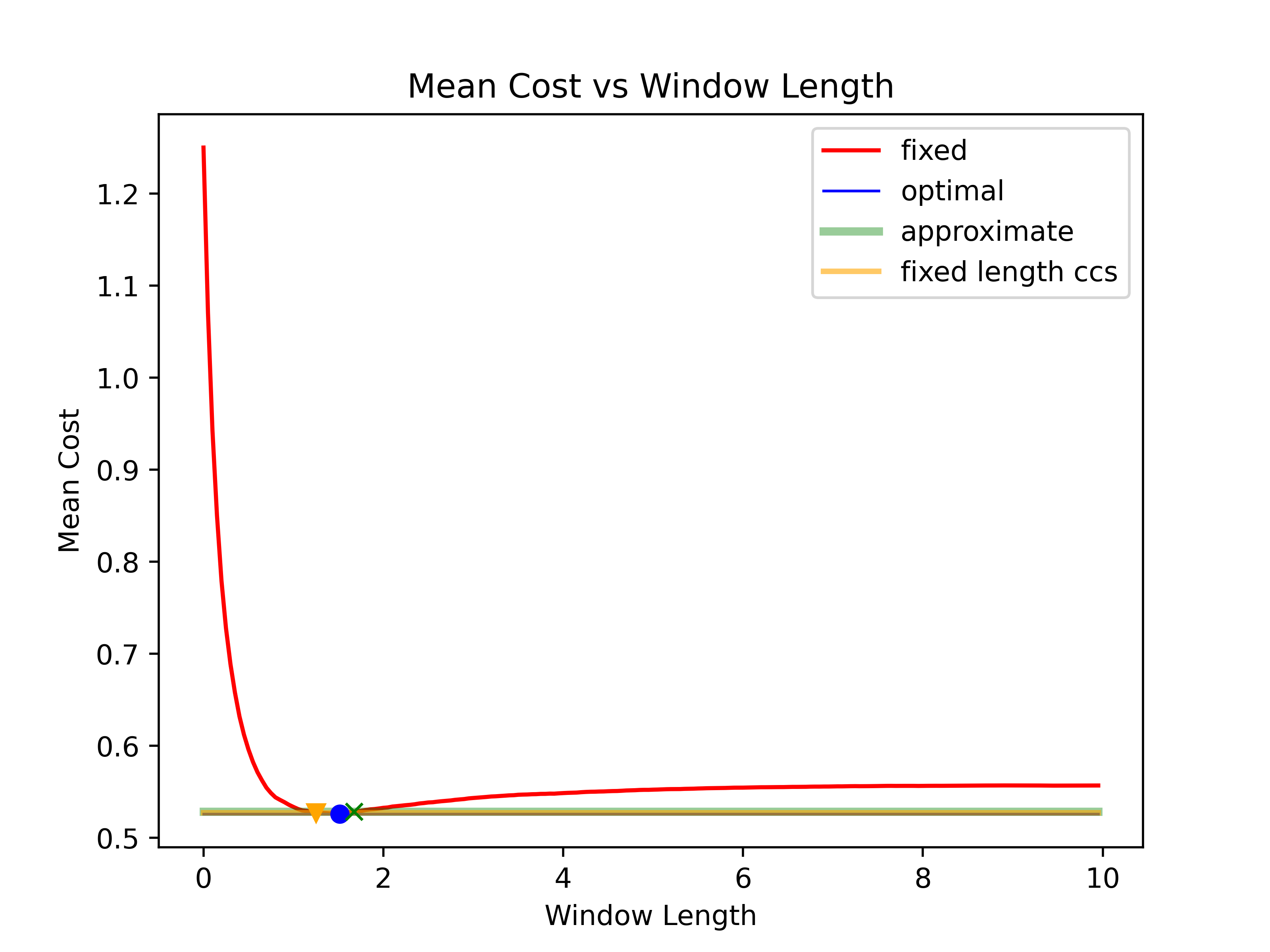}
        \caption{$\beta = 1.8$}
    \end{subfigure}
    ~ 
    \begin{subfigure}[t]{0.32\textwidth}
        \centering
        \includegraphics[height=1.5in]{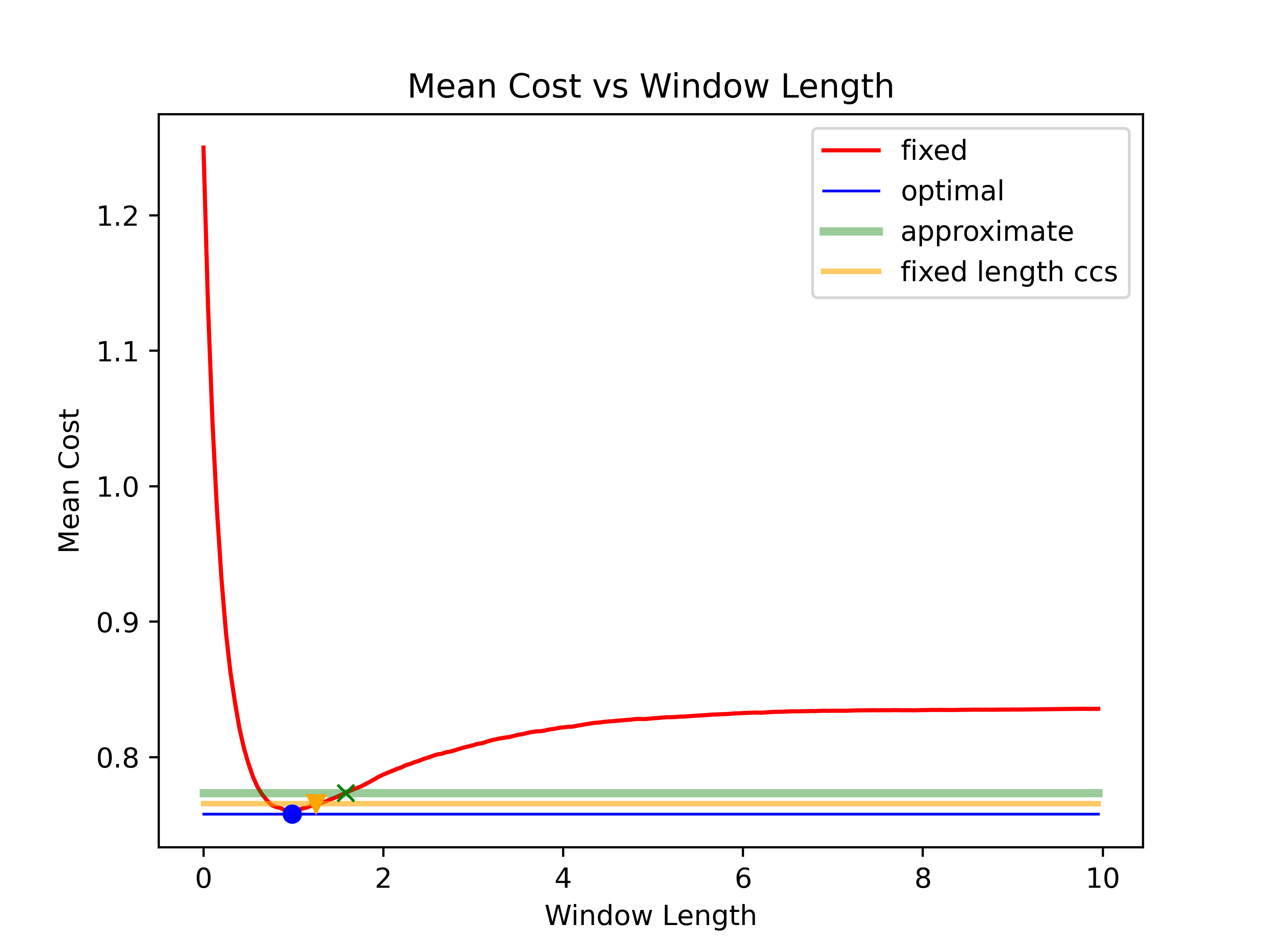}
        \caption{$\beta = 2.4$}
    \end{subfigure}
    ~
    \begin{subfigure}[t]{0.32\textwidth}
        \centering
        \includegraphics[height=1.5in]{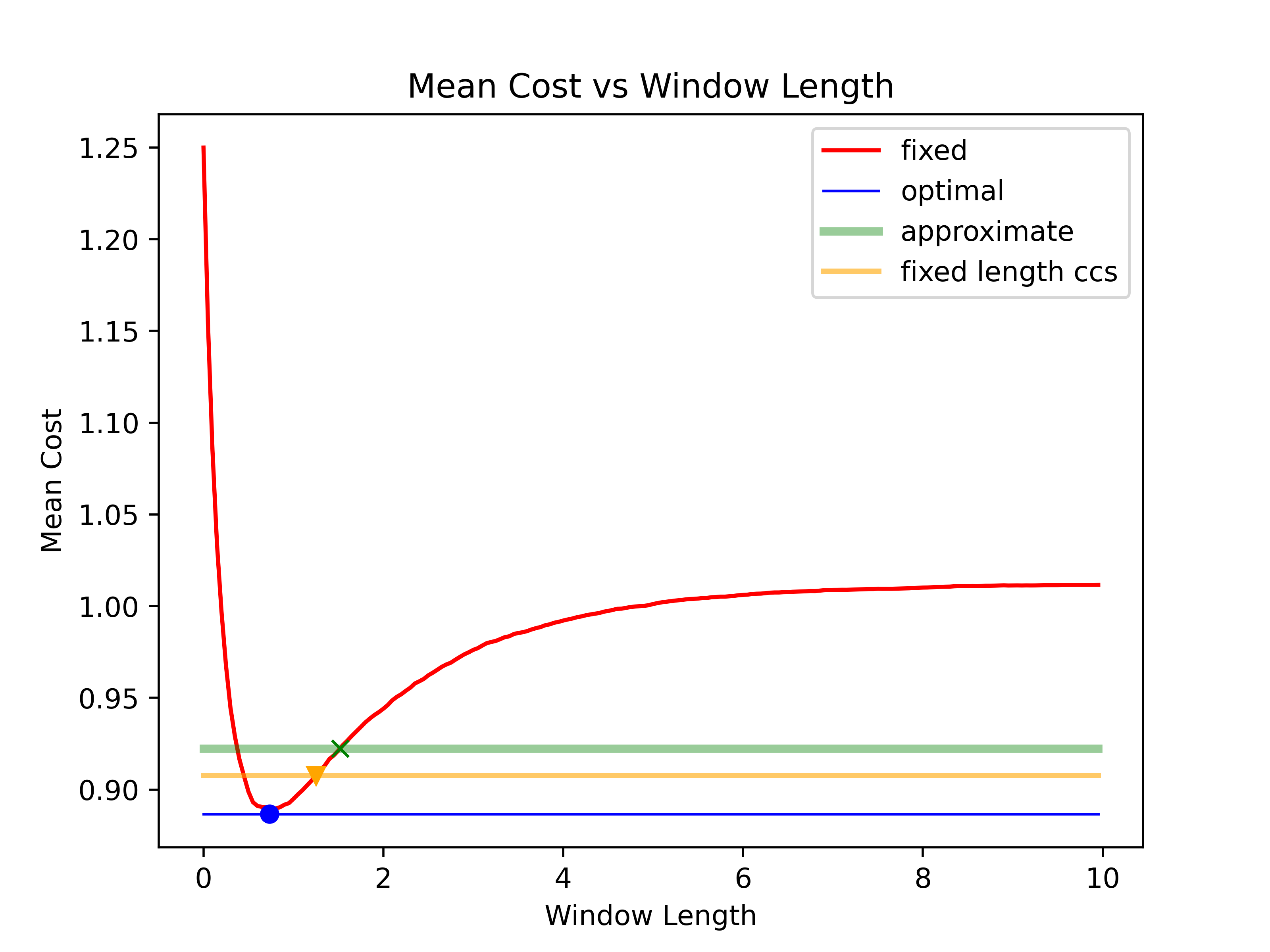}
        \caption{$\beta = 3.0$}
    \end{subfigure}    
    \caption{Plots of average costs for policies when $\beta$ is increased, given $\lambda_{0} = 0.6, \alpha = 1.2, c_{p} = 1.0, c_{cs} = 1.25$}
    \label{cost_compare_betab}
\end{figure*}

\subsection{Azure Datatrace Performance Results} \label{azure_results_appendix}

Figure \ref{azure_fig1} plots the trade-off curve between the average number of cold starts per application vs the normalized wasted memory for optimal, optimized-TTL, approximate and fixed policies. In Figure \ref{azure_fig1} (a), the trade-off curve is plotted when including only those applications that follow the Hawkes process during day 9. 
The trade-off Pareto curve of Figure \ref{azure_fig1} (b) plots the average number of cold starts per application vs the normalized memory for all applications invoked during day 9. The plots in Figure \ref{azure_fig1} show that the trade-off Pareto curve of the approximate policy is very slightly better than the fixed policy, but substantially worse than the optimal policy, and thus Optimized-TTL as well.

\begin{figure*}[h!]
    \centering
    \begin{subfigure}[t]{0.5\textwidth}
        \centering
        \includegraphics[height=2.3in]{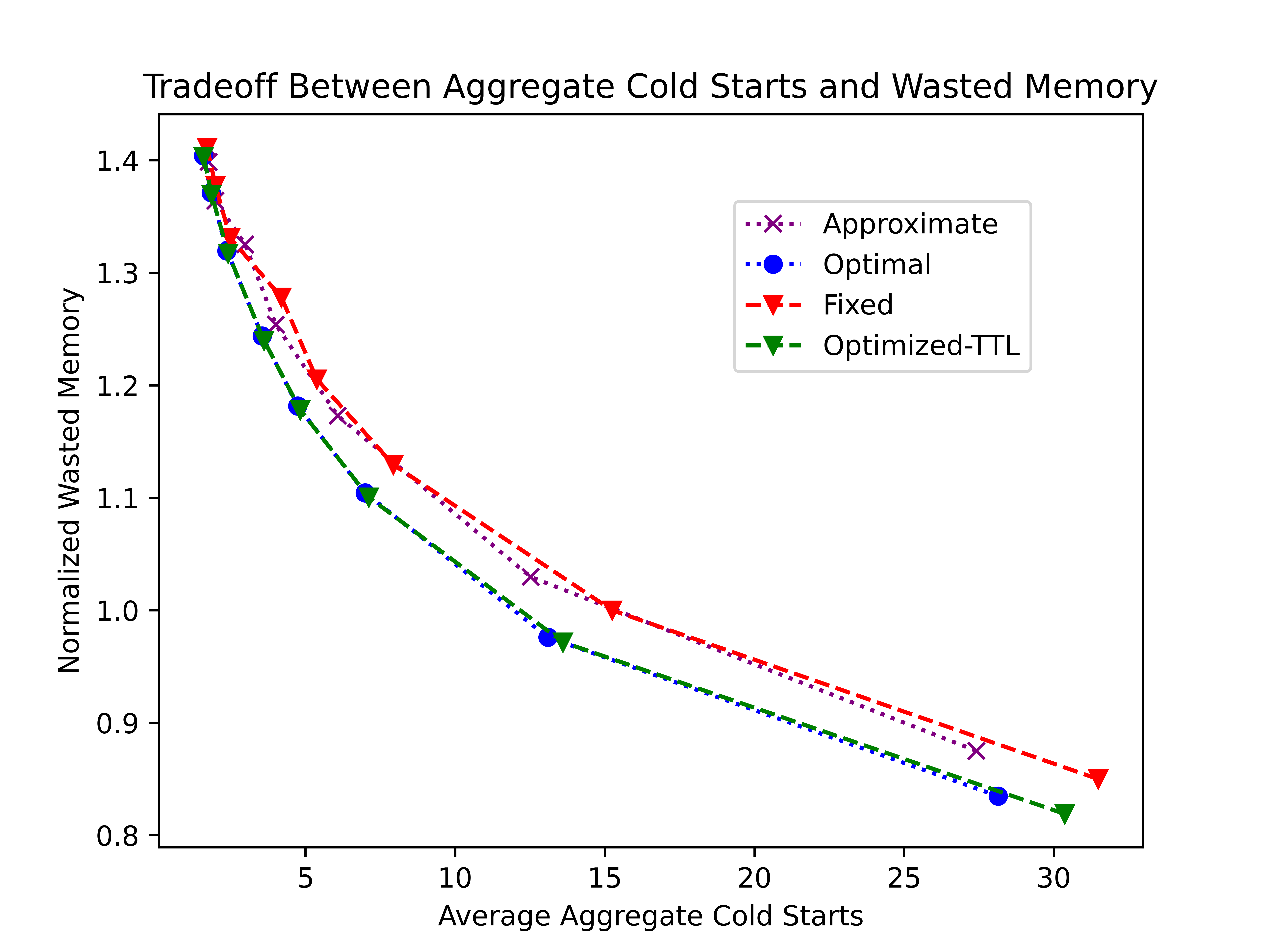}
        \caption{Evaluation only on Hawkes process applications}
    \end{subfigure}%
    ~ 
    \begin{subfigure}[t]{0.5\textwidth}
        \centering
        \includegraphics[height=2.3in]{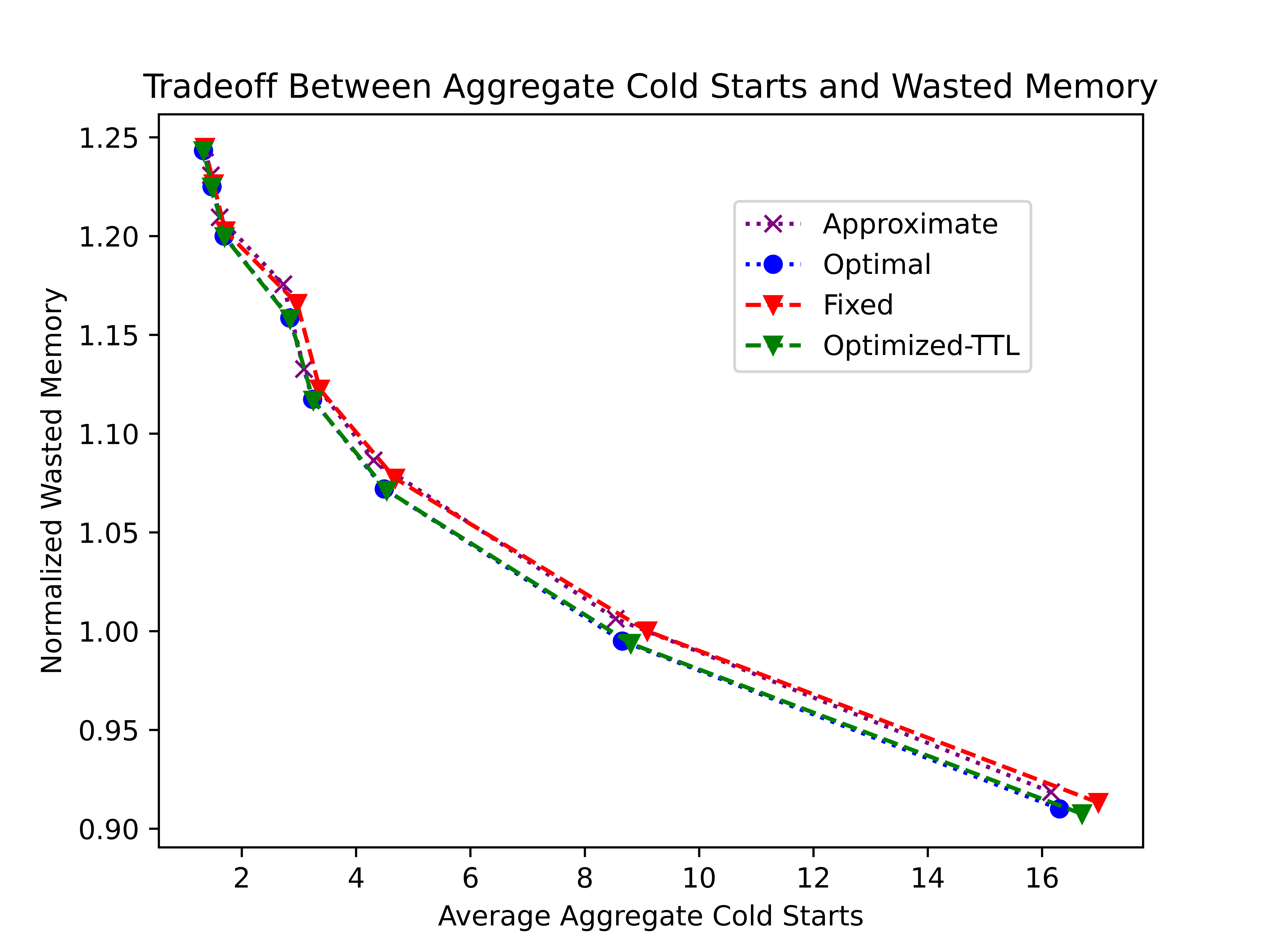}
        \caption{Evaluation on all applications}
    \end{subfigure}
    \caption{Trade-off curve of average number of cold starts vs normalized wasted memory}
    \label{azure_fig1}
\end{figure*}

\section{Use of separate data for goodness of fit}

The goodness of fit test is known to have a few limitations when the same data is used both to estimate the parameters and to compute the KS- statistic. \citet{reynaud2014goodness} show that the Hawkes process parameters when examined  for goodness of fit on the same dataset which was used for parameter estimation leads to a high bias. The authors propose sub-sampling as a reasonable solution to this problem. Rather than sub-sampling we took the advantage of additional data we are not currently using (e.g. day 7). \citet{van2016deep, kash2019combining} show a similar problem and solution for training and applying double Deep Q-learning Networks (DQNs). 

We report the results for the Optimized-TTL policy. We refer to the procedure where the goodness of fit is based on arrivals of application invocations on day 7 as "fix", whereas the procedure where the goodness of fit is based on arrivals of applications invocations on day 8 (same day as parameters estimated) is referred to as "no-fix". To compare the "fix" and "no-fix" procedures of selecting appropriate Hawkes process applications, we collect the common pool of applications invoked on day 7, day 8 and day 9. The Hawkes process applications in "fix" refer to applications where the parameters were estimated on day 8, and the KS test was performed on day 7 of the corresponding applications. The Hawkes process applications in "no-fix" refer to applications where the parameters were estimated on day 8, and the KS test was performed on the same day 8 of the corresponding applications (these are the common pool of applications present on day 7 and day 8). The number of common pool applications on day 7, day 8, and day 9 = 14788. Of these 3,694 applications fall into the 25 percentile apps that were selected as Hawkes process apps for each procedure ("fix", and "no-fix"). The amount of overlap on applications between the two tests, that is, the overlap of applications that passed the test on day 7 and applications that passed the test on day 8 = 2754. Overlap percentage = 2754/3694 = 0.745 . We show the plots of the trade-off curve between the fixed policy and the optimized-TTL policies for the overlapped apps in Figures \ref{azure_fix_test_figures}, and \ref{azure_no_fix_test_figures} for "fix", and "no-fix" procedures. Figures \ref{azure_fix_test_figures} (a), and \ref{azure_no_fix_test_figures} (a) show the trade-off curves for treated apps, whereas Figures \ref{azure_fix_test_figures} (b), and \ref{azure_no_fix_test_figures} (b) show the trade-off curves for all apps. We compute the cold start savings as the area between the optimized-TTL curve and the fixed policy curve divided by the maximum amount of wasted memory. Similarly, the wasted memory savings is the area between the optimized-TTL curve and the fixed policy curve divided by the maximum number of average cold-starts. The corresponding versions of the cold start savings and memory savings for optimized-TTL  policy are given in Table \ref{tab_appendix}. The “fix” version shows a slightly weaker performance on the treated apps, but a noticeably better performance on all apps than either other version.  So we view this as a demonstration that the "fix" does improve the selection of which apps to treat as Hawkes process and proceed to test the goodness of fit based on the arrivals of application invocations on day 7.


\begin{figure*}[h!]
    \centering
    \begin{subfigure}[t]{0.46\textwidth}
        \centering
        \includegraphics[height=2.3in]{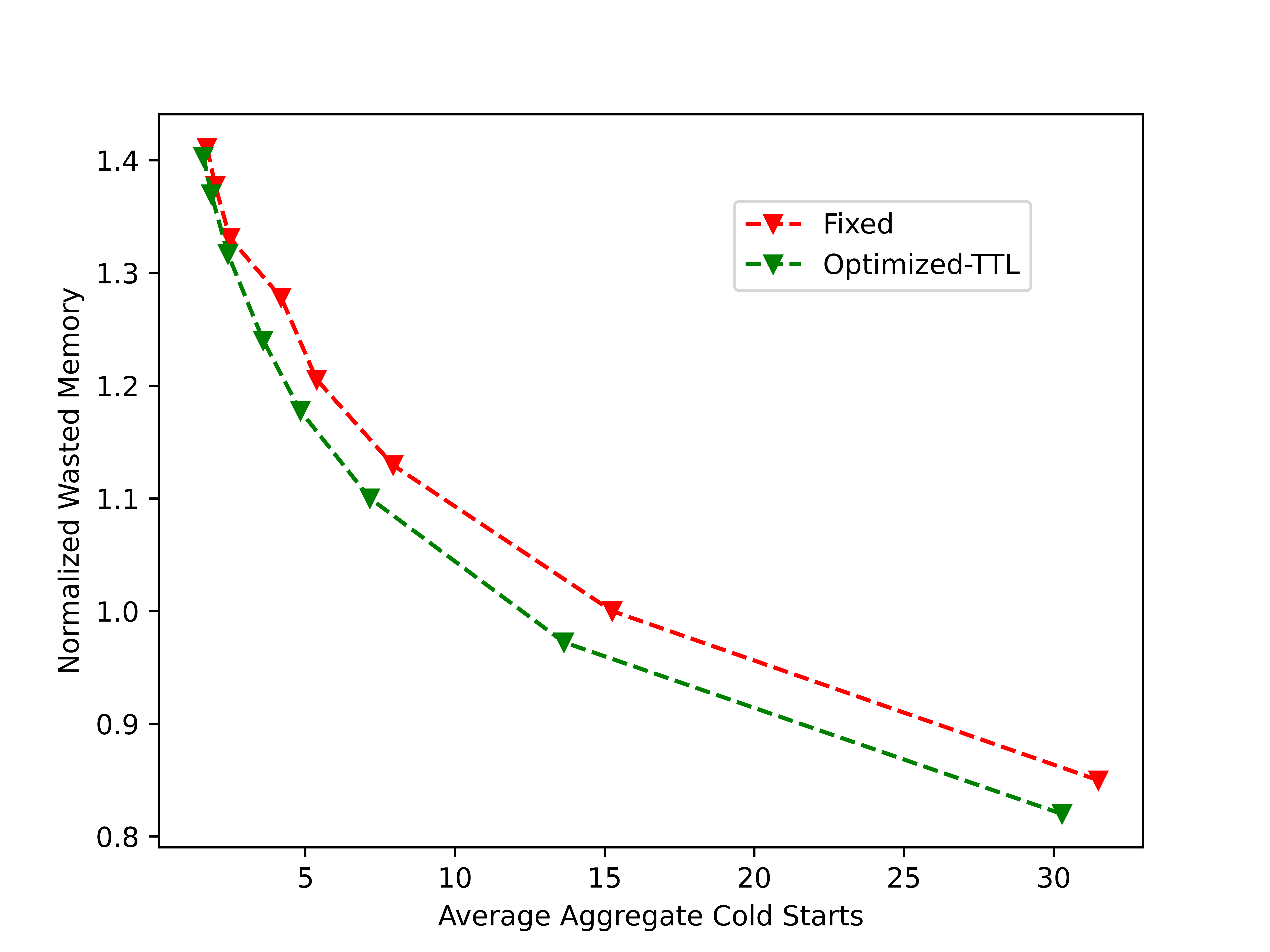}
        \caption{Hawkes process applications}
    \end{subfigure}%
    ~ 
    \begin{subfigure}[t]{0.46\textwidth}
        \centering
        \includegraphics[height=2.3in]{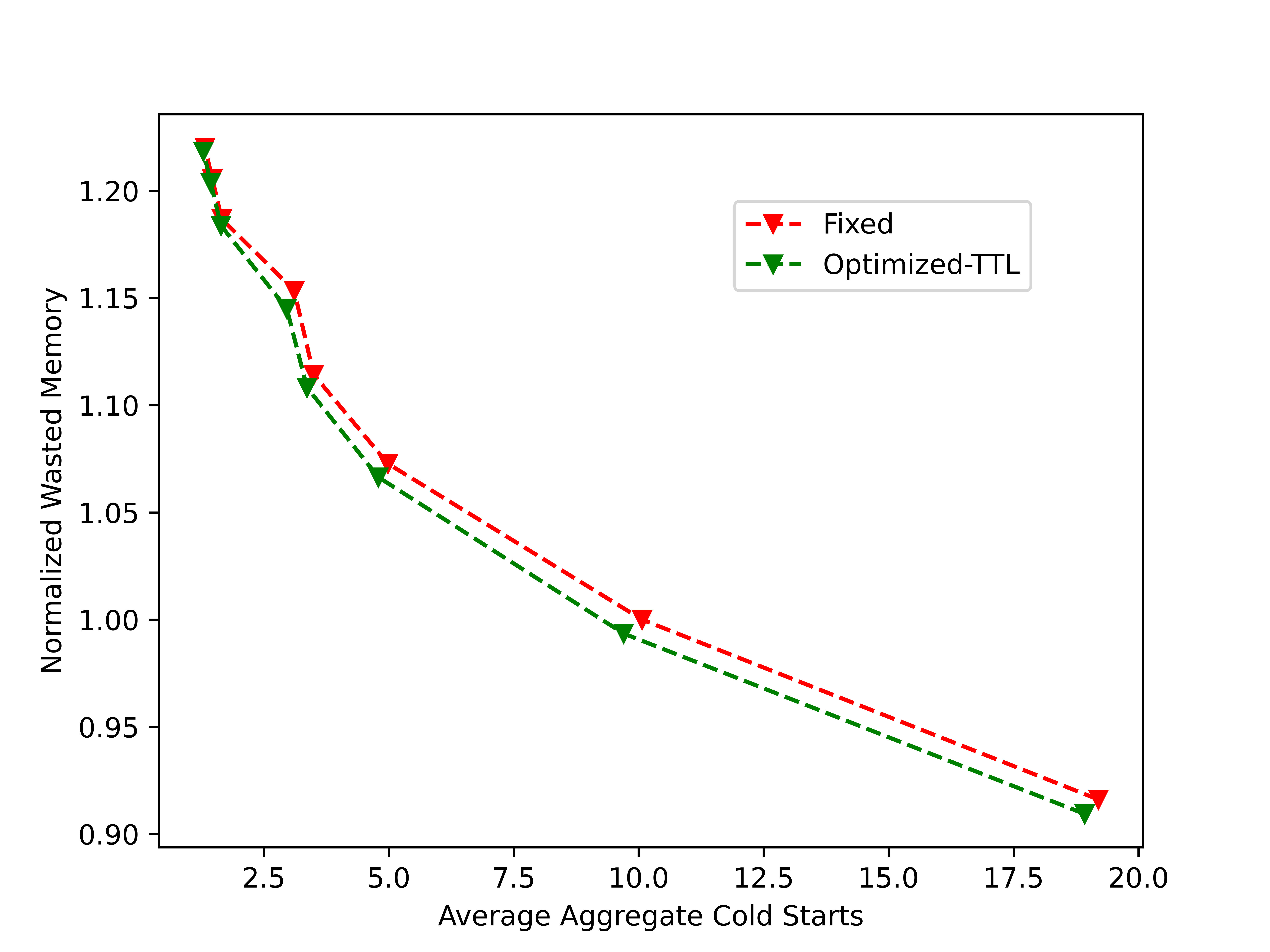}
        \caption{All applications}
    \end{subfigure}
\caption{Trade-off curve for optimized-TTL and fixed policies where goodness of fit is evaluated on day 7}
\label{azure_fix_test_figures}
\end{figure*}

\begin{figure*}[t!]
    \centering    
    \begin{subfigure}[t]{0.46\textwidth}
        \centering
        \includegraphics[height=2.3in]{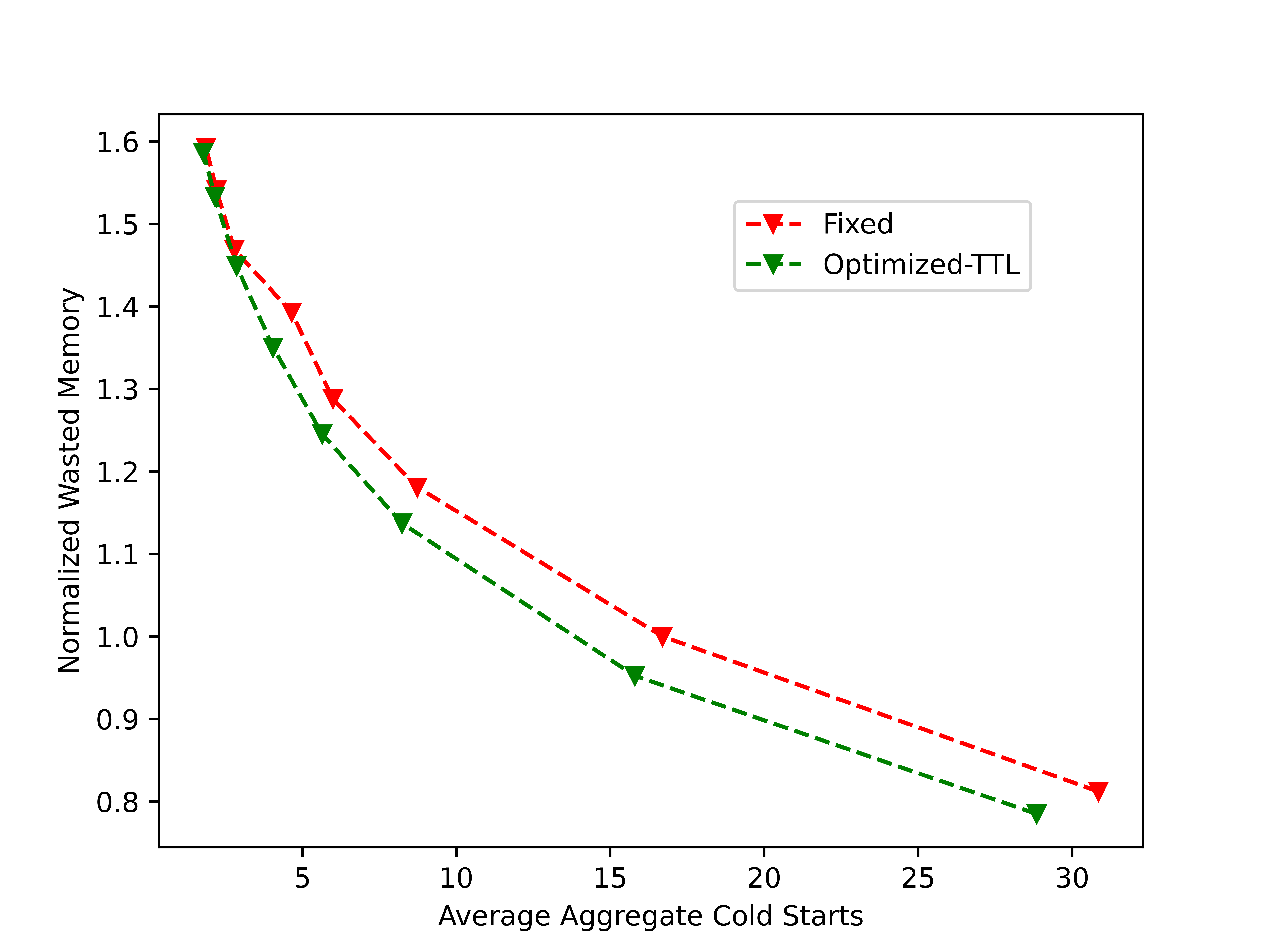}
        \caption{Hawkes process applications}
    \end{subfigure}%
    ~ 
    \begin{subfigure}[t]{0.46\textwidth}
        \centering
        \includegraphics[height=2.3in]{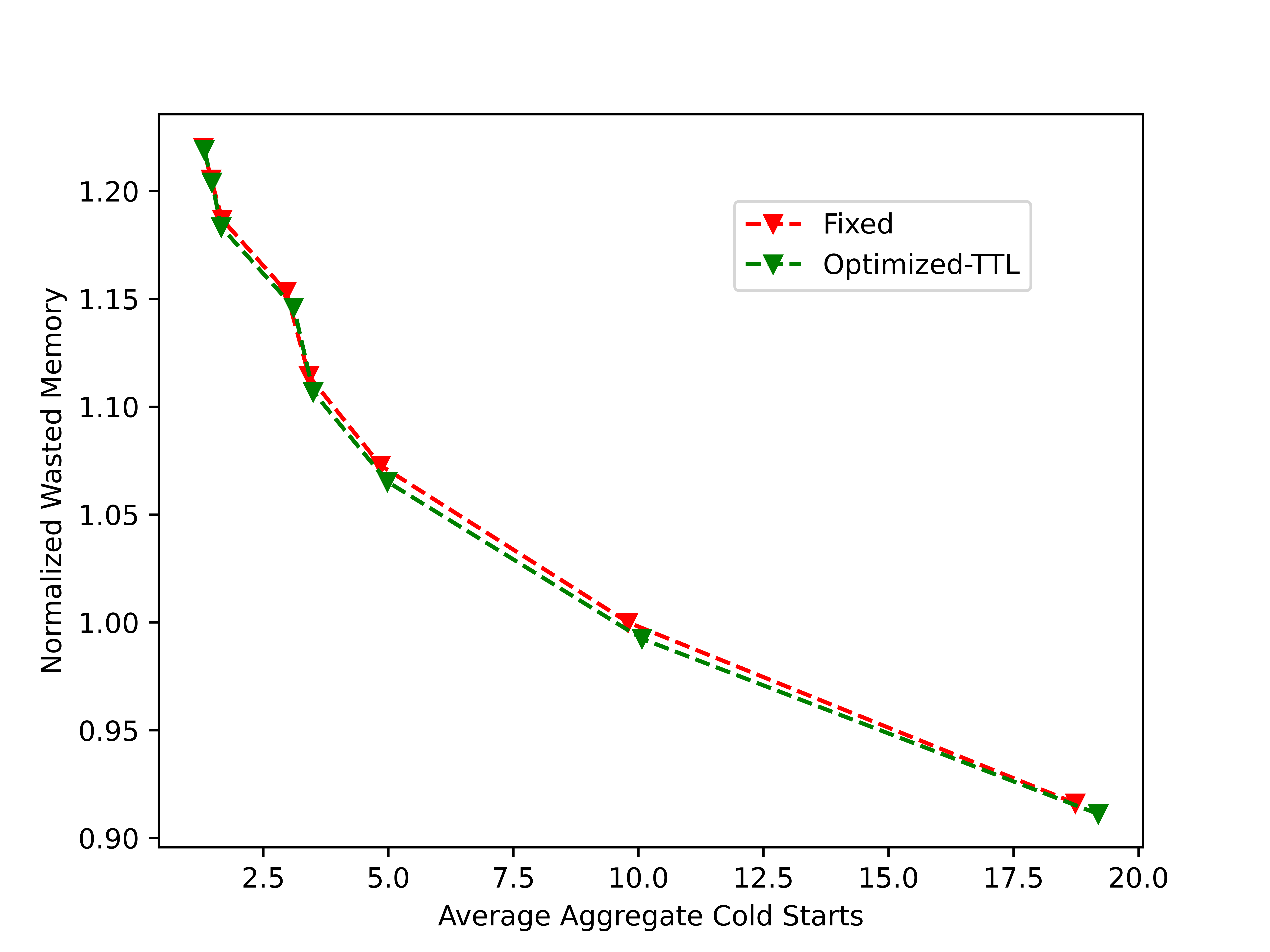}
        \caption{All applications}
    \end{subfigure}
   \caption{Trade-off curve for optimized-TTL and fixed policies where goodness of fit is evaluated on day 8}
    \label{azure_no_fix_test_figures}
\end{figure*}

\begin{table*}[t]
\centering
\begin{tabular}{|c|c|c|c|c|}
\hline
Procedure & Avg. Cold Start Savings (Hawkes) & (All) & Avg. Memory Savings (Hawkes) & (All)\\
\hline
Optimized-TTL (fix) & 0.834 &  0.1393 &  0.043  & 0.0085 \\
\hline
Optimized-TTL (no-fix) & 1.037  & 0.0574 & 0.053 & 0.0035 \\
\hline
\end{tabular}
\caption{\label{tab_appendix} Average performance improvement over fixed policy}
\end{table*}

\end{document}